\documentclass[11pt, onecolumn]{article}
\usepackage[top=1in, bottom=1in, left=1.25in, right=1.25in]{geometry}

\usepackage{amsfonts}
\usepackage{amsmath,amssymb}
\usepackage{graphicx}
\usepackage{color, soul}
\usepackage{algorithm,algorithmic}
\usepackage{bm}
\usepackage{booktabs}
\usepackage{flushend}
\usepackage{tikz}
\usetikzlibrary{arrows}
\usepackage{subcaption}

\usepackage[amsmath,thmmarks]{ntheorem}
\usepackage{theorem}

\newtheorem{cor}{Corollary}
\newtheorem{lem}{Lemma}
\newtheorem{defi}{Definition}
\newtheorem{rem}{Remark}
\newtheorem{thm}{Theorem}

\theoremheaderfont{\sc}\theorembodyfont{\upshape}
\theoremstyle{nonumberplain}
\theoremseparator{}
\theoremsymbol{\rule{1ex}{1ex}}
\newtheorem{proof}{Proof}

\renewcommand{\arraystretch}{1.5}

\newcommand{\mythbox}[3]{\parbox[c]{#1}{\vspace{#2}\centering #3 \vspace{#2}}}

\newcommand{\aff}{{\rm aff}}
\newcommand{\oaff}{\overline{{\rm aff}}}
\newcommand{\set}{\mathcal}
\newcommand{\od}{\overline{{D}}}
\newcommand{\affX}{\aff_{\set X}}
\newcommand{\affXS}{\aff_{\set X}^2}
\newcommand{\affY}{\aff_{\set Y}}
\newcommand{\affYS}{\aff_{\set Y}^2}
\newcommand{\oaffYS}{\oaff_{\set Y}^2}

\newcommand{\dX}{D_{\set X}}
\newcommand{\dXS}{D_{\set X}^2}
\newcommand{\dY}{D_{\set Y}}
\newcommand{\dYS}{D_{\set Y}^2}
\newcommand{\odYS}{\od_{\set Y}^2}

\newcommand{\va}{{\bf a}}

\newcommand{\Proj}{{\rm P}}
\newcommand{\Span}{{\set C}}


\begin{document}

\title{Rigorous Restricted Isometry Property of Low-Dimensional Subspaces}

\author{Gen Li, Qinghua Liu, and Yuantao Gu%
\thanks{The authors are with the Department of Electronic Engineering, Tsinghua University, Beijing 100084, China. 
The corresponding author of this paper is Yuantao Gu (gyt@tsinghua.edu.cn).}}

\date{submitted January 30, 2018}

\maketitle

%
%

\begin{abstract}
	Dimensionality reduction is in demand to reduce the complexity of solving large-scale problems 
	with data lying in latent low-dimensional structures
	in machine learning and computer version.
	Motivated by such need,
	in this work we study the Restricted Isometry Property (RIP) of Gaussian random projections for 
	low-dimensional subspaces in $\mathbb{R}^N$, and rigorously prove that
	the projection Frobenius norm distance between any two subspaces spanned by the projected data in $\mathbb{R}^n$
	($n<N$)
	remain almost the same as the distance between the original subspaces
	with probability no less than $1 - {\rm e}^{-\mathcal{O}(n)}$.
	Previously the well-known Johnson-Lindenstrauss (JL) Lemma and RIP for sparse vectors 
	have been the foundation of sparse signal processing including Compressed Sensing.
	As an analogy to JL Lemma and RIP for sparse vectors, this work allows the use of random projections 
	to reduce the ambient dimension with the theoretical guarantee that the distance between subspaces after compression
	is well preserved.
	
{\bf Keywords} Restricted Isometry Property, Gaussian random matrix, random projection, low-dimensional subspaces, dimensionality reduction, subspace clustering
\end{abstract}

\section{Introduction}

This paper studies the Restricted Isometry Property (RIP) of random projections for subspaces.
It reveals that the distance between two low-dimensional subspaces remain almost unchanged
after being projected by a Gaussian random matrix with overwhelming probability, 
when the ambient dimension after projection is sufficiently large in comparison with the dimension of subspaces.

\subsection{Motivation}

In the era of data deluge, labeling huge amount of large-scale data can be time-consuming, costly, and even intractable,
so unsupervised learning has attracted increasing attention in recent years. 
One of such methods emerging recently, subspace clustering (SC) \cite{Elhamifar2009Sparse, soltanolkotabi2012geometric, elhamifar2013sparse, Heckel2015Robust}, which depicts the latent structure of a variety of data as a union of subspaces, 
has been shown to be powerful in a wide range of applications, including motion segmentation, face clustering, and anomaly detection. 
It also shows great potential to some previously less explored datasets, such as network data, gene series, and medical images.

Traditional subspace clustering methods, however, suffer from the deficiency in similarity representation,
so it can be computationally expensive to adapt them to large-scale datasets.
In order to alleviate the high computational burden, 
a variety of works have been done to address the crucial problem of
how to efficiently handle large-scale datasets.
Compressed Subspace Clustering (CSC) \cite{Mao2014Compressed} 
also known as Dimensionality-reduced Subspace Clustering \cite{Heckel2014Subspace}
is a method that performs SC on randomly compressed data points.
Because the random compression reduces the dimension of the ambient space, 
the computational cost of finding the self-representation in SC can be efficiently reduced. 
Based on the concept of subspace affinity, which characterizes the similarity between two subspaces, and the mathematical tools introduced in \cite{soltanolkotabi2012geometric}, the conditions under which several popular algorithms can successfully 
cluster the compressed data have been theoretically studied and numerically verified \cite{heckel2015dimensionality,Wang2016}.

Because the data points are randomly projected from a high-dimensional ambient space $\mathbb{R}^N$
to a new medium-dimensional ambient space $\mathbb{R}^n$, 
a worry is that the similarity between any two low-dimensional subspaces increases
and the SC algorithms are less likely to perform well.
Inspired by the well-known Johnson-Lindenstrauss (JL) Lemma \cite{Johnson1984Extensions, dasgupta1999elementary} and the Restricted Isometry Property (RIP) \cite{Candes2005Decoding, Cand2008The, Baraniuk2015A}, which allows the use of random projection to reduce the space dimension while keeping the Euclidean distance between any two data points 
and leads to the boom of sparse signal processing including 
Compressed Sensing (CS) \cite{Donoho2006Compressed, Candes2006Robust, Aeron2010Information, candes2007sparsity, eldar2012compressed, Eftekhari2015New, Kutyniok2009Robust},
one may speculate whether the similarity (or distance) between any two given subspaces can remain almost unchanged,
if the dimension of the latent subspace that the data lie in is small compared with that of the ambient space after projection $n$.
It should be highlighted that this conjecture is not confined to the SC problem,
so we believe that it may benefit future studies on other subspace related topics.

Motivated by the conjecture about whether the similarity between any two given subspaces can remain almost unchanged after random projection, we study the RIP of Gaussian random projections for a finite set of subspaces.
In order to give more solid guarantees and more precise insight into the law of magnitude of the dimensions
for CSC and other subspace related problems,
we derive an optimum probability bound of the RIP of Gaussian random compressions for subspaces in this paper.
Compared with our previous work \cite{li2017restricted}, the probability bound has been improve from
$1 - \mathcal{O}(1/n)$ to $1 - {\rm e}^{-\mathcal{O}(n)}$, 
which is optimum when we consider the state-of-the-art statistical probability theories for Gaussian random matrix.

\subsection{Main Results}

The projection Frobenius norm (F-norm for short) distance is adopted in this work to measure the distance between two subspaces.
It should be noted that we slightly generalize the definition in \cite{SeveralDistances1998} 
to the situation where the dimensions of the two subspaces are different.

\begin{defi}[\cite{li2017restricted}]{\bf (Projection Frobenius norm distance between subspaces)}
The generalized projection F-norm distance between two subspaces ${\set X}_1$ and ${\set X}_2$ is defined as
$$
D({\set X}_1,{\set X}_2) :=\frac1{\sqrt 2} \|{\bf U}_1{\bf U}_1^{\rm T} - {\bf U}_2{\bf U}_2^{\rm T}\|_{\rm F},
$$
where ${\bf U}_i$ denotes an arbitrary orthonormal basis matrix for subspace ${\set X}_i, i=1,2$.
\end{defi}

We will focus on the change of the distance between any 
 two low-dimensional subspaces after being randomly projected
 from $\mathbb{R}^N$ to $\mathbb{R}^n$ ($n<N$).
The projection of a low-dimensional subspace by using a Gaussian random matrix is defined as below.

\begin{defi}\label{defi-project-subspace}{\bf (Gaussian random projection for subspace)}
The Gaussian random projection of a $d$-dimensional subspace $\set{X}\subset \mathbb{R}^N$ onto $\mathbb{R}^n$ ($d < n<N$) is defined as below,
$$
	\set{X} \stackrel{\bm \Phi}{\longrightarrow} \set{Y} = \{{\bf y} | {\bf y}={\bm \Phi}{\bf x}, \forall {\bf x}\in \set{X}\},
$$
where the projection matrix ${\bm \Phi}\in\mathbb{R}^{n\times N}$ is composed of entries independently drawn from Gaussian distribution $\mathcal{N}(0,1/n)$.
\end{defi}

One may notice that the dimensions of subspaces remain unchanged after random projection with probability one.

Based on the definitions above, the main theoretical result of this work is stated as follows.

\begin{thm}\label{thm-rip}
Suppose $\set X_1, \ldots, \set X_L\subset \mathbb{R}^N$ are $L$ subspaces with dimension less than $d$.
After random projection by using a Gaussian random matrix $\bm\Phi\in\mathbb{R}^{n\times N}$, $\set{X}_i \stackrel{\bm \Phi}{\longrightarrow} \set{Y}_i\subset\mathbb{R}^n, i=1,\cdots,L, n<N$.
There exist constants $c_1(\varepsilon),c_2(\varepsilon) > 0$ depending only on $\varepsilon$ such that for any two subspaces $\set X_i$ and $\set X_j$, for any $n > c_1(\varepsilon)\max\{d, \ln L\}$,
\begin{equation}\label{thm-sub-aff-bound}
    \left(1 - \varepsilon\right)D^2({\set X}_i,{\set X}_j) < D^2({\set Y}_i,{\set Y}_j) < \left(1 + \varepsilon \right)D^2({\set X}_i,{\set X}_j)
\end{equation}
holds with probability at least $1 - {\rm e}^{-c_2(\varepsilon)n}$.
\end{thm}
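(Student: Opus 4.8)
**The plan is to reduce the theorem to a statement about the action of a Gaussian random matrix on a fixed low-dimensional subspace, and then apply a concentration bound uniformly over a suitable finite-dimensional object.**

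My starting point is the observation that, for each pair $(i,j)$, the quantity $D^2(\set X_i,\set X_j)$ depends only on the principal angles between $\set X_i$ and $\set X_j$, and hence only on the subspace $\set X_i + \set X_j$, which has dimension at most $2d$. So I would first fix one pair, write $\set V = \set X_i + \set X_j$ with $\dim \set V = m \le 2d$, and note that if we let $\bm\Phi_{\set V}$ denote the restriction of $\bm\Phi$ to an orthonormal basis of $\set V$, then $\bm\Phi_{\set V}$ is an $n\times m$ matrix with i.i.d.\ $\mathcal N(0,1/n)$ entries, and everything about $\set Y_i,\set Y_j$ is determined by $\bm\Phi_{\set V}$. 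The key geometric fact I would establish (or quote, since it underlies Definition~1) is a closed-form expression: if $\{\cos\theta_k\}$ are the cosines of the principal angles, then $D^2(\set X_i,\set X_j) = \sum_k \sin^2\theta_k$ up to a normalizing constant, and after projection the new principal angle cosines are the singular values of $\bm U_i^{\rm T}\bm\Phi^{\rm T}\bm\Phi\bm U_j$ suitably normalized by the Gram matrices $\bm U_i^{\rm T}\bm\Phi^{\rm T}\bm\Phi\bm U_i$ and $\bm U_j^{\rm T}\bm\Phi^{\rm T}\bm\Phi\bm U_j$. Thus the whole problem is controlled once I show that $\bm\Phi^{\rm T}\bm\Phi$ acts as a near-isometry on $\set V$, i.e.\ the $m\times m$ matrix $\bm U_{\set V}^{\rm T}\bm\Phi^{\rm T}\bm\Phi\bm U_{\set V}$ is within $\varepsilon'$ of the identity in operator norm.

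The second step is therefore a standard random-matrix estimate: for an $n\times m$ matrix $\bm G$ with i.i.d.\ $\mathcal N(0,1/n)$ entries, $\|\bm G^{\rm T}\bm G - \bm I_m\|_{\rm op} \le \varepsilon'$ with probability at least $1 - 2\,{\rm e}^{-c n \varepsilon'^2}$ provided $n \gtrsim m/\varepsilon'^2$; this follows from the sub-Gaussian singular-value deviation bound (an $\epsilon$-net over the unit sphere in $\mathbb R^m$, a Bernstein/Hanson–Wright tail for each fixed direction, and a union bound costing a factor $9^m$, absorbed by the $-cn$ in the exponent when $n \gtrsim m$). Since $m\le 2d$, this gives the condition $n > c_1'(\varepsilon) d$ for a single pair. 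Then I would do a union bound over all $\binom L 2 < L^2$ pairs: the failure probability is at most $L^2 \cdot 2\,{\rm e}^{-cn\varepsilon'^2} = 2\,{\rm e}^{-cn\varepsilon'^2 + 2\ln L}$, which is ${\rm e}^{-c_2(\varepsilon) n}$ as soon as $n > c_1''(\varepsilon)\ln L$. Taking $c_1(\varepsilon) = \max\{c_1',c_1''\}$ handles the hypothesis $n > c_1(\varepsilon)\max\{d,\ln L\}$.

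**The main obstacle I expect is the third step: converting "$\bm\Phi^{\rm T}\bm\Phi$ is a near-isometry on $\set V$" into the two-sided bound \eqref{thm-sub-aff-bound} on the squared distance, with the error $\varepsilon$ controlled linearly by $\varepsilon'$.** The distance $D^2$ is a nonlinear function of $\bm\Phi_{\set V}$ (it involves inverses of Gram matrices and singular values), so I need a perturbation/Lipschitz argument: writing $\bm U_i^{\rm T}\bm\Phi^{\rm T}\bm\Phi\bm U_i = \bm I + \bm E_{ii}$ etc.\ with all $\|\bm E\|_{\rm op}\le \varepsilon'$, I would expand the projected projector $\bm U_i(\bm U_i^{\rm T}\bm\Phi^{\rm T}\bm\Phi\bm U_i)^{-1}\bm U_i^{\rm T}$ (pulled back to $\set V$) to first order in $\bm E$, bound the F-norm of the difference of the two projected projectors against the F-norm of the difference of the original ones, and track constants. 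The delicate point is uniformity: the implied Lipschitz constant must not blow up, which is why I first restrict to the $2d$-dimensional space $\set V$ where $\bm\Phi$ is well-conditioned, rather than working in $\mathbb R^N$. Once that perturbation bound is in hand with a clean constant, choosing $\varepsilon' = c\,\varepsilon$ for an appropriate absolute $c$ closes the argument; the constants $c_1(\varepsilon),c_2(\varepsilon)$ then come out as explicit functions of $\varepsilon$ (polynomial in $1/\varepsilon$ and $\varepsilon^2$ respectively), matching the claimed ${\rm e}^{-\mathcal O(n)}$ rate.
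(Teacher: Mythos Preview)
Your approach is correct but genuinely different from the paper's. You reduce everything to a \emph{deterministic} perturbation statement --- ``a linear map whose Gram matrix is $\varepsilon'$-close to the identity changes $D^2$ between two subspaces by at most a multiplicative $(1\pm C\varepsilon')$'' --- and then feed in the standard singular-value concentration for the $n\times 2d$ Gaussian block $\bm\Phi|_{\set V}$, followed by a union bound over the $\binom{L}{2}$ pairs. The paper never isolates such a perturbation lemma. Instead it works entirely with affinity: it first proves a line-versus-subspace concentration (Lemma~\ref{lem-line-sub}), identifying the explicit bias $\oaffYS=\lambda^2+\tfrac{d}{n}(1-\lambda^2)$, then sums over the $d_1$ principal directions via carefully chosen \emph{principal} bases and a Gram--Schmidt analysis (Theorem~\ref{thm-sub}, Lemma~\ref{vTV2-aTV2}, Lemmas~\ref{lem-property8}--\ref{lem-property7}) to obtain the two-subspace estimate $\oaffYS=\affXS+\tfrac{d_2}{n}(d_1-\affXS)$, converts to distance (Corollary~\ref{thm-sub-dis}), and only then absorbs the $d_2/n$ bias by taking $n>2d/\varepsilon$.

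What each buys: your route is shorter and more conceptual, and the multiplicative control on $D^2$ falls out directly once you observe (as you do) that the first-order change in $\bm P_i'-\bm P_j'$ factors through $\bm P_i-\bm P_j$; this is the step you flagged as delicate, and it is, but it goes through with an absolute Lipschitz constant (the higher-order terms also vanish when $\set X_i=\set X_j$, so they too are controlled by $\|\bm P_i-\bm P_j\|_{\rm F}$). The paper's route is longer but yields finer information along the way --- the precise expected shrinkage $\odYS=\dXS-\tfrac{d_2}{n}(\dXS-\tfrac{d_2-d_1}{2})$ in Corollary~\ref{thm-sub-dis} --- which your near-isometry argument does not see. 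A minor quantitative difference: your argument naturally gives $c_1(\varepsilon)\asymp\varepsilon^{-2}$, whereas the paper's bias-absorption step only needs $n>2d/\varepsilon$, though other constants in its chain of lemmas are not made fully explicit.
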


Theorem \ref{thm-rip} reveals that the distance between two subspaces remains almost unchanged after random projection with overwhelming probability, when the ambient dimension after projection $n$ is sufficiently large.

\subsection{Our Contribution}

In this paper, we study the RIP of Gaussian random matrices for projecting a finite set of subspaces.
The problem is challenging as random projections neither preserve orthogonality nor normalize the vectors defining orthonormal bases of the subspaces.
In order to measure the change in subspace distance induced by random projections, both effects have to be carefully quantified.
Based on building a metric space of subspaces with the projection F-norm distance,
which is closely connected with subspace affinity,
we start from verifying that the affinity between two subspaces concentrates on its estimate with overwhelming probability after Gaussian random projection.
Then we successfully reach the RIP of two subspaces and generalize it to the situation of a finite set of subspaces, as stated in Theorem \ref{thm-rip}.

The main contribution of this work is to provide a mathematical tool,
which can shed light on many problems including CSC.
As a direct result of Theorem \ref{thm-rip},
when solving the SC problem at a large scale,
one may conduct SC on randomly compressed samples 
to alleviate the high computational burden and still have theoretical performance guarantee.
Because the distance between subspaces almost remains unchanged after projection,
the clustering error rate of any SC algorithm may keep as small as that conducting in the original space.
Considering that our theory is independent of SC algorithms, this may benefit future studies on other subspace related topics.

Except our previous work \cite{li2017restricted} that will be compared with in Section \ref{Relatedworks}, 
as far as we know, there is no relevant work that study the distance preserving property between subspaces after random projection.

\subsubsection{Comparison with JL Lemma and RIP for Sparse Signals}

The famous Johnson-Lindenstrauss Lemma illustrates that there exists a map from a higher-dimensional space into a lower-dimensional space such that the distance between a finite set of data points will change little after being mapped.

\begin{lem}[JL Lemma]\cite{Johnson1984Extensions, dasgupta1999elementary}
For any set ${\set V}$ of $L$ points in $\mathbb{R}^N$, there exists a map $f : \mathbb{R}^N \to \mathbb{R}^n, n < N$, such that for all ${\bf x}_1, {\bf x}_2 \in {\set V}$,
$$
(1-\varepsilon)\|{\bf x}_1 - {\bf x}_2\|_2^2 \le \|f({\bf x}_1) - f({\bf x}_2)\|_2^2 \le (1+\varepsilon)\|{\bf x}_1 - {\bf x}_2\|_2^2
$$
if $n$ is a positive integer satisfying
$
n \ge {4{\rm ln} L}/({{\varepsilon^2}/{2} - {\varepsilon^3}/{3}}),
$
where $0 < \varepsilon < 1$ is a constant.
\end{lem}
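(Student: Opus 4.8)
The plan is to prove the lemma by the probabilistic method: rather than constructing $f$ explicitly, I will exhibit a \emph{random} linear map, show that with strictly positive probability it preserves every pairwise squared distance to within the factor $1\pm\varepsilon$, and conclude that at least one realization with this property must exist. Concretely, I would take $f(\mathbf{x}) = \tfrac{1}{\sqrt n}\,\mathbf{R}\mathbf{x}$ with $\mathbf{R}\in\mathbb{R}^{n\times N}$ having i.i.d.\ $\mathcal N(0,1)$ entries (equivalently the Gaussian projection of Definition~\ref{defi-project-subspace}, with the normalization absorbed). By linearity $f(\mathbf{x}_1)-f(\mathbf{x}_2)=f(\mathbf{x}_1-\mathbf{x}_2)$, so it suffices to control the distortion of each of the finitely many difference vectors $\mathbf{v}=\mathbf{x}_1-\mathbf{x}_2$; since the distortion ratio is scale invariant, writing $\mathbf{v}=\|\mathbf{v}\|_2\,\mathbf{u}$ reduces the problem to understanding $\|f(\mathbf{u})\|_2^2$ for a fixed unit vector $\mathbf{u}$. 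Because the rows of $\mathbf{R}$ are independent Gaussians, the coordinates $(\mathbf{R}\mathbf{u})_i$ are i.i.d.\ $\mathcal N(0,1)$, whence $\|f(\mathbf{u})\|_2^2=\tfrac1n\sum_{i=1}^n g_i^2$ is $1/n$ times a $\chi^2_n$ variable, with mean exactly $1$ — i.e.\ the map is isometric in expectation.

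The technical heart is to establish the sharp two-sided concentration
$$
\Pr\!\left[\,\big|\,\|f(\mathbf{u})\|_2^2-1\,\big|\ge\varepsilon\,\right]\le 2\exp\!\Big(-\tfrac n2\big(\tfrac{\varepsilon^2}{2}-\tfrac{\varepsilon^3}{3}\big)\Big).
$$
I would prove this by the Chernoff/moment-generating-function method: using $\mathbb E[e^{t g^2}]=(1-2t)^{-1/2}$ for $t<\tfrac12$, I apply Markov's inequality to $\exp(t\sum g_i^2)$ for each tail, optimize over $t$, and then bound the resulting expressions of the form $\varepsilon-\ln(1+\varepsilon)$ (and its lower-tail analogue) from below by $\tfrac{\varepsilon^2}{2}-\tfrac{\varepsilon^3}{3}$ via the Taylor expansion of the logarithm. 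This is precisely the step where the constant $\varepsilon^2/2-\varepsilon^3/3$ in the sample-complexity threshold is born, and the main obstacle is doing the logarithmic estimate tightly enough to yield exactly that constant rather than a cruder one — a weaker bound here would not reproduce the stated hypothesis on $n$.

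Finally I would close by a union bound over the $\binom{L}{2}<L^2/2$ unordered pairs in $\set V$. The probability that the distortion estimate fails for some pair is at most
$$
\binom{L}{2}\cdot 2\exp\!\Big(-\tfrac n2\big(\tfrac{\varepsilon^2}{2}-\tfrac{\varepsilon^3}{3}\big)\Big)<L^2\exp\!\Big(-\tfrac n2\big(\tfrac{\varepsilon^2}{2}-\tfrac{\varepsilon^3}{3}\big)\Big),
$$
and the hypothesis $n\ge 4\ln L/(\varepsilon^2/2-\varepsilon^3/3)$ makes the right-hand side strictly less than $1$ (taking logarithms, the exponent dominates $2\ln L$). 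Hence the event that \emph{all} pairwise squared distances are simultaneously preserved within $1\pm\varepsilon$ has positive probability, so a map $f$ with the desired property exists. In fact this argument yields more than bare existence: a random Gaussian map succeeds with high probability, which is the form of the result that feeds naturally into the subspace-distance analysis pursued elsewhere in this paper.
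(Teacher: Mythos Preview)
Your argument is correct and is precisely the classical probabilistic proof due to Dasgupta and Gupta: a Gaussian random projection, $\chi^2$ concentration via the moment generating function with the Taylor estimate $\varepsilon-\ln(1+\varepsilon)\ge \varepsilon^2/2-\varepsilon^3/3$, and a union bound over the $\binom{L}{2}$ pairs. Note, however, that the paper does not supply its own proof of this lemma; it is quoted verbatim from \cite{Johnson1984Extensions, dasgupta1999elementary} as background for comparison with the subspace RIP results, so there is no ``paper's proof'' to compare against --- your derivation simply reproduces the argument in the cited reference.
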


The RIP of random matrix illustrates that the distance between two sparse vectors will change little with high probability after random projection.

\begin{defi} \cite{Candes2005Decoding, Cand2008The, Baraniuk2015A} \label{def}
The projection matrix ${\bm \Phi}\in\mathbb{R}^{n\times N}, n < N$ satisfies RIP of order $k$ if there exists a $\delta_k\in\left(0,1\right)$ such that
$$
(1\!-\!\delta_k)\|{\bf x}_1 \!-\! {\bf x}_2\|_2^2 \le \|{\bm \Phi}{\bf x}_1 \!-\! {\bm \Phi}{\bf x}_2\|_2^2 \le (1\!+\!\delta_k)\|{\bf x}_1 \!-\! {\bf x}_2\|_2^2
$$
holds for any two $k$-sparse vectors ${\bf x}_1, {\bf x}_2\in\mathbb{R}^N$.
\end{defi}

\begin{thm}\label{gauss-rand}\cite{Baraniuk2015A}
A Gaussian random matrix ${\bm \Phi}\in\mathbb{R}^{n\times N}, n < N$ has the RIP of order $k$ for
$
n \ge c_1k{\rm ln}\left(\frac{N}k\right)
$
with probability
$
1 - {\rm e}^{-c_2n},
$
where $c_1,c_2 > 0$ are constants depending only on $\delta_k$, the smallest nonnegative constant satisfying Definition \ref{def}.
\end{thm}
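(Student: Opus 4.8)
The plan is to prove Theorem \ref{gauss-rand} by the classical three-ingredient argument from the sub-Gaussian theory: a pointwise concentration inequality for $\|{\bm \Phi}{\bf v}\|_2^2$, a covering-net (volumetric) bound on each fixed support, and a union bound over all supports. First I would reduce the two-vector formulation in Definition \ref{def} to a single-vector statement. If ${\bf x}_1$ and ${\bf x}_2$ are each $k$-sparse, their difference ${\bf v} = {\bf x}_1 - {\bf x}_2$ is supported on the union of the two supports and is therefore at most $s := 2k$-sparse. Hence it suffices to show that, with the stated probability, $(1-\delta_k)\|{\bf v}\|_2^2 \le \|{\bm \Phi}{\bf v}\|_2^2 \le (1+\delta_k)\|{\bf v}\|_2^2$ holds for every $s$-sparse ${\bf v}$. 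Since both sides are homogeneous of degree two in $\|{\bf v}\|_2$, I only need to control unit vectors.

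For the pointwise ingredient, fix a unit vector ${\bf v}$. Each coordinate of ${\bm \Phi}{\bf v}$ is distributed as $\mathcal{N}(0,1/n)$ and the coordinates are independent, so $n\|{\bm \Phi}{\bf v}\|_2^2$ is a chi-squared variable with $n$ degrees of freedom. Standard chi-squared tail estimates then yield a concentration inequality $\Pr\bigl(\,\bigl|\,\|{\bm \Phi}{\bf v}\|_2^2 - 1\,\bigr| > \delta\bigr) \le 2\,{\rm e}^{-c_0(\delta)\, n}$ with $c_0(\delta) > 0$ (of order $\delta^2$ for small $\delta$), which is exactly the Johnson–Lindenstrauss style estimate underlying the JL Lemma quoted above.

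Next I would fix a support set $T$ with $|T| = s$; the $s$-sparse unit vectors supported on $T$ form the unit sphere of an $s$-dimensional coordinate subspace. I would choose a $\rho$-net $\mathcal{N}_T$ of this sphere, for which a volumetric argument gives $|\mathcal{N}_T| \le (1 + 2/\rho)^s$. Applying the pointwise bound at a reduced distortion level $\delta'$ and taking a union bound over $\mathcal{N}_T$, with probability at least $1 - 2(1+2/\rho)^s {\rm e}^{-c_0(\delta')\, n}$ every net point satisfies the isometry bound at level $\delta'$. A net-to-sphere extension then promotes this to control over the entire sphere: writing $A = \sup \|{\bm \Phi}{\bf v}\|_2$ over unit ${\bf v}$ supported on $T$ and approximating an arbitrary unit ${\bf v}$ by its nearest net point, one obtains an inequality of the form $A \le (1+\delta')^{1/2} + \rho A$, hence a bound on $A$ in terms of $\delta'$ and $\rho$ alone; choosing $\rho$ and $\delta'$ small enough forces the effective distortion on all of $\mathbb{R}^T$ to be at most $\delta_k$. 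Finally, there are $\binom{N}{s} \le ({\rm e} N / s)^s$ such coordinate subspaces, so a union bound over $T$ gives total failure probability at most $2\,({\rm e} N/s)^s (1+2/\rho)^s {\rm e}^{-c_0(\delta')\, n}$. Taking logarithms, the failure exponent is $s\ln({\rm e}N/s) + s\ln(1+2/\rho) - c_0(\delta')\, n$, which is at most $-c_2 n$ once $c_0(\delta')\, n \ge c_3\, s \ln(N/s)$; recalling $s = 2k$, this is precisely the hypothesis $n \ge c_1 k \ln(N/k)$, and absorbing the leftover constants delivers the probability $1 - {\rm e}^{-c_2 n}$ with $c_1, c_2$ depending only on $\delta_k$.

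I expect the main obstacle to be the net-to-sphere extension step, since the pointwise concentration together with a union bound only controls finitely many directions, whereas the RIP must hold uniformly over an uncountable sphere. The delicate point is to set up the geometric recursion on $A$ so that the $\rho$-net approximation error does not accumulate, and to balance the two free parameters $\rho$ and $\delta'$ so that the net stays small enough to keep the required $n$ at the order $k\ln(N/k)$ while the propagated distortion still meets the target $\delta_k$.
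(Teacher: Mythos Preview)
The paper does not supply a proof of this theorem at all: it is quoted verbatim as a known result from \cite{Baraniuk2015A} and used only as background context in the comparison subsection, so there is no ``paper's own proof'' to compare against. Your sketch is nonetheless a correct outline of the standard argument from that reference---pointwise chi-squared concentration, a $\rho$-net on each fixed $s$-dimensional coordinate sphere with the volumetric cardinality bound, the net-to-sphere extension via the operator-norm recursion, and a union bound over $\binom{N}{s}$ supports---and the parameter balancing you describe (choosing $\rho$ and $\delta'$ so that the combinatorial and net terms are absorbed into $c_0(\delta')n$) is exactly how the cited proof closes.
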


In summary, the above works focus on the change of the distance between points after determinate mapping or random projection. In comparison, our work views a subspace as a whole and studies the distance between subspaces, 
which to the best of our knowledge has never been studied before. 
Moreover, the above works study the points in Euclidean space with $l_2$-norm, 
while our work study the subspaces on the Grassmannian manifold with F-norm metric, 
which is highly nonlinear and more complex.
A detailed comparison to explain the differences between our work and related works is presented in Table \ref{fig}.

\begin{table}[!t]
\renewcommand{\arraystretch}{1.5}
\newcommand{\firstcol}[1]{\mythbox{5.1em}{1em}{#1}}
\newcommand{\secondcol}[1]{\mythbox{6em}{1em}{#1}}
\newcommand{\thirdcol}[1]{\mythbox{7em}{1em}{#1}}
\newcommand{\fourthcol}[1]{\mythbox{6em}{1em}{#1}}
\newcommand{\fifthcol}[1]{\mythbox{7em}{1em}{#1}}
\newcommand{\secondthirdcol}[1]{\mythbox{13em}{1em}{#1}}
\newcommand{\fourthfifthcol}[1]{\mythbox{13em}{1em}{#1}}
\caption{Comparison with other dimension-reduction theories including JL Lemma, RIP for sparse signals, and our previous results \cite{li2017restricted}}\label{fig}
\begin{center}
\begin{tabular}{c|c|c|c|c}
\toprule[1pt]
&  &  &
\multicolumn{2}{c}{\fourthfifthcol{RIP for \\low-dimensional subspaces}}
\\
\cline{4-5}
 &
\raisebox{4ex}[0pt]{JL Lemma} &
\raisebox{4ex}[0pt]{\thirdcol{RIP for \\sparse signals}} &
\fourthcol{\cite{li2017restricted}} &
\fifthcol{this work} \\
 \toprule
object & \secondcol{any set of $L$ points in $\mathbb{R}^N$} &
\thirdcol{all $k$-sparse signals in $\mathbb{R}^N$} &
\multicolumn{2}{c}{\fourthfifthcol{any set of $L$ $d$-dimensional \\ subspaces in $\mathbb{R}^N$}} \\ \hline
metric &
\multicolumn{2}{c|}{\secondthirdcol{Euclidean distance\\$\|{\bf x}_i - {\bf x}_j\|_2$}} &
\multicolumn{2}{c}{\fourthfifthcol{projection F-norm distance \\$\frac1{\sqrt 2} \|{\bf P}_i - {\bf P}_j\|_{\rm F}$}} \\ \hline
\firstcol{compression method} &
some map $f$ & \multicolumn{3}{c}{Gaussian random matrix}\\ \hline
error bound & $(1-\varepsilon, 1+\varepsilon)$ & $(1-\delta_k,1+\delta_k)$ & \multicolumn{2}{c}{$\left(1-\varepsilon,1+\varepsilon\right)$} \\ \hline
condition & $n \ge \frac{4{\rm ln} L}{{\varepsilon^2}/{2} - {\varepsilon^3}/{3}}$ & $n \ge c_1k{\rm ln}\left(\frac{N}k\right)$ & $n$ large enough & $n \!>\! c_1\!\max\{d, \ln L\}$ \\ \hline
\firstcol{success probability} & 1 & $1 - {\rm e}^{-c_2n}$ & 1$-\frac{2dL\left(L-1\right)}{\left(\varepsilon-d/n\right)^2n}$ & $1 - {\rm e}^{-c_2n}$ \\ \bottomrule[1pt]
\end{tabular}
\end{center}
\label{default}
\end{table}%

\subsubsection{Comparison with RIP for Signals in UoS}

There are literatures studying the distance preserving properties of compressed data points,
which may be sparse on specific basis or lie in a couple of subspaces or surfaces \cite{Davenport2010Signal, Blumensath2009Sampling, agarwal2007embeddings, magen2002dimensionality}.

The authors of \cite{Davenport2010Signal} extended the RIP to signals that are sparse or compressible with respect to a certain basis ${\bm \Psi}$, i.e., ${\bf x} = {\bm \Psi}{\bm \alpha}$, where ${\bm \Psi}$ is represented as a unitary $N \times N$ matrix and ${\bm \alpha}$ is a $k$-sparse vector.
The work of \cite{Blumensath2009Sampling} proves that with high probability the random projection matrix ${\bm \Phi}$ can preserve the distance between two signals belonging to a Union of Subspaces (UoS).
In \cite{ agarwal2007embeddings}, it is shown that random projection preserves the structure of surfaces.
Given a collection of $L$ surfaces of linearization dimension $d$, if they are embedded into 
a space of $\mathcal{O}(d\delta^{2} \log(Ld/\delta))$ dimension, the surfaces are preserved  
in the sense that for any pair of points on these surfaces the distance between them are preserved.
The main contribution of \cite{magen2002dimensionality} is stated as follows. If $S$ is an $n$ point subset of $\mathbb{R}^N$, $0 < \delta < \frac{1}{3}$ and
$
n = 256d\log n(\max\{d, 1/\delta\})^2,
$
there is a mapping of $\mathbb{R}^N$ into $\mathbb{R}^n$ under which volumes of sets of size at most $d$ do not change by more than a factor of $1+\delta$, and the distance of points from affine hulls of sets of size at most $k - 1$ is preserved within a relative error of $\delta$.

According to above survey, those works study embedding of Euclidean distances between points in subspaces, while we discuss embedding of a finite set of subspaces in terms of the projection F-norm distance.
In both the related works and this paper, the same mathematical tool of concentration inequalities and random matrix theory are adopted to derive the RIP for two different objects, i.e., data points in Euclidean space and subspaces in Euclidean space (or points on Grassmann manifold), respectively.
In comparison, both Euclidean space and random projection are linear,
but Grassmannian is not linear, let along the projection on it,
so the new problem is much more difficult than the existing one,
and a core contribution of this work is dealing with the above challenges with a brand-new geometric proof,
the technique in which has hardly been used previously to derive the RIP for data points.

\subsection{Organization}

The rest of this paper is organized as follows.
Based on the introduction of principal angles, affinity, and its connection with the projection F-norm distance, we study the RIP for subspaces in the top level in Section \ref{RIPofGaussian}. The main result of Theorem \ref{thm-rip} is proved by using two core propositions of Lemma \ref{lem-line-sub} and Theorem \ref{thm-sub}.
In Section \ref{preparation}, we focus on the probability and concentration inequalities of Gaussian random matrix to prepare necessary mathematical tools that will be used through this work.
In Section \ref{Appendixproof-thm-sub-lem-line-sub}, we prove the first core proposition of Lemma \ref{lem-line-sub}, which states that the affinity between a line and a subspace will concentrate on its estimate with overwhelming probability after random projection.
In Section \ref{Appendixproof-thm-sub}, we prove the second core proposition of Theorem \ref{thm-sub}, which provides a general theory that the affinity between two subspaces with arbitrary dimensions demonstrates concentration after random projection.
In Section \ref{Relatedworks}, we compare those theories with our previous results and highlight the novelty.
We conclude this work in Section \ref{secConclusion}.
Most proofs of lemmas and remarks are included in the Appendix \ref{secAppendix}.

\subsection{Notations}

Vectors and matrices are denoted by lower-case and upper-case letter, respectively, both in boldface.
${\bf A}^{\rm T}$ denotes matrix transposition.
$\|{\bf a}\|$ and $\|{\bf A}\|_{\rm F}$ denote $\ell_2$ norm of vector $\bf a$ and Frobenius norm of matrix $\bf A$.
$s_{\max}({\bf A})$ and $s_{\min}({\bf A})$ denote the largest and smallest singular value of matrix ${\bf A}$, respectively.
Subspaces are denoted by $\set X, \set Y,$ and $\set S$.
$\Span(\bf A)$ denotes the column space of matrix $\bf A$.
We use ${\set S}^\perp$ to denote the orthonormal complement space of $\set S$.
$\Proj_{\set S}({\bf v})$ denotes the projection of vector $\bf v$ onto subspace $\set S$.

\section{RIP of Gaussian Random Projection for Subspaces}
\label{RIPofGaussian}

\subsection{Preliminary}

Before starting the theoretical analysis, we first introduce the definition of principal angles and affinity.
These two concepts have been widely adopted to describe the relative position and to measure the similarity between two subspaces.
Our theoretical analysis will first focus on the estimation of these quantities before and after random projection.
Then using the connection between affinity and projection F-norm distance derived in \cite{li2017restricted}, we can readily derive the result in Theorem \ref{thm-rip}.

The principal angles (or canonical angles) between two subspaces provide a robust way to characterize the relative subspace positions \cite{jordan1875essai, PrincipalAngles2006}.

\begin{defi}\label{defi-principal-angles} The principal angles $\theta_1,\cdots,\theta_{d_1}$ between two subspaces ${\set X}_1$ and ${\set X}_2$ of dimensions $d_1\le d_2$, are recursively defined as
$$
\cos{\theta_k}=\max\limits_{{\bf x}_1 \in {\set X}_1}\max\limits_{{\bf x}_2 \in {\set X}_2}\frac{{\bf x}_1^{\rm T}{\bf x}_2}{\Vert {\bf x}_1\Vert \Vert {\bf x}_2\Vert}=:\frac{{\bf x}_{1k}^{\rm T}{\bf x}_{2k}}{\Vert {\bf x}_{1k}\Vert\Vert {\bf x}_{2k}\Vert},
$$
with the orthogonality constraints ${\bf x}_{i}^{\rm T}{\bf x}_{il}=0, l=1,\cdots,k-1,i=1,2$.
\end{defi}

Beside definition, an alternative way of computing principal angles is to use the singular value decomposition \cite{PrincipalAngles1973}.

\begin{lem}\label{lema-principal-angles-2}
Let the columns of ${\bf U}_i$ be orthonormal bases for subspace ${\set X}_i$ of dimension $d_i, i=1,2$ and suppose $d_1\le d_2$. Let $\lambda_1\ge\lambda_2\ge\cdots\ge\lambda_{d_1}\ge 0$ be the singular values of ${\bf U}_1^{\rm T}{\bf U}_2$, then
$
\cos\theta_k = \lambda_k,  k=1,\cdots, d_1.
$
\end{lem}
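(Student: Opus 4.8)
The plan is to reduce the vector optimization in Definition \ref{defi-principal-angles} to the standard variational problem for the singular values of ${\bf M} := {\bf U}_1^{\rm T}{\bf U}_2$. First I would parametrize the competing vectors by their coordinates in the orthonormal bases: every ${\bf x}_i \in {\set X}_i$ can be written uniquely as ${\bf x}_i = {\bf U}_i{\bf a}_i$ with ${\bf a}_i \in \mathbb{R}^{d_i}$. Because ${\bf U}_i$ has orthonormal columns, $\|{\bf x}_i\| = \|{\bf a}_i\|$ and ${\bf x}_1^{\rm T}{\bf x}_2 = {\bf a}_1^{\rm T}{\bf U}_1^{\rm T}{\bf U}_2{\bf a}_2 = {\bf a}_1^{\rm T}{\bf M}{\bf a}_2$, so that
$$
\frac{{\bf x}_1^{\rm T}{\bf x}_2}{\|{\bf x}_1\|\,\|{\bf x}_2\|} = \frac{{\bf a}_1^{\rm T}{\bf M}{\bf a}_2}{\|{\bf a}_1\|\,\|{\bf a}_2\|}.
$$
The identical computation shows the orthogonality constraint ${\bf x}_{ik}^{\rm T}{\bf x}_{il}=0$ is equivalent to ${\bf a}_{ik}^{\rm T}{\bf a}_{il}=0$, so the geometric recursion on the subspaces transfers verbatim to a recursion on the coefficient vectors.

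Next I would introduce the singular value decomposition ${\bf M} = \sum_{k=1}^{d_1}\lambda_k{\bf p}_k{\bf q}_k^{\rm T}$, where $\{{\bf p}_k\}\subset\mathbb{R}^{d_1}$ and $\{{\bf q}_k\}\subset\mathbb{R}^{d_2}$ are the orthonormal left and right singular vectors and $\lambda_1\ge\cdots\ge\lambda_{d_1}\ge 0$. Expanding ${\bf a}_1=\sum_l\alpha_l{\bf p}_l$ and ${\bf a}_2=\sum_m\beta_m{\bf q}_m$ gives ${\bf a}_1^{\rm T}{\bf M}{\bf a}_2=\sum_{l=1}^{d_1}\lambda_l\alpha_l\beta_l$, the surplus right-singular directions lying in the kernel and contributing nothing. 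For the base case $k=1$, Cauchy--Schwarz together with $\lambda_l\le\lambda_1$ yields ${\bf a}_1^{\rm T}{\bf M}{\bf a}_2\le\lambda_1\|{\bf a}_1\|\|{\bf a}_2\|$, with equality at ${\bf a}_{11}={\bf p}_1,{\bf a}_{21}={\bf q}_1$; hence $\cos\theta_1=\lambda_1$.

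For the inductive step I would argue that the previously selected maximizers may be taken to be the leading singular vectors, ${\bf a}_{1l}={\bf p}_l$ and ${\bf a}_{2l}={\bf q}_l$ for $l<k$. The constraints ${\bf a}_1\perp{\bf p}_1,\ldots,{\bf p}_{k-1}$ and ${\bf a}_2\perp{\bf q}_1,\ldots,{\bf q}_{k-1}$ force $\alpha_1=\cdots=\alpha_{k-1}=0$ and $\beta_1=\cdots=\beta_{k-1}=0$, so ${\bf a}_1^{\rm T}{\bf M}{\bf a}_2=\sum_{l\ge k}\lambda_l\alpha_l\beta_l\le\lambda_k\|{\bf a}_1\|\|{\bf a}_2\|$, again with equality at ${\bf a}_{1k}={\bf p}_k,{\bf a}_{2k}={\bf q}_k$. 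This gives $\cos\theta_k=\lambda_k$ for every $k=1,\ldots,d_1$.

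The main obstacle I anticipate is the bookkeeping in the inductive step: one must verify that an optimal choice of maximizers at each stage can genuinely be aligned with the singular vectors. When several singular values coincide the maximizer is not unique, so some care with the corresponding singular subspaces is required, and one must confirm the recursion never leaves the span of the singular vectors so that the orthogonal-complement restriction of ${\bf M}$ carries exactly the tail singular values $\lambda_k,\ldots,\lambda_{d_1}$. Once this variational characterization is set up cleanly, the remainder is the routine Cauchy--Schwarz estimate above.
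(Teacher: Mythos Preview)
The paper does not supply its own proof of this lemma; it is simply quoted from \cite{PrincipalAngles1973} as a known fact. Your argument is the standard one and is correct: the change of variables ${\bf x}_i = {\bf U}_i{\bf a}_i$ turns the recursive definition of principal angles into the Courant--Fischer-type variational characterization of the singular values of ${\bf M}={\bf U}_1^{\rm T}{\bf U}_2$.

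The only point that deserves a sentence more than you gave it is the repeated-singular-value issue. The clean way to close it is to observe that any maximizing pair $({\bf a}_{1j},{\bf a}_{2j})$ at stage $j$ necessarily lies in the left/right singular subspaces associated with the value $\lambda_j$ (this follows from the equality conditions in your Cauchy--Schwarz chain), so after an orthogonal change of basis \emph{within} each such singular subspace the previously chosen maximizers become the leading singular vectors. This change of basis leaves the singular values unchanged and the orthogonality constraints intact, so your inductive step goes through verbatim. With that remark added, the proof is complete.
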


Based on principle angles, affinity is defined to measure the similarity between subspaces \cite{soltanolkotabi2012geometric}.

\begin{defi}\label{defi-affinity} The affinity between two subspaces ${\set X}_1$ and ${\set X}_2$ of dimension $d_1\le d_2$ is defined as
$$
\aff\left({\set X}_1, {\set X}_2\right) :=  \bigg(\sum_{k=1}^{d_1}\cos^2\theta_k\bigg)^{1/2}
=  \|{\bf U}_1^{\rm T}{\bf U}_2\|_{\rm F},
$$
where the columns of ${\bf U}_i$ are orthonormal bases of $\set{X}_i, i=1,2$.
\end{defi}

The relationship between distance and affinity is revealed in Lemma \ref{lem-aff-to-dist}.
Because of the concise definition and easy computation of affinity, we will start the theoretical analysis with affinity, and then present the results with distance by using Lemma \ref{lem-aff-to-dist}.

\begin{lem}\cite{li2017restricted}\label{lem-aff-to-dist}
The \emph{distance} and \emph{affinity} between two subspaces ${\set X}_1$ and ${\set X}_2$ of dimension $d_1, d_2$, are connected by
$$
 D^2(\set{X}_1, \set{X}_2) = \frac{d_1+d_2}{2}-\aff ^2(\set{X}_1, \set{X}_2).
$$
\end{lem}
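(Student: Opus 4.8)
The plan is to reduce the identity to a one-line computation with the Frobenius inner product. First I would fix orthonormal basis matrices ${\bf U}_1\in\mathbb{R}^{N\times d_1}$ and ${\bf U}_2\in\mathbb{R}^{N\times d_2}$ of $\set X_1$ and $\set X_2$ and set ${\bf P}_i:={\bf U}_i{\bf U}_i^{\rm T}$. Since ${\bf U}_i^{\rm T}{\bf U}_i={\bf I}_{d_i}$, each ${\bf P}_i$ is symmetric and idempotent, hence the orthogonal projector onto $\Span({\bf U}_i)=\set X_i$; in particular $D^2(\set X_1,\set X_2)=\frac12\|{\bf P}_1-{\bf P}_2\|_{\rm F}^2$ by the definition of the projection F-norm distance, while $\aff^2(\set X_1,\set X_2)=\|{\bf U}_1^{\rm T}{\bf U}_2\|_{\rm F}^2$ by Definition~\ref{defi-affinity}. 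Expanding via $\|{\bf A}-{\bf B}\|_{\rm F}^2=\|{\bf A}\|_{\rm F}^2+\|{\bf B}\|_{\rm F}^2-2\,\mathrm{tr}({\bf A}^{\rm T}{\bf B})$ and using that the ${\bf P}_i$ are symmetric gives $\|{\bf P}_1-{\bf P}_2\|_{\rm F}^2=\|{\bf P}_1\|_{\rm F}^2+\|{\bf P}_2\|_{\rm F}^2-2\,\mathrm{tr}({\bf P}_1{\bf P}_2)$.

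Next I would evaluate the three terms with the cyclic invariance of the trace and the identity ${\bf U}_i^{\rm T}{\bf U}_i={\bf I}_{d_i}$. The diagonal terms are $\|{\bf P}_i\|_{\rm F}^2=\mathrm{tr}({\bf U}_i{\bf U}_i^{\rm T}{\bf U}_i{\bf U}_i^{\rm T})=\mathrm{tr}({\bf U}_i^{\rm T}{\bf U}_i)=d_i$, and the cross term is $\mathrm{tr}({\bf P}_1{\bf P}_2)=\mathrm{tr}({\bf U}_1{\bf U}_1^{\rm T}{\bf U}_2{\bf U}_2^{\rm T})=\mathrm{tr}\big(({\bf U}_1^{\rm T}{\bf U}_2)({\bf U}_1^{\rm T}{\bf U}_2)^{\rm T}\big)=\|{\bf U}_1^{\rm T}{\bf U}_2\|_{\rm F}^2=\aff^2(\set X_1,\set X_2)$. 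Substituting, $\|{\bf P}_1-{\bf P}_2\|_{\rm F}^2=d_1+d_2-2\,\aff^2(\set X_1,\set X_2)$, and halving yields $D^2(\set X_1,\set X_2)=\frac{d_1+d_2}{2}-\aff^2(\set X_1,\set X_2)$, as claimed.

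For completeness I would note that both sides are basis-independent: ${\bf P}_i$ is intrinsic to $\set X_i$, and replacing ${\bf U}_i$ by ${\bf U}_i{\bf Q}_i$ with ${\bf Q}_i$ orthogonal turns ${\bf U}_1^{\rm T}{\bf U}_2$ into ${\bf Q}_1^{\rm T}({\bf U}_1^{\rm T}{\bf U}_2){\bf Q}_2$, which has the same Frobenius norm (equivalently, $\aff^2=\sum_{k=1}^{\min\{d_1,d_2\}}\cos^2\theta_k$ by Lemma~\ref{lema-principal-angles-2}). No case split between $d_1=d_2$ and $d_1\neq d_2$ is needed, since ${\bf U}_1^{\rm T}{\bf U}_2$ is $d_1\times d_2$ throughout, so this also covers the generalization of the distance of \cite{SeveralDistances1998} to unequal dimensions. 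There is no real obstacle here: the only points requiring a little care are verifying that ${\bf P}_i={\bf U}_i{\bf U}_i^{\rm T}$ is genuinely the orthogonal projector onto $\set X_i$ and keeping the cyclic trace manipulations honest when ${\bf U}_1^{\rm T}{\bf U}_2$ is rectangular; the identity then falls straight out of the expansion.
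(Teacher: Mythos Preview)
Your proof is correct. The paper does not give its own proof of this lemma but simply cites \cite{li2017restricted}; your trace/Frobenius expansion is exactly the standard argument one would expect there, so there is nothing further to compare.
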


\subsection{Theoretical Results}
In this section, we will present the main theoretical results about the affinity and distance between subspaces.
Before that, let us introduce some basic notations to be used.
We denote the random projection of subspaces of ${\set X}_1$ and ${\set X}_2$ as ${\set Y}_1$ and ${\set Y}_2$, respectively.
We denote $\dX = D({\set X}_1,{\set X}_2)$ and $\dY = D({\set Y}_1,{\set Y}_2)$ as the distances before and after random projection.
Similarly, we use $\affX = \aff({\set X}_1,{\set X}_2)$ and $\affY = \aff({\set Y}_1,{\set Y}_2)$ to denote the affinities before and after projection.
Without loss of generality, we always suppose that $d_1\le d_2$.
For simplicity, we refer the affinity (distance) after random projection as \emph{projected affinity} (\emph{projected distance}).

To begin with, we focus on a special case that one subspace is degenerated to a line (one-dimensional subspace).
The following lemma provides an estimation of the affinity between a line and a subspace after Gaussian random projection.
When the dimensionality of the new ambient space is large enough, the real projected affinity will highly concentrate around this estimation with overwhelming probability.

\begin{lem}\label{lem-line-sub}
Suppose $\set X_1, \set X_2\subset \mathbb{R}^N$ are a line and a $d$-dimension subspace, $d\ge 1$, respectively.
Let $\lambda = \aff_{\set X}$ denote their affinity.
If they are projected onto $\mathbb R^n, n<N,$ by a Gaussian random matrix ${\bm\Phi} \in\mathbb{R}^{n\times N}$, $\set{X}_i \stackrel{\bm \Phi}{\longrightarrow} \set{Y}_i, i=1,2$, then the projected affinity, $\affY$, can be estimated by
\begin{equation}\label{lem-line-sub-aff-change}
\oaffYS = \lambda^2+\frac{d}{n}\left(1-\lambda^2\right),
\end{equation}
and there exist constants $c_1(\varepsilon),c_2(\varepsilon) > 0$ depending only on $\varepsilon$ such that for any $n > c_1(\varepsilon)d$,
\begin{equation}\label{lem-line-sub-aff-bound}
    \left|\affYS-\oaffYS\right| < (1-\lambda^2)\varepsilon
\end{equation}
holds with probability at least $1 - {\rm e}^{-c_2(\varepsilon)n}$.
\end{lem}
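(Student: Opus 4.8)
The plan is to reduce the problem to a statement about singular values of a Gaussian matrix restricted to the relevant low-dimensional subspaces, and then invoke standard non-asymptotic random matrix concentration. Without loss of generality choose coordinates so that $\set X_1 = \Span(\ve_1)$ is a line and $\set X_2$ is a $d$-dimensional subspace whose principal angle with $\set X_1$ is $\theta$ with $\cos\theta = \lambda$; more concretely, pick an orthonormal basis $\vu$ of $\set X_1$ and an orthonormal basis ${\bf U}_2 = [\vu_{21},\ldots,\vu_{2d}]$ of $\set X_2$ such that $\vu^{\rm T}\vu_{21} = \lambda$ and $\vu^{\rm T}\vu_{2k} = 0$ for $k\ge 2$. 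Write $\vu = \lambda\, \vu_{21} + \sqrt{1-\lambda^2}\,\vw$ where $\vw$ is a unit vector orthogonal to $\set X_2$. The projected affinity is $\affYS = \|\Proj_{\set Y_2}(\bm\Phi\vu)\|^2 / \|\bm\Phi\vu\|^2$, the squared cosine of the angle between the line $\set Y_1 = \Span(\bm\Phi\vu)$ and the subspace $\set Y_2 = \Span(\bm\Phi{\bf U}_2)$.

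The first step is to compute the expectation-like quantity $\oaffYS$ and see where \eqref{lem-line-sub-aff-change} comes from. Decompose $\bm\Phi\vu = \lambda\,\bm\Phi\vu_{21} + \sqrt{1-\lambda^2}\,\bm\Phi\vw$. The vector $\bm\Phi\vu_{21}$ lies in $\set Y_2$, so it contributes fully to the numerator; the vector $\bm\Phi\vw$ is the image of a direction orthogonal to the $d$-dimensional $\set X_2$, so $\bm\Phi\vw$ is (conditionally on $\bm\Phi|_{\set X_2}$) an independent Gaussian vector in $\mathbb R^n$ whose projection onto the $d$-dimensional random subspace $\set Y_2$ carries, in expectation, a fraction $d/n$ of its energy. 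Heuristically this yields numerator $\approx \lambda^2 + (1-\lambda^2) d/n$ and denominator $\approx \lambda^2 + (1-\lambda^2)$, after the relevant norms concentrate around their means, giving exactly $\oaffYS = \lambda^2 + \frac dn(1-\lambda^2)$. To make this rigorous I would (i) condition on $\bm\Phi$ restricted to $\set X_2$ so that $\set Y_2$ is a fixed $d$-dimensional subspace and $\bm\Phi\vu_{21}$, $\bm\Phi\vw$ are computable, then (ii) use the rotational invariance of Gaussians to replace $\Proj_{\set Y_2}$ by projection onto the first $d$ coordinates, reducing everything to: $\bm\Phi\vu_{21} = {\bf a}$ (a Gaussian vector of variance $1/n$ per coordinate, conditioned to lie in the right position), $\bm\Phi\vw = {\bf b}$ an independent $\mathcal N(0,I/n)$ vector, and $\affYS = (\|\lambda {\bf a}_{1:d} + \sqrt{1-\lambda^2}\,{\bf b}_{1:d}\|^2 + \ldots)/\|\lambda{\bf a}+\sqrt{1-\lambda^2}\,{\bf b}\|^2$, where the "$\ldots$" accounts for the fact that ${\bf a}$ need not be supported on the first $d$ coordinates — here I would additionally use that ${\bf U}_2$ can be rotated so $\bm\Phi{\bf U}_2$ has an essentially canonical form. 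The cleaner route is to directly bound the numerator as $\affYS\cdot\|\bm\Phi\vu\|^2 = \|\bm\Phi\vu\|^2 - \|\Proj_{\set Y_2^\perp}(\bm\Phi\vu)\|^2$ and note $\Proj_{\set Y_2^\perp}(\bm\Phi\vu) = \sqrt{1-\lambda^2}\,\Proj_{\set Y_2^\perp}(\bm\Phi\vw)$, since $\bm\Phi\vu_{21}\in\set Y_2$. Then everything hinges on three scalar random variables: $\|\bm\Phi\vu\|^2$, $\|\bm\Phi\vw\|^2$, and $\|\Proj_{\set Y_2^\perp}(\bm\Phi\vw)\|^2$, the last being (conditionally) a $\chi^2_{n-d}/n$ variable.

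The second step is concentration. Each of $\|\bm\Phi\vu\|^2$ (a $\chi^2_n/n$ variable, concentrating around $1$), $\|\bm\Phi\vw\|^2$ (same), and $\|\Proj_{\set Y_2^\perp}(\bm\Phi\vw)\|^2$ (a $\chi^2_{n-d}/n$ variable concentrating around $(n-d)/n = 1-d/n$) deviates from its mean by more than a small multiple of $\varepsilon$ only with probability $\le 2\,{\rm e}^{-c n\varepsilon^2}$, by the standard Laurent–Massart / sub-exponential $\chi^2$ tail bounds — these are exactly the Gaussian concentration tools the paper promises to develop in Section \ref{preparation}. I also need the cross term $\langle\bm\Phi\vu_{21},\bm\Phi\vw\rangle$ to concentrate near its mean, which is again a sub-exponential quantity. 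On the complement of the union of these bad events (still of probability $\le {\rm e}^{-c_2(\varepsilon) n}$ provided $n > c_1(\varepsilon)d$ so the $d/n$ terms are under control), algebra gives $\big|\affYS\cdot\|\bm\Phi\vu\|^2 - (1-\lambda^2)(d/n) - \lambda^2\|\bm\Phi\vu\|^2\big| \lesssim (1-\lambda^2)\varepsilon$, and dividing by $\|\bm\Phi\vu\|^2 \approx 1$ yields $|\affYS - \oaffYS| < (1-\lambda^2)\varepsilon$ after renaming constants. The factor $(1-\lambda^2)$ on the right-hand side is natural: all the randomness enters through $\bm\Phi\vw$, whose coefficient is $\sqrt{1-\lambda^2}$, so the fluctuation of the numerator scales with $1-\lambda^2$.

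The main obstacle I anticipate is the handling of the ratio and the conditioning structure cleanly: $\affYS$ is a ratio of correlated quadratic forms in Gaussians, the subspace $\set Y_2$ onto which we project is itself random and correlated with the numerator through $\bm\Phi\vu_{21}$, and one must disentangle these so that, after conditioning, $\bm\Phi\vw$ is genuinely an independent isotropic Gaussian and its projection onto $\set Y_2^\perp$ is a clean $\chi^2$. Getting the exponent right — probability $1 - {\rm e}^{-c_2(\varepsilon)n}$ rather than a polynomial rate — requires that every deviation bound used is sub-exponential with the correct scaling in $n$, and that the number of events in the union bound is $O(1)$ (not growing with $d$ or $n$), which is why this special "line vs. subspace" case is treated first and then bootstrapped to general subspaces in Theorem \ref{thm-sub}.
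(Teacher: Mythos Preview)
Your proposal is correct and follows essentially the same approach as the paper: the decomposition $\vu = \lambda\,\vu_{21} + \sqrt{1-\lambda^2}\,\vw$ with $\vw\perp\set X_2$, the observation that $\bm\Phi\vu_{21}\in\set Y_2$ so that $\Proj_{\set Y_2^\perp}(\bm\Phi\vu)=\sqrt{1-\lambda^2}\,\Proj_{\set Y_2^\perp}(\bm\Phi\vw)$, and the reduction to concentration of $\|\bm\Phi\vu\|^2$, $\|\bm\Phi\vw\|^2$, and $\|\Proj_{\set Y_2}(\bm\Phi\vw)\|^2$ (the paper uses $\|\mathbf{V}^{\rm T}\mathbf{a}_0\|^2\approx d/n$ rather than your complementary $\|\Proj_{\set Y_2^\perp}(\bm\Phi\vw)\|^2\approx 1-d/n$, but this is the same computation) are exactly the paper's Steps 1 and 2. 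One small simplification: you do not need the cross term $\langle\bm\Phi\vu_{21},\bm\Phi\vw\rangle$ separately, since you already (correctly) treat $\|\bm\Phi\vu\|^2$ as a single $\chi^2_n/n$ variable---the paper does the same via its Lemma~\ref{lem2}.
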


Then, we study the general case of projecting two subspaces of arbitrary dimensions.
As mentioned in the last subsection, we will begin with the estimation of affinity and then restate the result in terms of distance.

The following theorem reveals the concentration of affinity between two arbitrary subspaces after random projection.

\begin{thm}\label{thm-sub}
Suppose $\set X_1, \set X_2\subset \mathbb{R}^N$ are two subspaces with dimension $d_1 \le d_2$, respectively.
Take
\begin{equation}\label{thm-sub-aff-change}
\oaffYS = \affXS+\frac{d_2}{n}(d_1-\affXS)
\end{equation}
as an estimate of the affinity between two subspaces after random projection, $\set{X}_i \stackrel{\bm \Phi}{\longrightarrow} \set{Y}_i, i=1,2$.
Then there exist constants $c_1(\varepsilon),c_2(\varepsilon) > 0$ depending only on $\varepsilon$ such that for any $n > c_1(\varepsilon)d_2$,
\begin{equation}\label{thm-sub-aff-bound}
    \left|\affYS-\oaffYS\right| < (d_1-\affXS)\varepsilon
\end{equation}
holds with probability at least $1 - {\rm e}^{-c_2(\varepsilon)n}$.
\end{thm}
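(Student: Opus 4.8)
The plan is to reduce the general two-subspace statement to the line--subspace case of Lemma~\ref{lem-line-sub} by decomposing $\set X_1$ along an orthonormal basis. Fix any orthonormal basis $\mathbf{u}_1,\dots,\mathbf{u}_{d_1}$ of $\set X_1$ and set $\lambda_i:=\|\Proj_{\set X_2}(\mathbf{u}_i)\|$, so that (for any orthonormal basis matrices $\mathbf{U}_1,\mathbf{U}_2$ of $\set X_1,\set X_2$) $\sum_{i=1}^{d_1}\lambda_i^2=\|\mathbf{U}_1^{\rm T}\mathbf{U}_2\|_{\rm F}^2=\affXS$. After projection, $\set Y_1=\Span(\mathbf{A})$ with $\mathbf{A}:=\bm\Phi\,\mathbf{U}_1\in\mathbb{R}^{n\times d_1}$, whose entries are i.i.d.\ $\mathcal N(0,1/n)$ because $\mathbf{U}_1$ has orthonormal columns; in particular $\mathbf{A}$ has full column rank and $\dim\set Y_1=d_1$ almost surely. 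Let $\mathbf{G}:=\mathbf{A}^{\rm T}\mathbf{A}$ be the Gram matrix of the projected basis, so an orthonormal basis matrix of $\set Y_1$ is $\mathbf{A}\mathbf{G}^{-1/2}$; let $\mathbf{P}_{\set Y_2},\mathbf{P}_{\set Y_2^\perp}$ be the orthogonal projectors onto $\set Y_2$ and its complement; and set $\mathbf{M}:=\mathbf{A}^{\rm T}\mathbf{A}-\mathbf{A}^{\rm T}\mathbf{P}_{\set Y_2}\mathbf{A}=\mathbf{A}^{\rm T}\mathbf{P}_{\set Y_2^\perp}\mathbf{A}\succeq\mathbf{0}$. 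Then Definition~\ref{defi-affinity}, together with the basis $\mathbf{A}\mathbf{G}^{-1/2}$, gives in one line
\[
\affYS={\rm tr}\!\big(\mathbf{G}^{-1}\mathbf{A}^{\rm T}\mathbf{P}_{\set Y_2}\mathbf{A}\big)=d_1-{\rm tr}\!\big(\mathbf{G}^{-1}\mathbf{M}\big),
\]
and since $d_1-\oaffYS=(d_1-\affXS)(1-d_2/n)$, everything reduces to estimating ${\rm tr}(\mathbf{G}^{-1}\mathbf{M})$.

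Two probabilistic ingredients finish the job. (i) Because $\mathbf{A}$ is an $n\times d_1$ Gaussian matrix, the extreme-singular-value concentration gathered in Section~\ref{preparation} shows that, for $n>c_1(\varepsilon)d_2\ge c_1(\varepsilon)d_1$, the eigenvalues of $\mathbf{G}$ lie in $[1-\delta,1+\delta]$ --- hence $s_{\max}(\mathbf{G}^{-1}-\mathbf{I})\le\delta':=\delta/(1-\delta)$ and, in particular, $\mathbf{G}_{ii}=\|\bm\Phi\mathbf{u}_i\|^2\in[1-\delta,1+\delta]$ for every $i$ --- with probability at least $1-{\rm e}^{-cn}$, where $\delta$ is as small as we wish once $c_1(\varepsilon)$ is large. (ii) Writing ${\rm tr}(\mathbf{M})=\sum_{i=1}^{d_1}\|\bm\Phi\mathbf{u}_i\|^2\big(1-\aff^2(\Span(\bm\Phi\mathbf{u}_i),\set Y_2)\big)$ and invoking Lemma~\ref{lem-line-sub} on each of the $d_1$ pairs $(\Span(\mathbf{u}_i),\set X_2)$ with an arbitrarily small tolerance $\varepsilon'$ --- which gives $\aff^2(\Span(\bm\Phi\mathbf{u}_i),\set Y_2)=\lambda_i^2+\frac{d_2}{n}(1-\lambda_i^2)+O\!\big((1-\lambda_i^2)\varepsilon'\big)$ for each $i$ --- together with the diagonal bound from (i), a union bound over $i$ yields, with probability at least $1-d_1\,{\rm e}^{-c_2(\varepsilon')n}$,
\[
{\rm tr}(\mathbf{M})=(d_1-\affXS)\Big(1-\frac{d_2}{n}\Big)+R,\qquad |R|\le C(\varepsilon',\delta)\,(d_1-\affXS),
\]
where $C(\varepsilon',\delta)\to0$ as $\varepsilon',\delta\to0$. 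Feeding this into the identity of the first paragraph and using $\mathbf{M}\succeq\mathbf{0}$ to bound $|{\rm tr}((\mathbf{G}^{-1}-\mathbf{I})\mathbf{M})|\le\delta'\,{\rm tr}(\mathbf{M})\le\delta'(1+C)(d_1-\affXS)$, we obtain $\affYS=\oaffYS+R'$ with $|R'|\le\big(C+\delta'(1+C)\big)(d_1-\affXS)$. Choosing $\varepsilon'$ and $\delta$ small enough in terms of $\varepsilon$ forces this coefficient below $\varepsilon$; intersecting the $O(d_1)$ success events and absorbing the factor $d_1\le d_2<n/c_1(\varepsilon)$ into the exponent (by enlarging $c_1$) gives the probability $1-{\rm e}^{-c_2(\varepsilon)n}$.

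I expect the main obstacle to be the remainder estimate in (ii): every error term must be proportional to $d_1-\affXS=\sum_i(1-\lambda_i^2)$ rather than to $d_1$ or $d_2$. This is dictated by the theorem itself --- when $\set X_1\subseteq\set X_2$ one has $\affXS=d_1$, $\oaffYS=d_1$, and the claim collapses to the exact identity $\affYS=d_1$ --- so any bookkeeping that leaked a term of order $\delta d_1$ would be fatal. Working with the complementary quadratic form $\mathbf{M}=\mathbf{A}^{\rm T}\mathbf{P}_{\set Y_2^\perp}\mathbf{A}$ and the identity $\affYS=d_1-{\rm tr}(\mathbf{G}^{-1}\mathbf{M})$, rather than manipulating $\aff^2$ directly, is exactly what keeps every remainder proportional to $\sum_i(1-\lambda_i^2)$; the conditioning of $\mathbf{G}$ then enters only as a relative (multiplicative) perturbation of ${\rm tr}(\mathbf{M})$. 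A secondary technical nuisance is that the $d_1$ invocations of Lemma~\ref{lem-line-sub} all use the same random matrix $\bm\Phi$, so independence is unavailable and one must union-bound over them --- harmless, since it costs only a factor $d_1<n/c_1(\varepsilon)$ in the failure probability.
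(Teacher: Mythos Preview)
Your argument is correct and is a genuinely different---and noticeably shorter---route than the paper's. Both reduce to Lemma~\ref{lem-line-sub} column by column, but they differ in how they pass from the non-orthonormal projected basis $\mathbf{A}=\bm\Phi\mathbf{U}_1$ to a true orthonormal basis of $\set Y_1$. The paper column-normalizes $\mathbf{A}$ to $\bar{\mathbf{A}}_1$, Gram--Schmidt orthogonalizes to $\mathbf{V}_1$, splits $|\affYS-\oaffYS|$ by inserting $\|\mathbf{V}_2^{\rm T}\bar{\mathbf{A}}_1\|_{\rm F}^2$, and then spends the bulk of the proof (Lemma~\ref{vTV2-aTV2} together with Lemmas~\ref{lem-property8} and~\ref{lem-property7}) on a careful geometric analysis of the Gram--Schmidt step to show $\big|\|\mathbf{V}_2^{\rm T}\mathbf{v}_{1,k}\|^2-\|\mathbf{V}_2^{\rm T}\bar{\mathbf{a}}_{1,k}\|^2\big|\le(1-\lambda_k^2)\varepsilon_1$. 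Your trace identity $\affYS=d_1-\mathrm{tr}(\mathbf{G}^{-1}\mathbf{M})$ and the PSD bound $|\mathrm{tr}((\mathbf{G}^{-1}-\mathbf{I})\mathbf{M})|\le\|\mathbf{G}^{-1}-\mathbf{I}\|_{\rm op}\,\mathrm{tr}(\mathbf{M})$ replace all of that in one stroke: the conditioning of $\mathbf{G}$ enters only as a multiplicative perturbation of $\mathrm{tr}(\mathbf{M})$, which already carries the factor $\sum_i(1-\lambda_i^2)=d_1-\affXS$. You also do not need the principal bases of Remark~\ref{quasi-construct}; any orthonormal basis of $\set X_1$ works since only $\sum_i\lambda_i^2=\affXS$ is used. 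What the paper's approach buys in exchange for the extra work is finer per-direction information (bounds on each $\|\mathbf{V}_2^{\rm T}\mathbf{v}_{1,k}\|^2$ individually), which is not needed for the theorem as stated.
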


Because of its concision when evaluating the relative position in Definition \ref{defi-affinity}, we present the concentration by using affinity in Lemma \ref{lem-line-sub} and Theorem \ref{thm-sub}, which play essential role in RIP for subspaces.
Their proofs, which unfurl main text of this work, are postponed to  Section \ref{Appendixproof-thm-sub-lem-line-sub} and Section \ref{Appendixproof-thm-sub}, respectively.

Using Lemma \ref{lem-aff-to-dist} and Theorem \ref{thm-sub}, we derive an estimation of the projected distance.
Similarly we prove that the true projected distance will highly concentrate around this estimate with overwhelming probability.

\begin{cor}\label{thm-sub-dis}
Suppose $\set X_1, \set X_2\subset \mathbb{R}^N$ are two subspaces with dimension $d_1 \le d_2$, respectively.
We use
\begin{equation}\label{thm-sub-dis-change}
\odYS = \dXS-\frac{d_2}{n}\left(\dXS - \frac{d_2-d_1}{2}\right)
\end{equation}
as an estimation of the distance between two subspaces after random projection, $\set{X}_i \stackrel{\bm \Phi}{\longrightarrow} \set{Y}_i, i=1,2$.
Then there exist constants $c_1(\varepsilon),c_2(\varepsilon) > 0$ depending only on $\varepsilon$ such that for any $n > c_1(\varepsilon)d_2$,
\begin{equation}\label{thm-sub-dis-bound}
    \left|\dYS-\odYS\right| < \left(\dXS - \frac{d_2-d_1}{2}\right)\varepsilon
\end{equation}
holds with probability at least $1 - {\rm e}^{-c_2(\varepsilon)n}$.
\end{cor}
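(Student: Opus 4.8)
The plan is to obtain Corollary \ref{thm-sub-dis} as an immediate algebraic consequence of Theorem \ref{thm-sub}, using the affinity--distance identity of Lemma \ref{lem-aff-to-dist} together with the fact, noted after Definition \ref{defi-project-subspace}, that the dimensions $d_1,d_2$ are preserved by $\bm\Phi$ with probability one. First I would rewrite every quantity in terms of affinity. Lemma \ref{lem-aff-to-dist} applied to $\set X_1,\set X_2$ gives $\affXS = \tfrac{d_1+d_2}{2} - \dXS$, and the same lemma applied to $\set Y_1,\set Y_2$, valid on the probability-one event $\{\dim\set Y_i = \dim\set X_i\}$, gives $\affYS = \tfrac{d_1+d_2}{2} - \dYS$. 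Defining $\odYS := \tfrac{d_1+d_2}{2} - \oaffYS$, subtracting the two identities yields $|\dYS - \odYS| = |\affYS - \oaffYS|$, so the two-sided distance estimate \eqref{thm-sub-dis-bound} is literally the affinity estimate of Theorem \ref{thm-sub}.

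Second, I would check the two bookkeeping identities that put the statement in the claimed form. Substituting $\affXS = \tfrac{d_1+d_2}{2} - \dXS$ into the estimate \eqref{thm-sub-aff-change} and then into $\odYS = \tfrac{d_1+d_2}{2} - \oaffYS$ gives
$$\odYS = \dXS - \frac{d_2}{n}\left(\dXS - \frac{d_2-d_1}{2}\right),$$
which is precisely \eqref{thm-sub-dis-change}. Likewise $d_1 - \affXS = d_1 - \tfrac{d_1+d_2}{2} + \dXS = \dXS - \tfrac{d_2-d_1}{2}$, so the radius $(d_1 - \affXS)\varepsilon$ appearing in Theorem \ref{thm-sub} coincides with the radius $\bigl(\dXS - \tfrac{d_2-d_1}{2}\bigr)\varepsilon$ of \eqref{thm-sub-dis-bound}.

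Finally I would conclude by transferring the probability statement: take $c_1(\varepsilon),c_2(\varepsilon)$ to be the constants furnished by Theorem \ref{thm-sub}; then for any $n > c_1(\varepsilon)d_2$, on the intersection of the dimension-preservation event (probability one) with the event of probability at least $1-\mathrm{e}^{-c_2(\varepsilon)n}$ from Theorem \ref{thm-sub}, we have $|\dYS - \odYS| = |\affYS - \oaffYS| < (d_1 - \affXS)\varepsilon = \bigl(\dXS - \tfrac{d_2-d_1}{2}\bigr)\varepsilon$. I do not expect a genuine obstacle here: the only point deserving a sentence of care is the justification for invoking Lemma \ref{lem-aff-to-dist} on the projected subspaces $\set Y_i$, which rests on the almost-sure preservation of dimension under Gaussian projection; all of the real probabilistic and geometric difficulty is already absorbed into the proof of Theorem \ref{thm-sub}.
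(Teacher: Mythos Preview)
Your proposal is correct and follows essentially the same approach as the paper: both derive the corollary directly from Theorem~\ref{thm-sub} by applying the affinity--distance identity of Lemma~\ref{lem-aff-to-dist} to obtain $|\dYS-\odYS|=|\affYS-\oaffYS|$ and $\dXS-\tfrac{d_2-d_1}{2}=d_1-\affXS$, so that \eqref{thm-sub-dis-bound} is literally \eqref{thm-sub-aff-bound}. Your write-up is in fact more explicit than the paper's, in particular your remark that Lemma~\ref{lem-aff-to-dist} applies to $\set Y_1,\set Y_2$ only on the probability-one event that dimensions are preserved is a detail the paper leaves implicit.
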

\begin{proof}
Combining \eqref{thm-sub-aff-change} and \eqref{thm-sub-dis-change} by using Lemma \ref{lem-aff-to-dist}, we readily get that $|\dYS-\odYS| = |\affYS-\oaffYS|$.
Using Lemma \ref{lem-aff-to-dist} again, we have
$
\dXS - \frac{d_2-d_1}{2}=d_1-\affXS.
$
Therefore, \eqref{thm-sub-dis-bound} is identical to \eqref{thm-sub-aff-bound}.
\end{proof}

\subsection{Proof of Theorem \ref{thm-rip}}
\label{Appendixproof-thm-sub-multi}

Now we are ready to prove the RIP of Gaussian random matrix for projecting a finite set of subspaces using the results above.

Without loss of generality, we assume that $d_i\le d_j\le d$.
According to Corollary \ref{thm-sub-dis}, there exist constants $c_{1,1}$, $c_{2,1}>0$ depending only on $\varepsilon$ such that for any $n>c_{1,1}d_j$,
\begin{align*}
D^2({\set X}_i,{\set X}_i)-\left(\frac{d_j}{n}+\frac{\varepsilon}{2}\right)&\left(D^2({\set X}_i,{\set X}_i)-\frac{d_j-d_i}{2}\right)<D^2({\set Y}_i,{\set Y}_i)\\
&<D^2({\set X}_i,{\set X}_i)+\left(-\frac{d_j}{n}+\frac{\varepsilon}{2}\right)\left(D^2({\set X}_i,{\set X}_i)-\frac{d_j-d_i}{2}\right).
\end{align*}
holds with probability at least $1-{\rm e}^{-c_{2,1}n}$.
When $n>2d/\varepsilon$, we have ${d_j}/{n}\le{d}/{n}<\varepsilon/2$.
In this case, we have both
\begin{align}
D^2({\set Y}_i,{\set Y}_i)&>D^2({\set X}_i,{\set X}_i)-\left(\frac{d_j}{n}+\frac{\varepsilon}{2}\right)D^2({\set X}_i,{\set X}_i)\nonumber\\
&=\left(1-\frac{d_j}{n}-\frac{\varepsilon}{2}\right)D^2({\set X}_i,{\set X}_i)>\left(1-\varepsilon\right)D^2({\set X}_i,{\set X}_i),\label{eq-mainproof-1}\\
D^2({\set Y}_i,{\set Y}_i)&<D^2({\set X}_i,{\set X}_i)+\left(-\frac{d_j}{n}+\frac{\varepsilon}{2}\right)D^2({\set X}_i,{\set X}_i)\nonumber\\
&=\left(1-\frac{d_j}{n}+\frac{\varepsilon}{2}\right)D^2({\set X}_i,{\set X}_i)<\left(1+\varepsilon\right)D^2({\set X}_i,{\set X}_i),\label{eq-mainproof-2}
\end{align}
hold with probability at least $1-{\rm e}^{-c_{2,1}n}$.
Note that \eqref{eq-mainproof-1} and \eqref{eq-mainproof-2} hold for any $1\le i<j\le L$.
Then the probability is at least $1-\frac{L(L-1)}{2}{\rm e}^{-c_{2,1}n}$.
If $n>\frac{1}{c_{2,1}}\ln \frac{L(L-1)}{2}$, there exists constant $c_2$ depending only on $\varepsilon$, such that
$
\frac{L(L-1)}{2}{\rm e}^{-c_{2,1}n} < {\rm e}^{-c_{2}n}.
$
Take $c_1:=\max\{c_{1,1},\frac{2}{\varepsilon},\frac{2}{c_{2,1}}\}$, then when $n>c_1\max\{d,\ln L\}$, conditions $n>c_{1,1}d$, $n>2d/\varepsilon$, and $n > \frac{1}{c_{2,1}}\ln \frac{L(L-1)}{2}$ that are required above are all satisfied, the probability is at least $1-{\rm e}^{-c_2n}$ with $c_2>0$.
Then we reach the final conclusion.

\section{Concentration Inequalities for Gaussian Distribution}
\label{preparation}
Before proving the main results, we first introduce some useful concentration inequalities for Gaussian distribution.
Most of them are proved using the following lemma, which provides a strict estimation of the singular values of Gaussian random matrix.

\begin{lem}\cite{Davidson2001Local}\label{T}
Let ${\bf A}$ be an $N \times n$ matrix whose elements $a_{ij}$ are independent Gaussian random variables. Then for every $t \ge 0$, one has
\begin{align}\label{T-probability1}
\mathbb{P}\left(s_{\max}({\bf A}) \ge \sqrt{N} + \sqrt{n} + t\right) \le {\rm e}^{-\frac{t^2}2},
\end{align}
and
\begin{align}\label{T-probability2}
\mathbb{P}\left(s_{\min}({\bf A}) \le \sqrt{N} - \sqrt{n} - t\right) \le {\rm e}^{-\frac{t^2}2}.
\end{align}
\end{lem}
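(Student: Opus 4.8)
The statement to prove is Lemma \ref{T}, which is a classical result on the extreme singular values of a Gaussian matrix, usually attributed to Davidson and Szarek. Since the excerpt cites \cite{Davidson2001Local}, a full self-contained proof may not be what the authors give, but here is the approach I would take.

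\textbf{Plan.}
The plan is to reduce both tail bounds to the concentration of Lipschitz functions of a standard Gaussian vector. First I would observe that, viewing the $N\times n$ matrix ${\bf A}$ as a point in $\mathbb{R}^{Nn}$ with the i.i.d.\ standard Gaussian measure, the maps ${\bf A}\mapsto s_{\max}({\bf A})=\|{\bf A}\|_{2\to 2}$ and ${\bf A}\mapsto s_{\min}({\bf A})$ are both $1$-Lipschitz with respect to the Frobenius (Euclidean) norm on $\mathbb{R}^{Nn}$. For $s_{\max}$ this is immediate from the triangle inequality for the operator norm together with $\|{\bf B}\|_{2\to2}\le\|{\bf B}\|_{\rm F}$; for $s_{\min}$ one uses that $|s_{\min}({\bf A})-s_{\min}({\bf A}')|\le s_{\max}({\bf A}-{\bf A}')\le\|{\bf A}-{\bf A}'\|_{\rm F}$, which follows from Weyl's perturbation inequality for singular values. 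Then the Gaussian concentration inequality (the Borell--Tsirelson--Ibragimov--Sudakov inequality) gives, for any $1$-Lipschitz $f$,
\begin{equation*}
\mathbb{P}\big(f({\bf A})\ge \mathbb{E}f({\bf A})+t\big)\le {\rm e}^{-t^2/2},\qquad
\mathbb{P}\big(f({\bf A})\le \mathbb{E}f({\bf A})-t\big)\le {\rm e}^{-t^2/2}.
\end{equation*}

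\textbf{Key steps.}
With concentration in hand, the remaining work is to bound the expectations: to show $\mathbb{E}\,s_{\max}({\bf A})\le\sqrt N+\sqrt n$ and $\mathbb{E}\,s_{\min}({\bf A})\ge\sqrt N-\sqrt n$, after which \eqref{T-probability1} and \eqref{T-probability2} follow by plugging these bounds into the two tail inequalities above. The cleanest route to the expectation bounds is Gordon's comparison inequality (a min-max version of Slepian's lemma): for the Gaussian process indexed by unit vectors ${\bf u}\in S^{N-1}$, ${\bf v}\in S^{n-1}$ given by $X_{{\bf u},{\bf v}}={\bf u}^{\rm T}{\bf A}{\bf v}$, one compares with the simpler process $Y_{{\bf u},{\bf v}}={\bf g}^{\rm T}{\bf u}+{\bf h}^{\rm T}{\bf v}$ where ${\bf g}\sim\mathcal N(0,I_N)$, ${\bf h}\sim\mathcal N(0,I_n)$ are independent. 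Checking the covariance hypotheses of Gordon's theorem, one obtains
\begin{equation*}
\mathbb{E}\,s_{\min}({\bf A})=\mathbb{E}\min_{\bf v}\max_{\bf u}X_{{\bf u},{\bf v}}\ \ge\ \mathbb{E}\min_{\bf v}\max_{\bf u}Y_{{\bf u},{\bf v}}=\mathbb{E}\|{\bf g}\|-\mathbb{E}\|{\bf h}\|\ \ge\ \sqrt N-\sqrt n,
\end{equation*}
using $\mathbb E\|{\bf g}\|\ge \sqrt{N}\big/\sqrt{\mathbb E\|{\bf g}\|^{-2}}$-type lower bounds, and more simply $\mathbb E\|{\bf g}\|^2=N$ gives the comparison cleanly after a standard concavity estimate; the analogous Slepian-type argument (or just $\mathbb E\,s_{\max}({\bf A})\le\mathbb E\|{\bf g}\|+\mathbb E\|{\bf h}\|\le\sqrt N+\sqrt n$ via the dual comparison) yields the upper bound on $\mathbb E\,s_{\max}$. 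Assembling: apply Gaussian concentration to $f=s_{\max}$ with center $\le\sqrt N+\sqrt n$ to get \eqref{T-probability1}, and to $f=s_{\min}$ with center $\ge\sqrt N-\sqrt n$ to get \eqref{T-probability2}.

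\textbf{Main obstacle.}
The Lipschitz reduction and the concentration step are routine; the genuinely substantive ingredient is Gordon's min-max comparison inequality and the verification of its covariance conditions, which is what makes the constants come out exactly as $\sqrt N\pm\sqrt n$ with no slack. An alternative that avoids Gordon's theorem is to bound $\mathbb E\,s_{\max}$ by an $\varepsilon$-net argument over $S^{N-1}\times S^{n-1}$, but that route loses the sharp constant and would only give $s_{\max}\lesssim\sqrt N+\sqrt n$ up to a multiplicative factor, so for the precise statement as written the comparison-inequality approach (or a direct appeal to \cite{Davidson2001Local}) is essential. Given that the paper cites the result, I expect the authors simply to quote \cite{Davidson2001Local}; the sketch above is the proof I would reconstruct if a self-contained argument were required.
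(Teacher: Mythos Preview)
Your anticipation is correct: the paper does not prove Lemma \ref{T} at all but simply cites it from \cite{Davidson2001Local} and uses it as a black box. Your reconstructed argument---Gaussian concentration for the $1$-Lipschitz maps $s_{\max}$ and $s_{\min}$, combined with Gordon's min-max comparison to control the expectations---is indeed the standard Davidson--Szarek proof, so there is nothing to compare.

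One minor remark on your sketch: the step $\mathbb{E}\|{\bf g}\|-\mathbb{E}\|{\bf h}\|\ge\sqrt N-\sqrt n$ is not quite as immediate as you suggest, since Jensen only gives $\mathbb{E}\|{\bf g}\|\le\sqrt N$, which goes the wrong way for the lower bound. The clean route is to note that Gordon's inequality actually yields $\mathbb{E}\,s_{\min}({\bf A})\ge \gamma_N-\gamma_n$ with $\gamma_k=\mathbb{E}\|{\bf g}_k\|=\sqrt{2}\,\Gamma((k+1)/2)/\Gamma(k/2)$, and then one checks via properties of the Gamma function (specifically that $k\mapsto k-\gamma_k^2$ is increasing) that $\gamma_N-\gamma_n\ge\sqrt N-\sqrt n$. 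Your parenthetical about ``$\mathbb{E}\|{\bf g}\|^{-2}$-type lower bounds'' hints that you are aware there is something to do here; for a fully self-contained argument this is the place that needs a line or two more care. Since the paper simply imports the result, this does not affect the comparison.
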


Based on Lemma \ref{T}, we are ready to prove some useful lemmas that will be directly used to prove our theories on RIP of random projection for subspaces. Before doing that, we first define standard Gaussian random matrix and verify that the function satisfying certain condition can be written as a single exponential function.

\begin{defi}\label{defi-guassian}
A Gaussian random matrix (or vector) has \emph{i.i.d.} zero-mean Gaussian random entries.
A standard Gaussian random matrix ${\bf A} \in\mathbb{R}^{n\times N}$ has \emph{i.i.d.} zero-mean Gaussian random entries with variance ${1}/{n}$.
Each column of $\bf A$ is a standard Gaussian random vector.

\end{defi}

\begin{lem}\label{exp-formal}
Given
\begin{align}
f(\varepsilon, n, \tau) = \frac{1}{K}\sum_{k = 1}^K a_k(\varepsilon, n){\rm e}^{-g_k(\varepsilon, n, \tau)},
\end{align}
if for all $k$, it holds that
\begin{align}
h_k(\varepsilon) &:= \lim_{\tau \to 0}\lim_{n \to \infty} \frac{g_k(\varepsilon, n, \tau)}n > 0,\label{exp-formal-condition1}\\
b_k(\varepsilon) &:=\lim_{n \to \infty} \frac{\ln a_k(\varepsilon, n)}{n} < h_k(\varepsilon),\label{exp-formal-condition2}
\end{align}
then there exist universal constants $n_0, c_1 > 0,$ and $c_2 > 0$ depending only on $\varepsilon$, such that when $n > n_0, \tau < c_1$, it satisfies that
$
f(\varepsilon, n, \tau) < {\rm e}^{-c_2 n}.
$
\end{lem}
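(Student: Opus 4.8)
The plan is to bound the average by its largest summand, rewrite that summand as a single exponential, and then read off the asymptotic slope of its exponent from the two hypotheses. Since a mean of nonnegative numbers never exceeds their maximum, $f(\varepsilon,n,\tau)\le\max_{1\le k\le K}a_k(\varepsilon,n)\,{\rm e}^{-g_k(\varepsilon,n,\tau)}$, so it suffices to show $a_k(\varepsilon,n)\,{\rm e}^{-g_k(\varepsilon,n,\tau)}<{\rm e}^{-c_2n}$ for every $k$ once $n$ is large and $\tau$ small. Writing this term as ${\rm e}^{\ln a_k(\varepsilon,n)-g_k(\varepsilon,n,\tau)}$ and dividing the exponent by $n$, the quantity to keep above a fixed positive constant is $\frac{g_k(\varepsilon,n,\tau)}{n}-\frac{\ln a_k(\varepsilon,n)}{n}$; by \eqref{exp-formal-condition1}--\eqref{exp-formal-condition2} its double limit ($n\to\infty$, then $\tau\to0$) equals $h_k(\varepsilon)-b_k(\varepsilon)>0$. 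I would therefore set $c_2:=\tfrac14\min_{1\le k\le K}\bigl(h_k(\varepsilon)-b_k(\varepsilon)\bigr)$, which is strictly positive because there are finitely many strictly positive terms and which depends only on $\varepsilon$.

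Next I would fix the two thresholds in turn. From \eqref{exp-formal-condition2}, for each $k$ there is $n_{1,k}$ with $\frac{\ln a_k(\varepsilon,n)}{n}<b_k(\varepsilon)+c_2$ for $n>n_{1,k}$. Writing $\tilde h_k(\varepsilon,\tau):=\lim_{n\to\infty}\frac{g_k(\varepsilon,n,\tau)}{n}$ (the inner limit is presumed to exist in \eqref{exp-formal-condition1}), the outer limit gives a $c_{1,k}>0$ with $\tilde h_k(\varepsilon,\tau)>h_k(\varepsilon)-c_2$ for $0<\tau<c_{1,k}$, and the inner limit gives, for each such $\tau$, a threshold past which $\frac{g_k(\varepsilon,n,\tau)}{n}>\tilde h_k(\varepsilon,\tau)-c_2$. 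Chaining these three inequalities, for $n$ and $\tau$ past the respective thresholds one gets $\frac{g_k}{n}-\frac{\ln a_k}{n}>(h_k(\varepsilon)-2c_2)-(b_k(\varepsilon)+c_2)=(h_k(\varepsilon)-b_k(\varepsilon))-3c_2\ge c_2$, i.e. $a_k(\varepsilon,n)\,{\rm e}^{-g_k(\varepsilon,n,\tau)}<{\rm e}^{-c_2n}$. Taking $c_1:=\min_k c_{1,k}$ and $n_0$ the maximum over $k$ of all the $n$-thresholds would then yield the bound for each term, hence for $f$.

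The one genuine subtlety, and the step I expect to be the real obstacle, is the order of limits in \eqref{exp-formal-condition1}: a priori the inner convergence $\frac{g_k(\varepsilon,n,\tau)}{n}\to\tilde h_k(\varepsilon,\tau)$ occurs at a rate that may depend on $\tau$, so the $n$-threshold produced above could blow up as $\tau\downarrow0$ and fail to be uniform over $\{\tau<c_1\}$. I would close this gap in one of two ways. Either (i) exploit the fact that in every application of this lemma $g_k$ has the explicit form $g_k(\varepsilon,n,\tau)=n\,\gamma_k(\varepsilon,\tau)+r_k(\varepsilon,n,\tau)$ with $r_k(\varepsilon,n,\tau)/n\to0$ uniformly for $\tau$ near $0$ (the remainder coming from the square-root terms supplied by Lemma \ref{T}), which makes the inner limit uniform in $\tau$; or (ii) observe that when $\tau$ enters as slack in a tail bound, $g_k(\varepsilon,n,\cdot)$ is monotone near $\tau=0$, so it is enough to control $g_k$ at the single endpoint $\tau=c_1$, again removing the $\tau$-dependence of $n_0$. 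Once the inner limit is made uniform the bookkeeping of the previous paragraph goes through verbatim, and the factor $1/K$, already absorbed by passing to the maximum, causes no difficulty.
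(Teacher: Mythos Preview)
Your approach is essentially the same as the paper's: rewrite each summand as a single exponential $e^{-n(g_k/n-\ln a_k/n)}$, use the two limit hypotheses to pin $g_k/n-\ln a_k/n$ above a fixed fraction of $h_k-b_k$ for large $n$ and small $\tau$, then take the worst of the finitely many thresholds and rates. The paper uses $c_2=(h-b)/3$ per term and then minimizes over $k$ (versus your $\tfrac14\min_k(h_k-b_k)$), and it simply invokes ``the definition of limitation'' to produce $n_0,c_1$ without addressing the $\tau$-uniformity of the inner limit you flag --- so your write-up is, if anything, more careful than the paper's on that point.
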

\begin{proof}
The proof is postponed to Appendix \ref{proof-exp-formal}.
\end{proof}

\begin{rem}\label{exp-summation}
Lemma \ref{exp-formal} illustrates that the summation of finite multiple exponential decay functions can always be bounded by a single exponential function.
\end{rem}

The following lemma illustrates that the norm of standard Gaussian random vector concentrates around $1$ with high probability, especially when the dimensionality is high.

\begin{lem}\label{lem2}
Assume that $\va\in\mathbb{R}^n$ is a standard Gaussian random vector.
For any $\varepsilon>0$, we have
\begin{align}\label{lem-eq1}
\mathbb{P}\left(\left|\|\va\|^2-1\right|>\varepsilon\right)<{\rm e}^{-c(\varepsilon)n}
\end{align}
hold for $n>n_0$, where $n_0$ and $c$ are constants dependent on $\varepsilon$.
\end{lem}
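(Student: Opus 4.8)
The plan is to invoke the sharp singular-value bound of Lemma \ref{T} applied to the degenerate case of a single column. If $\va\in\mathbb{R}^n$ is a standard Gaussian vector in the sense of Definition \ref{defi-guassian}, then $\sqrt{n}\,\va$ has i.i.d.\ $\mathcal{N}(0,1)$ entries, so it is an $n\times 1$ matrix of independent standard Gaussians. Taking $N:=n$ and $n:=1$ in Lemma \ref{T} (so the matrix is $n\times 1$), both $s_{\max}$ and $s_{\min}$ of this matrix equal $\|\sqrt{n}\,\va\| = \sqrt{n}\,\|\va\|$. Thus \eqref{T-probability1} and \eqref{T-probability2} give, for every $t\ge 0$,
\begin{align*}
\mathbb{P}\left(\sqrt{n}\,\|\va\| \ge \sqrt{n} + 1 + t\right) &\le {\rm e}^{-t^2/2},\\
\mathbb{P}\left(\sqrt{n}\,\|\va\| \le \sqrt{n} - 1 - t\right) &\le {\rm e}^{-t^2/2}.
\end{align*}
Dividing through by $\sqrt{n}$ shows that $\|\va\|$ lies in the interval $\left(1 - (1+t)/\sqrt{n},\, 1 + (1+t)/\sqrt{n}\right)$ except on an event of probability at most $2{\rm e}^{-t^2/2}$.

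Next I would convert the deviation of $\|\va\|$ into a deviation of $\|\va\|^2$. Choosing $t = \alpha\sqrt{n}$ for a small constant $\alpha = \alpha(\varepsilon)>0$ to be fixed, we get that with probability at least $1 - 2{\rm e}^{-\alpha^2 n/2}$, $\big|\|\va\| - 1\big| \le \alpha + 1/\sqrt{n}$, and hence (for $n$ large enough that $1/\sqrt{n}\le\alpha$) $\big|\|\va\| - 1\big| \le 2\alpha$, which gives $\big|\|\va\|^2 - 1\big| = \big|\|\va\|-1\big|\cdot\big|\|\va\|+1\big| \le 2\alpha(2 + 2\alpha) \le 8\alpha$ when $\alpha\le 1$. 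So it suffices to pick $\alpha = \alpha(\varepsilon) := \min\{\varepsilon/8,\,1\}$; then $\big|\|\va\|^2 - 1\big| \le \varepsilon$ is violated only on an event of probability at most $2{\rm e}^{-\alpha(\varepsilon)^2 n/2}$. Finally, absorbing the constant $2$ into the exponent — valid for $n$ past some threshold $n_0(\varepsilon)$, exactly as formalized in Lemma \ref{exp-formal} / Remark \ref{exp-summation} — yields the bound ${\rm e}^{-c(\varepsilon)n}$ with $c(\varepsilon) := \alpha(\varepsilon)^2/4$, say, which establishes \eqref{lem-eq1}.

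There is no serious obstacle here; the lemma is essentially a packaging of the Davidson--Szarek estimate in the rank-one case. The only mildly delicate points are bookkeeping: making sure the two-sided tail is handled on both ends, turning the additive $1/\sqrt{n}$ slack into something negligible against the constant $\alpha$ by taking $n$ large, and squaring the deviation without losing the exponential rate (which costs only a constant factor in $c(\varepsilon)$). An alternative route, which avoids Lemma \ref{T} altogether, is to observe that $n\|\va\|^2 \sim \chi^2_n$ and apply a standard Laplace-transform (Chernoff) bound on chi-square tails; I would mention this as a remark but carry out the argument via Lemma \ref{T} for consistency with the rest of Section \ref{preparation}.
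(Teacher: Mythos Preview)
Your proposal is correct and follows essentially the same route as the paper: apply Lemma~\ref{T} to the $n\times 1$ matrix $\sqrt{n}\,\va$, obtain a two-sided bound on $\|\va\|$, convert it to a bound on $\|\va\|^2-1$, and then invoke Lemma~\ref{exp-formal} to absorb the constant factor into the exponent. The only cosmetic difference is that the paper squares the inequality $\sqrt{n}-1-t\le \sqrt{n}\|\va\|\le \sqrt{n}+1+t$ directly and solves $n\varepsilon = 2\sqrt{n}(1+t)+(1+t)^2$ for $t$, whereas you factor $\|\va\|^2-1=(\|\va\|-1)(\|\va\|+1)$ and fix $t=\alpha\sqrt{n}$ with $\alpha=\varepsilon/8$; both yield the same exponential rate up to constants.
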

\begin{proof}
The proof is postponed to Appendix \ref{proof-lem2}.
\end{proof}

Furthermore, Corollary \ref{cor-lem2}  generalizes Lemma \ref{lem2} to case when we project standard Gaussian random vector by orthonormal matrix.

\begin{cor}\label{cor-lem2}
Let $\va \in\mathbb{R}^n$ be a standard Gaussian random vector.
For any given orthonormal matrix ${\bf V}=\left[{\bf v}_1,\cdots,{\bf v}_d\right]\in\mathbb{R}^{n\times d}$ and $\varepsilon>0$, we have
\begin{equation}\label{eq-p2epsilon1}
    \mathbb{P}\left(\left|\|{\bf V}^{\rm T}{\bf a}\|^2 - \frac{d}{n}\right|>\varepsilon\right)<{\rm e}^{-c_2(\varepsilon)n}
\end{equation}
hold for $n>c_1d$, where $c_1$, $c_2$ are constants dependent on $\varepsilon$.
\end{cor}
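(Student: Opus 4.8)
The plan is to reduce the claim to the norm of a single Gaussian vector in $\mathbb{R}^d$ and then apply the singular-value estimate of Lemma \ref{T}. The key observation is that, since the columns of ${\bf V}$ are orthonormal, ${\bf V}^{\rm T}\va$ is a $d$-dimensional Gaussian vector with covariance $\frac1n{\bf V}^{\rm T}{\bf V}=\frac1n{\bf I}_d$; equivalently, the $d\times 1$ matrix $\sqrt n\,{\bf V}^{\rm T}\va$ has i.i.d.\ $\mathcal{N}(0,1)$ entries, and $\sqrt n\,\|{\bf V}^{\rm T}\va\|$ is its (unique) singular value. A second, purely bookkeeping observation is that if we insist on taking $c_1\ge 1/\varepsilon$, then $n>c_1 d$ forces $d/n<\varepsilon$, so the lower bound $\|{\bf V}^{\rm T}\va\|^2>\frac dn-\varepsilon$ holds trivially (the left side is $\ge 0$ and the right side is $<0$); hence only the upper tail needs genuine work.

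First I would apply the $s_{\max}$ inequality of Lemma \ref{T} to the $d\times 1$ matrix $\sqrt n\,{\bf V}^{\rm T}\va$ with the $n$-dependent choice $t=\delta\sqrt n$, where $\delta=\delta(\varepsilon)>0$ will be fixed below. This gives
$$
\mathbb{P}\!\left(\|{\bf V}^{\rm T}\va\|^2 \ge \frac{(\sqrt d+1+\delta\sqrt n)^2}{n}\right)\le {\rm e}^{-\delta^2 n/2}.
$$
Expanding the right-hand bound, its gap from $\frac dn$ equals $\frac{2\sqrt d}{n}+2\delta\sqrt{d/n}+\frac1n+\frac{2\delta}{\sqrt n}+\delta^2$. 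Under $n>c_1 d$ (so $d/n<1/c_1$) and $n>c_1$ (which follows since $d\ge 1$), the dominant part of this gap is $2\delta\sqrt{d/n}+\delta^2\le 2\delta/\sqrt{c_1}+\delta^2$, and the remaining terms are $O(1/\sqrt n)$ and hence $o(1)$ as $c_1\to\infty$.

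Next I would choose the constants in the right order: pick $\delta$ small enough that $\delta^2<\varepsilon/4$, then $c_1\ge\max\{1/\varepsilon,\dots\}$ large enough that $2\delta/\sqrt{c_1}<\varepsilon/4$ and that the $O(1/\sqrt n)$ remainders (controlled because $n>c_1$) are $<\varepsilon/2$. On the complement of the bad event this yields $\|{\bf V}^{\rm T}\va\|^2<\frac dn+\varepsilon$, which together with the free lower bound gives $\left|\|{\bf V}^{\rm T}\va\|^2-\frac dn\right|<\varepsilon$. Finally the failure probability ${\rm e}^{-\delta^2 n/2}$ is absorbed into a single ${\rm e}^{-c_2(\varepsilon)n}$ for $n$ past a constant threshold (take e.g.\ $c_2(\varepsilon)=\delta^2/4$, enlarging $c_1$ if needed, or invoke Lemma \ref{exp-formal}); since $d\ge 1$, the condition $n>c_1 d$ implies $n>c_1$ and thus covers every threshold encountered.

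The one point that is not routine — and the reason one cannot simply quote Lemma \ref{lem2} applied in dimension $d$ — is that such an application would only produce a bound of the form ${\rm e}^{-c(\varepsilon)d}$, which fails to decay in $n$ when $d$ is small (already $d=1$ breaks it). Getting the required ${\rm e}^{-c_2(\varepsilon)n}$ rate uniformly in $d$ is exactly what forces the $n$-scaled choice $t=\delta\sqrt n$ in Lemma \ref{T}; the ``lower bound is free'' remark then sidesteps the fact that the $s_{\min}$ tail of Lemma \ref{T} becomes vacuous for that choice of $t$, and everything else is estimating lower-order terms and ordering the constant choices.
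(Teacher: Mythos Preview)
Your argument is correct and follows essentially the same route as the paper: both recognize that ${\bf V}^{\rm T}\va$ (up to scaling) is a $d$-dimensional Gaussian vector and apply Lemma~\ref{T} with a threshold scaled like $\sqrt{n}$ so that the failure probability decays as ${\rm e}^{-c(\varepsilon)n}$ rather than ${\rm e}^{-c(\varepsilon)d}$. The only cosmetic difference is that the paper reuses the two-sided bound \eqref{eq33} from the proof of Lemma~\ref{lem2} with $\varepsilon$ replaced by $n\varepsilon/d$ (handling both tails at once), whereas you apply Lemma~\ref{T} directly and dispose of the lower tail for free via $c_1\ge 1/\varepsilon$; your shortcut is a mild simplification but not a different idea.
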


\begin{proof}
The proof is postponed to Appendix \ref{proof-cor-lem2}.
\end{proof}

Corollary \ref{unit-singular-value} extends Lemma \ref{T} to column-normalized standard Gaussian random matrix.
It reveals that a column-normalized Gaussian random matrix is a high-quality approximation to an orthonormal matrix,
because all its singular values are very close to $1$.

\begin{cor}\label{unit-singular-value}
Let ${\bf A}=\left[{\bf a}_1,\cdots,{\bf a}_k\right] \in \mathbb{R}^{n\times k}$ be a standard Gaussian random matrix.
Each column of $\bar{\bf A}$ is normalized from the corresponding column of ${\bf A}$, that is
$$
\bar{\bf A}=\left[\frac{{\bf a}_1}{\left\|{\bf a}_1\right\|},\cdots,\frac{{\bf a}_k}{\left\|{\bf a}_k\right\|}\right].
$$
Then we can get the bound of the minimum and maximum of the singular value of $\bar{\bf A}$ as below
\begin{align}
&\mathbb{P}\left(s_{\min}^2\left(\bar{\bf A}\right)<1-\varepsilon\right)<{\rm e}^{-c_{2,1}(\varepsilon)n},\quad \forall n>c_{1,1}k, \label{eq27}\\
&\mathbb{P}\left(s_{\max}^2\left(\bar{\bf A}\right)>1+\varepsilon\right)< {\rm e}^{-c_{2,2}(\varepsilon)n}, \quad \forall n>c_{1,2}k,\label{eq28}
\end{align}
where
$c_{1,1}$ and $c_{2,1}$, $c_{1,2}$ and $c_{2,2}$ are constants dependent on $\varepsilon$ in \eqref{eq27} and \eqref{eq28}, respectively.
\end{cor}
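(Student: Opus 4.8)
The plan is to write the column-normalized matrix as a product $\bar{\bf A} = {\bf A}{\bf D}$, where ${\bf D} = {\rm diag}\bigl(\|{\bf a}_1\|^{-1},\ldots,\|{\bf a}_k\|^{-1}\bigr)$, and to control the two factors separately. For ${\bf A}$, observe that $\sqrt{n}\,{\bf A}$ is an $n\times k$ matrix with i.i.d.\ standard Gaussian entries, so Lemma \ref{T} applied with $t = \delta\sqrt{n}$ gives
$$
s_{\max}({\bf A}) \le 1 + \sqrt{k/n} + \delta, \qquad s_{\min}({\bf A}) \ge 1 - \sqrt{k/n} - \delta,
$$
each on an event of probability at least $1 - {\rm e}^{-\delta^2 n/2}$. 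For ${\bf D}$, Lemma \ref{lem2} controls $\bigl|\|{\bf a}_i\|^2 - 1\bigr|$ for a single column; a union bound over the $k$ columns shows that with probability at least $1 - k\,{\rm e}^{-c(\eta)n}$ (valid once $n$ exceeds the threshold $n_0(\eta)$ of Lemma \ref{lem2}) every diagonal entry of ${\bf D}$ lies in $\bigl((1+\eta)^{-1/2}, (1-\eta)^{-1/2}\bigr)$, so that $s_{\min}({\bf D}) \ge (1+\eta)^{-1/2}$ and $s_{\max}({\bf D}) \le (1-\eta)^{-1/2}$.

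Next I would combine these estimates using the submultiplicativity bounds $s_{\max}({\bf A}{\bf D}) \le s_{\max}({\bf A})\,s_{\max}({\bf D})$ and $s_{\min}({\bf A}{\bf D}) \ge s_{\min}({\bf A})\,s_{\min}({\bf D})$, the latter holding because $\|{\bf A}{\bf y}\| \ge s_{\min}({\bf A})\|{\bf y}\|$ for every ${\bf y}\in\mathbb{R}^k$ (recall $n > k$). On the intersection of the good events this yields
$$
s_{\max}^2(\bar{\bf A}) \le \frac{\bigl(1 + \sqrt{k/n} + \delta\bigr)^2}{1-\eta}, \qquad s_{\min}^2(\bar{\bf A}) \ge \frac{\bigl(1 - \sqrt{k/n} - \delta\bigr)^2}{1+\eta}.
$$
Given $\varepsilon$, I would first pick $\eta = \eta(\varepsilon)$ and $\delta = \delta(\varepsilon)$ small, then choose $c_{1,1}$ and $c_{1,2}$ large enough that $\sqrt{k/n} \le 1/\sqrt{c_{1,i}}$ is correspondingly small; a short computation then shows the right-hand sides above are at most $1+\varepsilon$ (resp.\ at least $1-\varepsilon$), which is exactly \eqref{eq28} (resp.\ \eqref{eq27}).

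Finally I would account for the failure probability, which is at most ${\rm e}^{-\delta^2 n/2} + k\,{\rm e}^{-c(\eta)n}$. Since we work in the regime $n > c_{1,i}k \ge c_{1,i}\ln k$, the second term obeys $k\,{\rm e}^{-c(\eta)n} = {\rm e}^{-c(\eta)n + \ln k} \le {\rm e}^{-(c(\eta) - 1/c_{1,i})n}$, so enlarging $c_{1,i}$ once more (and beyond $n_0(\eta)$) bounds the whole sum by a single ${\rm e}^{-c_{2,i}(\varepsilon)n}$; alternatively one may invoke Lemma \ref{exp-formal} after this reduction. I expect the only genuinely delicate point to be the bookkeeping of the union bound over the $k$ columns: it is what forces the stronger requirement $n \gtrsim k$ rather than merely $n \gtrsim \ln k$, and one must verify that it does not spoil the exponential decay rate, which is precisely where the hypothesis $n > c_{1,i}k$ is used.
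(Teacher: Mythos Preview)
Your proposal is correct and is essentially identical to the paper's proof: the paper bounds $s_{\min}^2(\bar{\bf A}) \ge s_{\min}^2({\bf A})/\max_i\|{\bf a}_i\|^2$ and $s_{\max}^2(\bar{\bf A}) \le s_{\max}^2({\bf A})/\min_i\|{\bf a}_i\|^2$, which is exactly your submultiplicativity argument with ${\bf D}={\rm diag}(\|{\bf a}_i\|^{-1})$, then controls the numerator via Lemma~\ref{T} and the denominator via Lemma~\ref{lem2} with a union bound over the $k$ columns, and finally absorbs the resulting sum of exponentials using Lemma~\ref{exp-formal}. Your treatment of the $k\,{\rm e}^{-c(\eta)n}$ term via $n>c_{1,i}k\ge c_{1,i}\ln k$ is the same mechanism the paper relies on (implicitly through Lemma~\ref{exp-formal}).
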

\begin{proof}
The proof is postponed to Appendix \ref{proof-unit-singular-value}.
\end{proof}

\begin{rem}\label{P4-nd}
For ${\bf A}\in \mathbb{R}^{(n-d_0)\times k}$ be a standard Gaussian random matrix, we have \eqref{eq28} hold for $n>c_{1,2}\max\{k,d_0\}$, where $c_{1,2}$ and $c_{2,2}$ are constants dependent on $\varepsilon$.
\end{rem}

The following lemma studies a property of Gaussian random projection.
Intuitively, it illustrates that if a line and a subspace are perpendicular to each other, they will still be almost perpendicular after Gaussian random projection.

\begin{lem}\label{lem5}
Assume ${\bf u}_1\in\mathbb{R}^N$ is a unit vector, ${\bf U}_2\in\mathbb{R}^{N\times d}$ is an orthonormal matrix, and ${\bf u}_1$ is perpendicular to ${\bf U}_2$.
Let ${\bm \Phi}\in\mathbb{R}^{n\times N}$ be a standard Gaussian random matrix.
We use ${\bf a}_1 = {\bm \Phi}{\bf u}_1$ and ${\bf A}_2 = {\bm \Phi}{\bf U}_2$ to denote the projection of ${\bf u}_1$ and ${\bf U}_2$ by using $\bm\Phi$.
If ${\bf V}_2$ is an arbitrary orthonormal basis of $\Span({\bf A}_2)$, then for $\varepsilon>0$, we have
\begin{equation}\label{eq-p2epsilon1}
\mathbb{P}\left(\left\|{\bf V}_2^{\rm T}{\bf a}_1\right\|^2>\varepsilon\right) \le {\rm e}^{-c_2(\varepsilon)n}
\end{equation}
hold for $n>c_1(\varepsilon)d$, where $c_1$, $c_2$ are constants determined by $\varepsilon$.
\end{lem}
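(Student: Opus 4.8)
The plan is to reduce the statement to Corollary~\ref{cor-lem2} via the rotational invariance of Gaussian matrices together with an independence argument. First I would observe that, since ${\bf u}_1$ is a unit vector orthogonal to the orthonormal columns of ${\bf U}_2$, the matrix $[\,{\bf u}_1\ {\bf U}_2\,]\in\mathbb{R}^{N\times(d+1)}$ has orthonormal columns. Applying the i.i.d.\ Gaussian matrix ${\bm\Phi}$ to these columns produces ${\bf a}_1={\bm\Phi}{\bf u}_1$ and the columns of ${\bf A}_2={\bm\Phi}{\bf U}_2$, which are jointly Gaussian; since the cross-covariance of ${\bm\Phi}{\bf w}$ and ${\bm\Phi}{\bf w}'$ equals $\tfrac1n({\bf w}^{\rm T}{\bf w}')\,{\bf I}_n$, orthonormality of the frame forces all these cross-covariances to vanish. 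Hence ${\bf a}_1$ is a standard Gaussian random vector in $\mathbb{R}^n$ that is \emph{independent} of ${\bf A}_2$, and therefore independent of the random subspace $\Span({\bf A}_2)$, which is $d$-dimensional almost surely.

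Next I would note that the quantity to be controlled does not depend on the particular orthonormal basis chosen: for any orthonormal basis ${\bf V}_2$ of $\Span({\bf A}_2)$ one has $\|{\bf V}_2^{\rm T}{\bf a}_1\|^2=\|\Proj_{\Span({\bf A}_2)}({\bf a}_1)\|^2$, since two such bases differ by right-multiplication by an orthogonal matrix. I then condition on ${\bf A}_2$ (equivalently, on a measurable choice of ${\bf V}_2$). Given ${\bf A}_2$, the matrix ${\bf V}_2$ is a fixed orthonormal matrix in $\mathbb{R}^{n\times d}$ while ${\bf a}_1$ remains a standard Gaussian random vector, so Corollary~\ref{cor-lem2}, applied with threshold $\varepsilon/2$, yields $\mathbb{P}\big(\big|\|{\bf V}_2^{\rm T}{\bf a}_1\|^2-d/n\big|>\varepsilon/2 \mid {\bf A}_2\big)<{\rm e}^{-c_2(\varepsilon/2)n}$ whenever $n>c_1(\varepsilon/2)d$, with constants independent of the value of ${\bf A}_2$.

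Finally I would remove the centering term. As soon as $n>2d/\varepsilon$ we have $d/n<\varepsilon/2$, so the event $\{\|{\bf V}_2^{\rm T}{\bf a}_1\|^2>\varepsilon\}$ is contained in $\{\big|\|{\bf V}_2^{\rm T}{\bf a}_1\|^2-d/n\big|>\varepsilon/2\}$; combining this with the conditional bound and taking expectation over ${\bf A}_2$ gives $\mathbb{P}(\|{\bf V}_2^{\rm T}{\bf a}_1\|^2>\varepsilon)<{\rm e}^{-c_2(\varepsilon/2)n}$ for all $n>\max\{c_1(\varepsilon/2)d,\,2d/\varepsilon\}$. Renaming $c_1(\varepsilon):=\max\{c_1(\varepsilon/2),2/\varepsilon\}$ and $c_2(\varepsilon):=c_2(\varepsilon/2)$ then finishes the proof.

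I expect the only genuinely delicate point to be the conditioning step: making precise that ${\bf V}_2$, although random, is independent of ${\bf a}_1$, so that Corollary~\ref{cor-lem2} may be invoked pointwise in ${\bf A}_2$ and the probability bound survives integration (and that the almost-sure failure of full column rank of ${\bf A}_2$ is a null event that can be discarded). Everything else is the standard fact that a Gaussian matrix sends an orthonormal frame to an independent i.i.d.\ Gaussian frame, plus the trivial union-with-centering estimate.
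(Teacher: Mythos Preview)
Your proposal is correct and follows essentially the same approach as the paper: both arguments hinge on the observation that ${\bf a}_1$ is independent of $\Span({\bf A}_2)$ (hence of ${\bf V}_2$), so that after conditioning on ${\bf V}_2$ one may treat ${\bf V}_2^{\rm T}{\bf a}_1$ as a $d$-dimensional Gaussian vector and apply a standard concentration bound. The only cosmetic difference is that the paper invokes Lemma~\ref{T} directly on $\sqrt{n}\,{\bf V}_2^{\rm T}{\bf a}_1$ to obtain a one-sided tail bound and then reshapes via Lemma~\ref{exp-formal}, whereas you route through Corollary~\ref{cor-lem2} (the two-sided bound around $d/n$) and absorb the centering term by taking $n>2d/\varepsilon$; since Corollary~\ref{cor-lem2} is itself derived from Lemma~\ref{T}, the two routes are equivalent.
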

\begin{proof}
The proof is postponed to Appendix \ref{proof-lem5}.
\end{proof}

\begin{cor}\label{cor-lem5}
In Lemma \ref{lem5}, if we further define $\bar{\bf a}_1 = {\bf a}_1/\|{\bf a}_1\|$ as the normalized projection of ${\bf u}_1$,
then for $\varepsilon>0$, we have
\begin{equation}\label{eq-p2epsilon1new}
\mathbb{P}\left(\left\|{\bf V}_2^{\rm T}\bar{\bf a}_1\right\|^2>\varepsilon\right) \le {\rm e}^{-c_2(\varepsilon)n}
\end{equation}
hold for $n>c_1(\varepsilon)d$, where $c_1$, $c_2$ are constants determined by $\varepsilon$.
\end{cor}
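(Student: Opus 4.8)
The plan is to reduce Corollary \ref{cor-lem5} to Lemma \ref{lem5} by separately controlling the normalization factor $\|{\bf a}_1\|$. Since $\bar{\bf a}_1 = {\bf a}_1/\|{\bf a}_1\|$, we have the identity
$$
\left\|{\bf V}_2^{\rm T}\bar{\bf a}_1\right\|^2 = \frac{\left\|{\bf V}_2^{\rm T}{\bf a}_1\right\|^2}{\|{\bf a}_1\|^2},
$$
so it suffices to show that the numerator is small with overwhelming probability (which is exactly Lemma \ref{lem5}) while the denominator stays bounded away from zero. Note also at the outset that $\|\bar{\bf a}_1\| = 1$ forces $\|{\bf V}_2^{\rm T}\bar{\bf a}_1\|^2 \le 1$, so the claim is trivial unless $\varepsilon < 1$; hence I may assume $\varepsilon/2 < 1$ when invoking Lemma \ref{lem5}.

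First I would observe that ${\bf a}_1 = {\bm \Phi}{\bf u}_1$ with ${\bf u}_1$ a unit vector and ${\bm \Phi}$ a standard Gaussian random matrix, so ${\bf a}_1$ is a standard Gaussian random vector in $\mathbb{R}^n$ in the sense of Definition \ref{defi-guassian}. By Lemma \ref{lem2} applied with threshold $1/2$, there are constants $n_0$ and $c$ (absolute, coming from this fixed threshold) such that
$$
\mathbb{P}\left(\|{\bf a}_1\|^2 < \tfrac12\right) \le \mathbb{P}\left(\left|\|{\bf a}_1\|^2 - 1\right| > \tfrac12\right) < {\rm e}^{-cn}, \qquad n > n_0.
$$
Then, on the complementary event $\{\|{\bf a}_1\|^2 \ge 1/2\}$ the displayed identity gives $\|{\bf V}_2^{\rm T}\bar{\bf a}_1\|^2 \le 2\|{\bf V}_2^{\rm T}{\bf a}_1\|^2$, so a union bound yields
$$
\mathbb{P}\left(\left\|{\bf V}_2^{\rm T}\bar{\bf a}_1\right\|^2 > \varepsilon\right) \le \mathbb{P}\left(\left\|{\bf V}_2^{\rm T}{\bf a}_1\right\|^2 > \frac{\varepsilon}{2}\right) + \mathbb{P}\left(\|{\bf a}_1\|^2 < \frac12\right).
$$

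Applying Lemma \ref{lem5} with $\varepsilon/2$ in place of $\varepsilon$ bounds the first term by ${\rm e}^{-c_2(\varepsilon/2)n}$ for $n > c_1(\varepsilon/2)d$, and the second term is bounded above. The sum of these two exponentially decaying terms is itself dominated by a single exponential ${\rm e}^{-c_2(\varepsilon)n}$ for $n$ large, by Lemma \ref{exp-formal} (cf.\ Remark \ref{exp-summation}); absorbing $n_0$ into the constant, the bound holds for all $n > c_1(\varepsilon)d$ with $c_1(\varepsilon) := \max\{c_1(\varepsilon/2),\, n_0\}$ (using $d \ge 1$), which finishes the proof.

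The argument is essentially routine; the only point requiring a little care is that $\|{\bf V}_2^{\rm T}{\bf a}_1\|^2$ and $\|{\bf a}_1\|^2$ are \emph{not} independent, so I would not attempt to factor the probability of the ratio multiplicatively, but instead argue on the two ``bad'' events separately as above. The quantitative choice of the lower bound $1/2$ for $\|{\bf a}_1\|^2$ is immaterial — any fixed constant in $(0,1)$ works and only affects the constants $c_1,c_2$.
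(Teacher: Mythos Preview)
Your proposal is correct and follows essentially the same approach as the paper: write $\|{\bf V}_2^{\rm T}\bar{\bf a}_1\|^2 = \|{\bf V}_2^{\rm T}{\bf a}_1\|^2/\|{\bf a}_1\|^2$, control the numerator by Lemma~\ref{lem5} and the denominator by Lemma~\ref{lem2}, then combine the two exponential tails via Remark~\ref{exp-summation}. The only cosmetic difference is the choice of thresholds --- the paper takes $\|{\bf a}_1\|^2 \ge 1-\varepsilon$ and hence applies Lemma~\ref{lem5} at level $\varepsilon(1-\varepsilon)$, whereas you take the fixed cutoff $1/2$ and level $\varepsilon/2$ --- which, as you note, only affects the constants.
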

\begin{proof}
The proof is postponed to Appendix \ref{proof-cor-lem5}.
\end{proof}

\begin{rem}\label{rem-p2epsilon1new}
Using the same notations in Corollary \ref{cor-lem5}, if we take ${\bm \Phi}\in\mathbb{R}^{(n-d_0)\times N}$, $\bar{\bf a}_1\in\mathbb{R}^{n-d_0}$ and ${\bf V}_2\in\mathbb{R}^{(n-d_0)\times d}$, then we have still have
$$
\mathbb{P}\left(\left\|{\bf V}_2^{\rm T}\bar{\bf a}_1\right\|^2>\varepsilon\right) \le {\rm e}^{-c_2(\varepsilon)n}
$$
for $n>c_{1}(\varepsilon)\max\{d,d_0\}$, where $c_1$ and $c_2$ are constants dependent on $\varepsilon$.
This can by readily verified by replacing $n$ with $n-d_0$ in \eqref{eq-p2epsilon1new} and then applying Lemma \ref{exp-formal}.
The detailed proof is postponed to Appendix \ref{proof-rem-p2epsilon1new}.
\end{rem}

Finally we state the independence of random vectors, matrices, and their column-spanned subspaces.

\begin{defi}\label{defi-independence}
Two random vectors are independent, if and only if the distribution of any one of them does not have influence on that of the other.
Two random matrices are independent, if and only if any two columns of them are independent.
Furthermore, we introduce the independence between a random matrix and a subspace, which holds true if and only if the subspace is spanned by the columns of another random matrix that is independent of the first one.
Finally, two subspaces are independent, if and only if they are spanned by the columns of two independent random matrices, respectively.
\end{defi}

\begin{lem}\label{independence}
Assume ${\bf U}$ and ${\bf V}$ are two matrices satisfying ${\bf U}^{\rm T}{\bf V}={\bf 0}$, and $\bm\Phi$ is a Gaussian random matrix.
Then ${\bm\Phi}{\bf U}$ and ${\bm\Phi}{\bf V}$ are independent.
This can be readily verified by calculating the correlation between any two entries in ${\bm\Phi}{\bf U}$ and ${\bm\Phi}{\bf V}$, respectively.
They are all zero.
\end{lem}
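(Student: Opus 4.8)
The plan is to reduce the statement to a single covariance computation, using the fact that linear images of a Gaussian matrix are jointly Gaussian and that, for jointly Gaussian vectors, uncorrelatedness is equivalent to independence. Write ${\bm\Phi}\in\mathbb{R}^{n\times N}$ with i.i.d.\ entries $\phi_{ij}\sim\mathcal{N}(0,\sigma^2)$ (here $\sigma^2 = 1/n$), let ${\bm\phi}_1,\dots,{\bm\phi}_n$ be its rows, and let ${\bf u}_a$ and ${\bf v}_b$ range over the columns of ${\bf U}$ and ${\bf V}$. The $a$-th column of ${\bm\Phi}{\bf U}$ is the vector with entries $({\bm\Phi}{\bf u}_a)_i = {\bm\phi}_i^{\rm T}{\bf u}_a$, and similarly for ${\bm\Phi}{\bf V}$. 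By Definition \ref{defi-independence}, it suffices to show that for every such pair $(a,b)$ the random vectors ${\bm\Phi}{\bf u}_a$ and ${\bm\Phi}{\bf v}_b$ are independent.

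First I would observe that every entry of ${\bm\Phi}{\bf u}_a$ and of ${\bm\Phi}{\bf v}_b$ is a linear form in the i.i.d.\ Gaussian variables $\{\phi_{ij}\}$, so any linear functional of the concatenation of ${\bm\Phi}{\bf u}_a$ and ${\bm\Phi}{\bf v}_b$ is univariate Gaussian; hence that concatenated vector is jointly Gaussian. For a jointly Gaussian vector split into two blocks, independence of the blocks is equivalent to the vanishing of the cross-covariance matrix between them, so the claim reduces to a covariance computation. Using $\mathbb{E}[\phi_{ik}\phi_{jl}] = \sigma^2\delta_{ij}\delta_{kl}$,
\[
\mathrm{Cov}\!\big(({\bm\Phi}{\bf u}_a)_i,({\bm\Phi}{\bf v}_b)_j\big) = \sum_{k,l} u_{ka} v_{lb}\,\mathbb{E}[\phi_{ik}\phi_{jl}] = \sigma^2\,\delta_{ij}\sum_{k} u_{ka} v_{kb} = \sigma^2\,\delta_{ij}\,({\bf U}^{\rm T}{\bf V})_{ab} = 0,
\]
where the last equality is the hypothesis ${\bf U}^{\rm T}{\bf V} = {\bf 0}$. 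Thus the cross-covariance matrix is identically zero, ${\bm\Phi}{\bf u}_a$ and ${\bm\Phi}{\bf v}_b$ are independent, and since $a,b$ were arbitrary, ${\bm\Phi}{\bf U}$ and ${\bm\Phi}{\bf V}$ are independent in the sense of Definition \ref{defi-independence}. If the subspace-level version is also desired, it is then immediate: $\Span({\bm\Phi}{\bf U})$ and $\Span({\bm\Phi}{\bf V})$ are independent by definition once their generating matrices are.

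I do not expect a genuine obstacle here; the displayed identity is the entire content. The one point that needs care is logical rather than computational: one must not pass directly from ``uncorrelated'' to ``independent'' without first establishing joint Gaussianity, which is exactly why the linear-functional remark in the previous paragraph is not optional. A secondary, purely bookkeeping concern is to phrase everything columnwise so that the conclusion matches the notion of matrix (and subspace) independence fixed in Definition \ref{defi-independence} rather than some other superficially similar notion.
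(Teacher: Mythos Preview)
Your proposal is correct and matches the paper's approach exactly: the paper simply asserts that the result follows by checking that all cross-correlations vanish, which is precisely the displayed covariance computation you give. If anything, your version is more careful than the paper's, since you explicitly supply the joint-Gaussianity step needed to pass from zero covariance to independence, a point the paper leaves implicit.
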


\section{Proof of Lemma \ref{lem-line-sub}}
\label{Appendixproof-thm-sub-lem-line-sub}

The proof of Lemma \ref{lem-line-sub} is made up of two steps.
At first, we derive an accurate expression of the error about estimating $\affY$.
Then, the estimate error is bounded by utilizing the concentration inequalities for Gaussian distribution that we derived in the previous section.

\emph{\bf Step 1)}
Let us begin from choosing the bases for the line ${\set X}_1$ and the subspace ${\set X}_2$ and then calculating the affinity after projection.

According to the definition of affinity, $\lambda=\cos\theta$, where $\theta$ is the only principal angle between ${\set X}_1$ and ${\set X}_2$.
We use $\bf u$ and ${\bf u}_1$ to denote, respectively, the basis of ${\set X}_1$ and a unit vector in ${\set X}_2$, which constructs the principal angle with $\bf u$.
Therefore, $\bf u$ can be rewritten into the following form
\begin{align}\label{decompose-u}
{\bf u}=\lambda{\bf u}_1+\sqrt{1-\lambda^2}{\bf u}_0,
\end{align}
where ${\bf u}_0$ denotes some unit vector orthogonal to ${\set X}_2$.
Based on the above definition, we can choose ${\bf U}=\left[{\bf u}_1, ..., {\bf u}_d\right]$ as the basis of ${\set X}_2$.
Notice that $\{{\bf u}_2,\cdots,{\bf u}_d\}$ could be freely chosen as long as the orthonormality is satisfied.

After projecting ${\set X}_1$ by random Gaussian matrix, we get subspace ${\set Y}_1$, whose basis vector is
\begin{align}
{\bf a}={\bm \Phi}{\bf u} &= \lambda{\bm \Phi}{\bf u}_1+\sqrt{1-\lambda^2}{\bm \Phi}{\bf u}_0\nonumber\\
&=\lambda{\bf a}_1+\sqrt{1-\lambda^2}{\bf a}_0,\label{eq-lem-line-sub-proof-a}
\end{align}
where ${\bf a}_1:={\bm\Phi}{\bf u}_1$ and ${\bf a}_0:={\bm\Phi}{\bf u}_0$ are not orthogonal to each other.
As for ${\set Y}_2$, considering that ${\bm\Phi}{\bf U}$ is not a set of orthonormal basis, we do orthogonalization by using Gram-Schmidt process.
Denote the orthonormalized matrix as ${\bf V} = \left[{\bf v}_1,\cdots,{\bf v}_d\right]$. By the definition of Gram-Schmidt process, the first column of ${\bf V}$ should be
\begin{equation}\label{eq-lem-line-sub-proof-v1}
{\bf v}_1=\frac{{\bf a}_1}{\|{\bf a}_1\|},
\end{equation}
which does not change its direction after the orthogonalization.

\begin{rem}\label{principalangle-affinity-projection}
Consider the affinity between two subspaces $\mathcal{S}_1$ and $\mathcal{S}_2$, with dimension 1 and $d\ge1$.
Let ${\bf v}_1$ and ${\bf V}_2$ be the orthonormal basis for $\mathcal{S}_1$ and $\mathcal{S}_2$, respectively.
Then the affinity equals the norm of the projection of ${\bf v}_1$ onto $\mathcal{S}_2$, i.e., $\lambda = \|\Proj_{{\set S}_2}({\bf v}_1)\|=\left\|{\bf V}_2^{\rm T}{\bf v}_1\right\|$.
\end{rem}

By the definition of affinity and \eqref{eq-lem-line-sub-proof-a}, we can calculate the affinity between ${\set Y}_1$ and ${\set Y}_2$ as
\begin{align}\label{eq-lem-line-sub-proof-affY2}
        \affYS &=\left\|{\bf V}^{\rm T}\frac{\bf a}{\|{\bf a}\|}\right\|^2\nonumber\\
        &= \frac1{\|{\bf a}\|^2}\left\|\lambda {\bf V}^{\rm T}{\bf a}_1 + \sqrt{1-\lambda^2}{\bf V}^{\rm T}{\bf a}_0\right\|^2\nonumber\\
        &= \frac1{\|{\bf a}\|^2}\left(\lambda^2\|{\bf V}^{\rm T}{\bf a}_1\|^2 + (1-\lambda^2)\|{\bf V}^{\rm T}{\bf a}_0\|^2 + \lambda\sqrt{1-\lambda^2}\|{\bf a}_1^{\rm T}{\bf a}_0\| \right).
\end{align}
Because ${\bf a}_1$ lies in ${\set Y}_2$, we have
\begin{equation}\label{eq-lem-line-sub-proof-a1}
\|{\bf V}^{\rm T}{\bf a}_1\| = \|{\bf a}_1\|.
\end{equation}
By taking the norm on both sides of \eqref{eq-lem-line-sub-proof-a}, we write
\begin{align}
\|{\bf a}\|^2 = \!\lambda^2\|{\bf a}_1\|^2+ (1-\lambda^2)\|{\bf a}_0\|^2 + 2\lambda\sqrt{1-\lambda^2}\|{\bf a}_0\|\|{\bf a}_1\|. \label{eq-lem-line-sub-proof-a2}
\end{align}
Eliminating $\|{\bf V}^{\rm T}{\bf a}_1\|$ and $\|{\bf a}_1\|$ by inserting \eqref{eq-lem-line-sub-proof-a1} and \eqref{eq-lem-line-sub-proof-a2} into \eqref{eq-lem-line-sub-proof-affY2}, we get
\begin{align}
        \affYS &=\!\frac{1}{\|{\bf a}\|^2}\!\left(\!\|{\bf a}\|^2 \!- \!(1\!-\!\lambda^2)\|{\bf a}_0\|^2
        \!+\! (1\!-\!\lambda^2)\|{\bf V}^{\rm T}{\bf a}_0\|^2\right) \nonumber\\
        &=1 - (1-\lambda^2)\left(\frac{\|{\bf a}_0\|^2}{\|{\bf a}\|^2} - \frac{\|{\bf V}^{\rm T}{{\bf a}_0}\|^2}{\|{\bf a}\|^2}\right).\label{eq-lem-line-sub-proof-affY-2}
\end{align}
Recalling \eqref{lem-line-sub-aff-change} and inserting the estimation into \eqref{eq-lem-line-sub-proof-affY-2}, the estimate error is deduced as
\begin{align}
\left|\affYS-\oaffYS\right|
= &\left|\affYS-\left(\lambda^2+\frac{d}{n}(1-\lambda^2)\right)\right| \nonumber\\
=& \left|1 \!-\! (1\!-\!\lambda^2)\left(\frac{\|{\bf a}_0\|^2}{\|{\bf a}\|^2} - \frac{\|{\bf V}^{\rm T}{{\bf a}_0}\|^2}{\|{\bf a}\|^2}\right)\!-\!\left(\!1\!-\!(1\!-\!\lambda^2)\!\left(1\!-\!\frac{d}{n}\right)\!\!\right)\right| \nonumber\\
=& (1-\lambda^2)\left|\left(1-\frac{d}{n}\right)-\left(\frac{\|{\bf a}_0\|^2}{\|{\bf a}\|^2} - \frac{\|{\bf V}^{\rm T}{{\bf a}_0}\|^2}{\|{\bf a}\|^2}\right)\right|. \label{eq-lem-line-sub-proof-esterr}
\end{align}

\emph{\bf Step 2)}
Before bounding the estimate error by concentration inequalities,
we first split the RHS of \eqref{eq-lem-line-sub-proof-esterr} into three parts using triangle inequality.
\begin{align}
& \left|\left(1-\frac{d}{n}\right)-\left(\frac{\|{\bf a}_0\|^2}{\|{\bf a}\|^2} - \frac{\|{\bf V}^{\rm T}{{\bf a}_0}\|^2}{\|{\bf a}\|^2}\right)\right|\nonumber\\
\le& \left|\frac{\|{\bf a}_0\|^2}{\|{\bf a}\|^2}-1\right| + \left|\frac{\|{\bf V}^{\rm T}{{\bf a}_0}\|^2}{\|{\bf a}\|^2} - \frac{d}{n}\right| \nonumber\\
\le& \left|\frac{\|{\bf a}_0\|^2}{\|{\bf a}\|^2}-1\right| +
\frac{1}{\|\va\|^2}\left|\|{\bf V}^{\rm T}{\bf a}_0\|^2 - \frac{d}{n}\right|+
\frac{d}{n}\left|\frac{1}{\|\va\|^2}-1\right|.\label{eq-lem-line-sub-proof-decomp-error}
\end{align}

Since both $\bf a$ and ${\bf a}_0$ are standard Gaussian random vectors, by Lemma \ref{lem2}, for $n>n_0$, with probability at least $1 - 2{\rm e}^{-c_{2,1}(\varepsilon_1)n}$, we have
\begin{equation}\label{eq-lem-line-sub-proof-concen1}
\left|\|{\bf a}\|^2 - 1\right| < \varepsilon_1
\end{equation}
and
\begin{equation}\label{eq-lem-line-sub-proof-concen2}
\left|\|{\bf a}_0\|^2 - 1\right| < \varepsilon_1.
\end{equation}
Since ${\bf u}_0$ is orthogonal to ${\bf u}_i$, $i=1,\cdots,d$, by Corollary \ref{cor-lem2}, for $n>c_{1,2}(\varepsilon_2)d$,
with probability at least $1 - {\rm e}^{-c_{2,2}(\varepsilon_2)n}$, we have
\begin{equation}\label{eq-lem-line-sub-proof-concen3}
\left|\left\|{\bf V}^{\rm T}{\bf a}_0\right\|^2 -\frac{d}n\right|
<\varepsilon_2.
\end{equation}
Using \eqref{eq-lem-line-sub-proof-concen1}, \eqref{eq-lem-line-sub-proof-concen2}, and
\eqref{eq-lem-line-sub-proof-concen3} in \eqref{eq-lem-line-sub-proof-decomp-error}, for $n>\max\{n_0,c_{1,2}\}d$,
with probability at least
\begin{align}\label{eq-thm-sub-aff-probability}
1 - 2{\rm e}^{-c_{2,1}(\varepsilon_1)n} - {\rm e}^{-c_{2,2}(\varepsilon_2)n},
\end{align}
we have
\begin{align}
\left|\left(1-\frac{d}{n}\right)-\frac{\|{\bf a}_0\|^2}{\|{\bf a}\|^2}\left(1 - \sum_{i=1}^d \cos^2\theta_i\right)\right|
\le&
\frac{2\varepsilon_1}{1-\varepsilon_1}+\frac{\varepsilon_2}{1-\varepsilon_1} + \frac{d}{n}\frac{\varepsilon_1}{1-\varepsilon_1}\nonumber\\
\le& \left(\frac{3d}{2n}+3\right)\varepsilon_1 +\frac{3\varepsilon_2}{2},
\label{eq-lem-line-sub-proof-esterrRHS-pre}
\end{align}
where the last inequality holds for $\varepsilon_1 < 1/3$.

To complete the proof, we need to formulate \eqref{eq-lem-line-sub-proof-esterrRHS-pre} and \eqref{eq-thm-sub-aff-probability} into the shape of \eqref{lem-line-sub-aff-bound} and a single exponential function, respectively.
Letting $\varepsilon_1=\varepsilon_2=:\varepsilon/6$ and inserting \eqref{eq-lem-line-sub-proof-esterrRHS-pre} into \eqref{eq-lem-line-sub-proof-esterr}, we have
\begin{align*}
\left|\affYS-\oaffYS\right| &\le\left(1-\lambda^2\right)\left(\left(\frac{3d}{2n}+3\right)\varepsilon_1 +\frac{3\varepsilon_2}{2}\right)\\
&\le\left(1-\lambda^2\right)\left(\left(\frac{3}{2}+3\right)\varepsilon_1 +\frac{3\varepsilon_2}{2}\right)\\
&=\left(1-\lambda^2\right)\varepsilon.
\end{align*}
 By Remark \ref{exp-summation}, we know that there exist constants $c_1$, $c_2$, such that \eqref{eq-thm-sub-aff-probability} is greater than $1-{\rm e}^{-c_2(\varepsilon)n}$ for any $n>c_1(\varepsilon)d$ and close the proof.

\section{Proof of Theorem \ref{thm-sub}}
\label{Appendixproof-thm-sub}

The proof of Theorem \ref{thm-sub} is divided into two parts.
In subsection \ref{subsectionA}, we will complete the main body of the proof by using an important lemma,
which will be proved in subsection \ref{subsectionB}.
Before we start, let us introduce some auxiliary variables.

\begin{rem}\label{quasi-construct}
Assume there are two subspaces $\mathcal{S}_1$ and $\mathcal{S}_2$, with dimension $d_1\le d_2$.
Let $\tilde{\bf U}_i=\left[\tilde{\bf u}_{i,1}, \cdots, \tilde{\bf u}_{i, d_i}\right]$ denote any orthonormal matrix for subspace ${\set S}_i, i=1,2$.
One may do singular value decomposition as
$
\tilde{\bf U}_2^{\rm T}\tilde{\bf U}_1 = {\bf Q}_2{\bm\Lambda}{\bf Q}_1^{\rm T}
$,
where the singular values $\lambda_k=\cos\theta_k, 1\le k\le d_1$ are located on the diagonal of $\bm\Lambda$,
and $\theta_1\le \theta_2 \le \cdots\le \theta_{d_1}$ denote the principal angles between $\mathcal{S}_1$ and $\mathcal{S}_2$. After reshaping, we have
\begin{align*}
{\bf U}_2^{\rm T}{\bf U}_1 := \left(\tilde{\bf U}_2{\bf Q}_2\right)^{\rm T}\tilde{\bf U}_1{\bf Q}_1 = {\bm\Lambda}
= \left[ \begin{array}{ccc}
     \lambda_1 & & \\ & \ddots&  \\ & & \lambda_{d_1} \\ \hline  & {\bf 0} &  \end{array} \right],
\end{align*}
where
$$
{\bf U}_i := \tilde{\bf U}_i{\bf Q}_i = \left[{\bf u}_{i,1},\cdots,{\bf u}_{i,d_i}\right], \quad i=1,2
$$
are the orthonormal basis, which have the closest connection with the affinity between these two subspaces.
\end{rem}

\begin{defi}[\bf \emph{principal} orthonormal bases]\label{quasi}
We refer ${\bf U}_1$ and ${\bf U}_2$ as \emph{principal} orthonormal bases for $\mathcal{S}_1$ and $\mathcal{S}_2$,
if they are derived by using the method in Remark \ref{quasi-construct}.
\end{defi}

According to Remark \ref{quasi-construct}, for subspaces ${\set X}_1$ and ${\set X}_2$, we can get their \emph{principal} orthonormal bases ${\bf U}_1$ and ${\bf U}_2$, respectively.
After projection by multiplying a standard Gaussian random matrix ${\bm\Phi}$, the original basis matrix changes to
$
{\bf A}_i={\bm\Phi}{\bf U}_i=\left[{\bf a}_{i,1}, \cdots, {\bf a}_{i,d_i}\right],
$
whose columns are no longer unitary and orthogonal to each other.  Then we normalize each columns as
$
\bar{\bf A}_i=\left[\bar{\bf a}_{i,1}, \cdots, \bar{\bf a}_{i,d_i}\right]=\left[\frac{{\bf a}_{i,1}}{\|{\bf a}_{i,1}\|}, \cdots, \frac{{\bf a}_{i,d_i}}{\|{\bf a}_{i,d_i}\|}\right],
$
whose columns are now unitary but still not orthogonal to each other.
However, by Corollary \ref{unit-singular-value}, we know that $\bar{\bf A}_i$ can be used as a good approximation for the orthonormal basis of ${\set Y}_i$.
We will see that $\bar{\bf A}_i$ plays an important role in estimating the affinity after projection.

We also need to define an accurate orthonormal basis for the projected subspace.
One efficient way is to process $\bar{\bf A}_i$ by using Gram-Schmidt orthogonalization,
whose result is defined as
$
{\bf V}_i=\left[{\bf v}_{i,1}, \cdots, {\bf v}_{i,d_i}\right], i=1,2.
$

\subsection{Main Body}\label{proof1}
\label{subsectionA}

In order to prove Theorem \ref{thm-sub}, we need to calculate $\affYS$ and estimate its bias from $\oaffYS$.
Because $\bar{\bf A}_i$ is very close to an orthonormal matrix, we may use $\|{\bf V}_2^{\rm T}\bar{\bf A}_1\|$ to estimate the affinity after projection.
By using triangle inequality, we have
\begin{equation}\label{eq-proof-thm-sub-decomp}
\left|\affYS-\oaffYS\right|
\le \left|\affYS-\left\|{\bf V}_2^{\rm T}\bar{\bf A}_1\right\|_{\rm F}^2\right| + \left|\left\|{\bf V}_2^{\rm T}\bar{\bf A}_1\right\|_{\rm F}^2-\oaffYS\right|.
\end{equation}
Therefore, the following proof can be divided into three steps.
The first step is to bound the error caused by using $\bar{\bf A}_1$ as an approximation of ${\bf V}_1$ to compute the affinity.
To do that, we will introduce an important lemma, which is the essence of the proof.
The second step is to bound the difference between the approximated affinity and our estimate, which can be derived by using Lemma \ref{lem-line-sub}.
Finally, we combine these two bounds and complete the proof.

{\bf Step 1)}
For the first item in the RHS of \eqref{eq-proof-thm-sub-decomp}, according to the definition of affinity, we have
\begin{align}
\left|\affYS-\left\|{\bf V}_2^{\rm T}\bar{\bf A}_1\right\|_{\rm F}^2\right| &= \left|\|{\bf V}_2^{\rm T}{\bf V}_1\|_{\rm F}^2-\left\|{\bf V}_2^{\rm T}\bar{\bf A}_1\right\|_{\rm F}^2\right|  \nonumber\\
&= \left|\sum_{k=1}^{d_1} \left(\|{\bf V}_2^{\rm T}{\bf v}_{1, k}\|^2-\left\|{\bf V}_2^{\rm T}\bar{\bf a}_{1, k}\right\|^2\right) \right|  \nonumber\\
&\le \sum_{k=1}^{d_1} \left|\|{\bf V}_2^{\rm T}{\bf v}_{1, k}\|^2-\left\|{\bf V}_2^{\rm T}\bar{\bf a}_{1, k}\right\|^2\right|.\label{eq-diff1-sum}
\end{align}

\begin{lem}\label{vTV2-aTV2}
There exist constants $c_{1,1}(\varepsilon_1)$ and $c_{2,1}(\varepsilon_1)>0$ depending only on $\varepsilon_1$, such that for any $n>c_{1,1}(\varepsilon_1)d_2$, we have
\begin{align}\label{eq-vTV2-aTV2}
\left|\|{\bf V}_2^{\rm T}{\bf v}_{1, k}\|^2-\left\|{\bf V}_2^{\rm T}\bar{\bf a}_{1, k}\right\|^2\right|\le \left(1-\lambda_k^2\right)\varepsilon_1, \quad\forall k=1,\cdots,d_1
\end{align}
hold with probability at least $1-{\rm e}^{-c_{2,1}(\varepsilon_1)n}$.
\end{lem}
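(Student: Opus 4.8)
The plan is to compute $\|{\bf V}_2^{\rm T}{\bf v}_{1,k}\|^2-\|{\bf V}_2^{\rm T}\bar{\bf a}_{1,k}\|^2$ in closed form in terms of the Gram--Schmidt correction removed from $\bar{\bf a}_{1,k}$, and then control that correction with the inequalities of Section \ref{preparation}. Write $\Proj_{{\set Y}_2}:={\bf V}_2{\bf V}_2^{\rm T}$ for the orthogonal projector onto ${\set Y}_2$, set ${\set X}_1^{(k-1)}:=\Span([{\bf u}_{1,1},\dots,{\bf u}_{1,k-1}])$ (so ${\set X}_1^{(0)}=\{{\bf 0}\}$), and let ${\bf z}_k$ be the orthogonal projection of $\bar{\bf a}_{1,k}$ onto $\Span([\bar{\bf a}_{1,1},\dots,\bar{\bf a}_{1,k-1}])={\bm\Phi}{\set X}_1^{(k-1)}$; by the Gram--Schmidt recursion ${\bf v}_{1,k}=(\bar{\bf a}_{1,k}-{\bf z}_k)/\sqrt{1-s_k^2}$ with $s_k:=\|{\bf z}_k\|$. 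Putting $\mu_k:=\|{\bf V}_2^{\rm T}\bar{\bf a}_{1,k}\|=\|\Proj_{{\set Y}_2}\bar{\bf a}_{1,k}\|$ and using $\|\bar{\bf a}_{1,k}\|=1$, $\bar{\bf a}_{1,k}^{\rm T}{\bf z}_k=s_k^2$, and $\|\Proj_{{\set Y}_2}{\bf z}_k\|^2=s_k^2-\|({\bf I}-\Proj_{{\set Y}_2}){\bf z}_k\|^2$, a direct computation gives
\begin{equation*}
\|{\bf V}_2^{\rm T}{\bf v}_{1,k}\|^2-\|{\bf V}_2^{\rm T}\bar{\bf a}_{1,k}\|^2
=\frac{-(1-\mu_k^2)\,s_k^2+2\,\bar{\bf a}_{1,k}^{\rm T}({\bf I}-\Proj_{{\set Y}_2}){\bf z}_k-\|({\bf I}-\Proj_{{\set Y}_2}){\bf z}_k\|^2}{1-s_k^2},
\end{equation*}
and since $\|({\bf I}-\Proj_{{\set Y}_2})\bar{\bf a}_{1,k}\|^2=1-\mu_k^2$, Cauchy--Schwarz yields
\begin{equation*}
\left|\|{\bf V}_2^{\rm T}{\bf v}_{1,k}\|^2-\|{\bf V}_2^{\rm T}\bar{\bf a}_{1,k}\|^2\right|
\le\frac{(1-\mu_k^2)\,s_k^2+2\sqrt{1-\mu_k^2}\,\|({\bf I}-\Proj_{{\set Y}_2}){\bf z}_k\|+\|({\bf I}-\Proj_{{\set Y}_2}){\bf z}_k\|^2}{1-s_k^2}.
\end{equation*}
So it remains to show, on a high-probability event, that (i) $s_k^2\le\varepsilon_0$; (ii) $1-\mu_k^2\le(1-\lambda_k^2)(1+\varepsilon_0)$; and (iii) $\|({\bf I}-\Proj_{{\set Y}_2}){\bf z}_k\|\le(1+\varepsilon_0)\sqrt{1-\lambda_k^2}\,s_k$, after which every term above carries the factor $1-\lambda_k^2$.

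For (i): since ${\bf u}_{1,k}$ is orthogonal to the orthonormal system $\{{\bf u}_{1,1},\dots,{\bf u}_{1,k-1}\}$ spanning ${\set X}_1^{(k-1)}$, and $s_k^2$ is precisely the squared norm of the projection of the normalized image $\bar{\bf a}_{1,k}$ onto ${\bm\Phi}{\set X}_1^{(k-1)}$, Corollary \ref{cor-lem5} gives $s_k^2\le\varepsilon_0$ with probability $1-{\rm e}^{-c(\varepsilon_0)n}$ once $n>c(\varepsilon_0)(k-1)$. For (ii): the principal-bases relation ${\bf U}_2^{\rm T}{\bf U}_1={\bm\Lambda}$ of Remark \ref{quasi-construct} lets us write ${\bf u}_{1,j}=\lambda_j{\bf u}_{2,j}+\sqrt{1-\lambda_j^2}\,{\bf u}_{0,j}$ with $\{{\bf u}_{0,j}\}$ an orthonormal system in ${\set X}_2^\perp$ (as in the proof of Lemma \ref{lem-line-sub}, the orthonormality following from ${\bf U}_2^{\rm T}{\bf U}_1={\bm\Lambda}$ and the orthonormality of ${\bf U}_1$); since ${\bm\Phi}{\bf u}_{2,k}\in{\set Y}_2$ this forces $({\bf I}-\Proj_{{\set Y}_2}){\bm\Phi}{\bf u}_{1,k}=\sqrt{1-\lambda_k^2}\,({\bf I}-\Proj_{{\set Y}_2}){\bm\Phi}{\bf u}_{0,k}$, hence $1-\mu_k^2=\|({\bf I}-\Proj_{{\set Y}_2})\bar{\bf a}_{1,k}\|^2=(1-\lambda_k^2)\,\|({\bf I}-\Proj_{{\set Y}_2}){\bm\Phi}{\bf u}_{0,k}\|^2/\|{\bm\Phi}{\bf u}_{1,k}\|^2\le(1-\lambda_k^2)(1+\varepsilon_0)$ on the event that Lemma \ref{lem2} holds for the two fixed unit vectors ${\bf u}_{0,k}$ and ${\bf u}_{1,k}$. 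For (iii) (the case $k=1$ being trivial, as ${\bf z}_1={\bf 0}$): the same decomposition shows every ${\bf x}=\sum_{j<k}c_j{\bf u}_{1,j}\in{\set X}_1^{(k-1)}$ obeys $\|\Proj_{{\set X}_2^\perp}{\bf x}\|^2=\sum_{j<k}c_j^2(1-\lambda_j^2)\le(1-\lambda_{k-1}^2)\|{\bf x}\|^2\le(1-\lambda_k^2)\|{\bf x}\|^2$ (using $\lambda_1\ge\cdots\ge\lambda_{k-1}\ge\lambda_k$; equivalently ${\set X}_1^{(k-1)}$ has principal angles $\theta_1,\dots,\theta_{k-1}$ with ${\set X}_2$). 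Writing ${\bf z}_k={\bm\Phi}{\bf x}_k$ with ${\bf x}_k\in{\set X}_1^{(k-1)}$ (unique once ${\bm\Phi}$ is injective on this subspace) and ${\bf y}_k:=\Proj_{{\set X}_2^\perp}{\bf x}_k$, so that ${\bf x}_k-{\bf y}_k\in{\set X}_2$ and $\|{\bf y}_k\|\le\sqrt{1-\lambda_k^2}\,\|{\bf x}_k\|$, we have ${\bm\Phi}({\bf x}_k-{\bf y}_k)\in{\set Y}_2$, hence $({\bf I}-\Proj_{{\set Y}_2}){\bf z}_k=({\bf I}-\Proj_{{\set Y}_2}){\bm\Phi}{\bf y}_k$ and
\begin{equation*}
\|({\bf I}-\Proj_{{\set Y}_2}){\bf z}_k\|\le\|{\bm\Phi}{\bf y}_k\|\le s_{\max}({\bm\Phi}{\bf H})\,\|{\bf y}_k\|\le s_{\max}({\bm\Phi}{\bf H})\,\sqrt{1-\lambda_k^2}\,\frac{\|{\bm\Phi}{\bf x}_k\|}{s_{\min}({\bm\Phi}{\bf G})}=\frac{s_{\max}({\bm\Phi}{\bf H})}{s_{\min}({\bm\Phi}{\bf G})}\sqrt{1-\lambda_k^2}\,s_k,
\end{equation*}
where ${\bf G}$ and ${\bf H}$ are orthonormal basis matrices of ${\set X}_1^{(k-1)}$ and of the fixed subspace $\Proj_{{\set X}_2^\perp}({\set X}_1^{(k-1)})$ respectively, each having at most $k-1$ columns; since ${\bm\Phi}{\bf G}$ and ${\bm\Phi}{\bf H}$ are standard Gaussian matrices, Lemma \ref{T} (after the usual $\sqrt{n}$ rescaling) gives $s_{\max}({\bm\Phi}{\bf H})\le1+\varepsilon_0$ and $s_{\min}({\bm\Phi}{\bf G})\ge1-\varepsilon_0$ with probability $1-{\rm e}^{-c(\varepsilon_0)n}$ for $n>c(\varepsilon_0)(k-1)$, which yields (iii) up to the harmless factor $\tfrac{1+\varepsilon_0}{1-\varepsilon_0}$.

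Inserting (i)--(iii) into the displayed upper bound, on the intersection of the relevant events the right-hand side of \eqref{eq-vTV2-aTV2} is at most a universal constant times $(1-\lambda_k^2)(\varepsilon_0+\sqrt{\varepsilon_0})$, so choosing $\varepsilon_0=\varepsilon_0(\varepsilon_1)$ small enough makes it $\le(1-\lambda_k^2)\varepsilon_1$; all the conditions $n>c(\varepsilon_0)(k-1)$ (with $k\le d_1\le d_2$) then hold as soon as $n>c_{1,1}(\varepsilon_1)d_2$, which defines $c_{1,1}$. A union bound over $k=1,\dots,d_1$ involves at most $\mathcal{O}(d_1)$ events of failure probability ${\rm e}^{-c(\varepsilon_0)n}$ each, and since $d_1\le d_2<n/c_{1,1}$ this polynomial prefactor is absorbed by Remark \ref{exp-summation}/Lemma \ref{exp-formal}, giving \eqref{eq-vTV2-aTV2} with probability at least $1-{\rm e}^{-c_{2,1}(\varepsilon_1)n}$.

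The main obstacle is (iii). The naive estimate $\|{\bf v}_{1,k}-\bar{\bf a}_{1,k}\|=\mathcal{O}(s_k)$ coming from the near-orthonormality of the columns of $\bar{\bf A}_1$ only controls the difference by $\mathcal{O}(s_k)$ uniformly in $\lambda_k$; this does not vanish as $\lambda_k\to1$ and is therefore too weak for the factor $1-\lambda_k^2$ required by \eqref{eq-vTV2-aTV2}. The resolution is geometric: because $\lambda_1\ge\cdots\ge\lambda_{k-1}\ge\lambda_k$, a value of $\lambda_k$ near $1$ forces all earlier principal cosines near $1$, so ${\set X}_1^{(k-1)}$ is nearly contained in ${\set X}_2$; the Gaussian map distorts this containment only mildly, so the Gram--Schmidt correction ${\bf z}_k$ lies nearly inside ${\set Y}_2$, which is exactly the quantitative content of $\|({\bf I}-\Proj_{{\set Y}_2}){\bf z}_k\|\lesssim\sqrt{1-\lambda_k^2}\,s_k$ above.
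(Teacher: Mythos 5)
Your proposal is correct, and after the common starting point (the Gram--Schmidt identity $\bar{\bf a}_{1,k}=\alpha_k{\bf b}_k+\sqrt{1-\alpha_k^2}\,{\bf v}_{1,k}$, which in your notation is ${\bf z}_k=\alpha_k{\bf b}_k$, $s_k=\alpha_k$, and the bound $\alpha_k^2\le\varepsilon_0$ from Corollary \ref{cor-lem5}) it diverges from the paper in the two places where the paper works hardest. First, the paper controls the ``perpendicular mass'' of the correction through $1-\beta_k^2$ (Lemma \ref{lem-property8}), which it proves by minimizing $\|{\bf V}_2^{\rm T}{\bf b}\|$ over unit vectors of ${\set Y}_{1,1:k}$ and then bounding $s_{\max}^2(\bar{\bf A}_{1,1:k}^{\perp})/s_{\min}^2(\bar{\bf A}_{1,1:k})$; the numerator forces the conditional-Gaussian machinery $\bar{\bf A}_{0,1:k}^{\perp}={\bf V}_2^{\perp}\bar{\bm\Omega}_{1:k}$ (Remarks \ref{quasi-basis-decompose}, \ref{Gaussian-projection}, \ref{P4-nd}). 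You instead pull ${\bf z}_k$ back to the ambient space as ${\bm\Phi}{\bf x}_k$ with ${\bf x}_k$ in the \emph{fixed} subspace ${\set X}_{1,1:k-1}$, use the deterministic inequality $\|\Proj_{{\set X}_2^\perp}{\bf x}\|\le\sqrt{1-\lambda_k^2}\,\|{\bf x}\|$ coming from the ordering of the principal cosines, and then only need extreme singular values of ${\bm\Phi}$ restricted to two fixed subspaces of dimension at most $k-1$ (Lemma \ref{T}); this sidesteps the conditioning argument and is more elementary. Second, the paper separately proves the near-orthogonality $|\langle\bar{\bf a}_{1,k}^{\perp},{\bf b}_k^{\perp}\rangle|^2<\varepsilon$ (Lemma \ref{lem-property7}, again via the ${\bm\Omega}$ representation), whereas you dispose of the cross term by plain Cauchy--Schwarz, accepting a $\sqrt{\varepsilon_0}$ rather than $\varepsilon_0$ dependence and compensating by taking $\varepsilon_0=\varepsilon_0(\varepsilon_1)$ small --- legitimate since the constants may depend on $\varepsilon_1$. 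You also replace the paper's use of Lemma \ref{lem-line-sub} for the bound $1-\hat\lambda_k^2\le(1-\lambda_k^2)(1+\varepsilon)$ by a direct two-line argument from Lemma \ref{lem2} (note your stated constant should be $\tfrac{1+\varepsilon_0}{1-\varepsilon_0}$ rather than $1+\varepsilon_0$, which your final absorption of constants already tolerates). What the paper's route buys is sharper bookkeeping (linear rather than square-root dependence on the auxiliary $\varepsilon$'s, with $\varepsilon_l=\varepsilon/12$ explicitly) and reusable intermediate facts (Lemmas \ref{lem-property8} and \ref{lem-property7}); what yours buys is a shorter, more self-contained argument that avoids the conditional-independence representation altogether.
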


\begin{proof}
The proof is postponed to Section \ref{subsectionB}.
\end{proof}

Plugging \eqref{eq-vTV2-aTV2} into \eqref{eq-diff1-sum}, we have
\begin{equation}\label{eq-proof-thm-sub-bound1}
\left|\affYS-\left\|{\bf V}_2^{\rm T}\bar{\bf A}_1\right\|_{\rm F}^2\right| \le \varepsilon_1\sum_{k=1}^{d_1}\left(1-\lambda_k^2\right)
\end{equation}
hold with probability at least $1-d_1{\rm e}^{-c_{2,1}(\varepsilon_1)n}$ for any $n>c_{1,1}(\varepsilon_1)d_2$.

{\bf Step 2)} For the second estimation error in the RHS of \eqref{eq-proof-thm-sub-decomp},
we can convert this problem about one subspace $\mathcal{Y}_1$ with dimension $d_1$ into $d_1$ subproblems, each of which is about $1$-dimensional subspace, and then use Lemma \ref{lem-line-sub} to estimate the error.
Denote ${\set X}_{1,k}:=\Span\{{\bf u}_{1,k}\}$, ${\set Y}_{1,k}:=\Span\{{\bf a}_{1,k}\}$, $1\le k\le d_1$.
According to the definition of affinity and Remark \ref{quasi-construct}, we have that the affinity between ${\set X}_{1,k}$ and ${\set X}_2$ is $\|{\bf U}_2^{\rm T}{\bf u}_{1,k}\|=\lambda_k$, and the affinity between ${\set Y}_{1,k}$ and ${\set Y}_2$ is equal to $\left\|{\bf V}_2^{\rm T}\bar{\bf a}_{1,k}\right\|$.
Using Lemma \ref{lem-line-sub},
where ${\set X}_{1,k}$ and ${\set X}_2$ are the original subspaces and ${\set Y}_{1,k}$ and ${\set Y}_2$ are the projected subspaces, we have
\begin{align}\label{aTV2-oaffYk}
\left|\left\|{\bf V}_2^{\rm T}\bar{\bf a}_{1,k}\right\|^2-\overline{\aff}_{{\set Y}_{k}}^2\right|\le\left(1-\lambda_k^2\right)\varepsilon_2,\quad k=1,\cdots,d_1
\end{align}
hold with probability at least $1-d_1{\rm e}^{-c_{2,2}(\varepsilon_2)n}$ for any $n>c_{1,2}(\varepsilon_2)d_2$, where
\begin{equation}\label{eq-proof-thm-sub-single-estimate}
\overline{\aff}_{{\set Y}_{k}}^2:=\lambda_k^2+\frac{d_2}{n}\left(1-\lambda_k^2\right).
\end{equation}

Plugging the definition of $\oaffYS$ in \eqref{thm-sub-aff-change}, \eqref{aTV2-oaffYk}, and \eqref{eq-proof-thm-sub-single-estimate} into the second estimation error, we have
\begin{align}
\left|\left\|{\bf V}_2^{\rm T}\bar{\bf A}_1\right\|_{\rm F}^2-\oaffYS\right|
&= \left|\left\|{\bf V}_2^{\rm T}\bar{\bf A}_1\right\|_{\rm F}^2-\left(\affXS+\frac{d_2}{n}(d_1-\affXS)\right)\right|\nonumber\\
&= \left|\sum_{k= 1}^{d_1} \left( \left\|{\bf V}_2^{\rm T}\bar{\bf a}_{1,k}\right\|^2-\left(\lambda_k^2+\frac{d_2}{n}(1-\lambda_k^2)\right) \right)\right| \nonumber\\
&\le \sum_{k= 1}^{d_1} \left|\left\|{\bf V}_2^{\rm T}\bar{\bf a}_{1,k}\right\|^2-\overline{\rm aff}_{{\set Y}_k}^2\right| \nonumber\\
&\le \varepsilon_2\sum_{k= 1}^{d_1} (1-\lambda_k^2).\label{eq-aTV2-oaffYS}
\end{align}

{\bf Step 3)} Combining \eqref{eq-proof-thm-sub-bound1} and \eqref{eq-aTV2-oaffYS} together into \eqref{eq-proof-thm-sub-decomp}, we have that for any $n>\max\{c_{1,1}(\varepsilon_1),c_{1,2}(\varepsilon_2)\}d_2$,
\begin{align}\label{eq-combining}
\nonumber \left|\affYS-\oaffYS\right|
\le&\left(\varepsilon_1+\varepsilon_2\right)\sum_{k=1}^{d_1}\left(1-\lambda_k^2\right)\nonumber\\
= & \left(\varepsilon_1+\varepsilon_2\right)\left(d_1-\affXS\right)
\end{align}
hold with probability at least $1 - d_1{\rm e}^{-c_{2,1}(\varepsilon_1)n}-d_1{\rm e}^{-c_{2,2}(\varepsilon_2)n}$.
Let $\varepsilon_1=\varepsilon_2=:\varepsilon/2$, according to Remark \ref{exp-summation},
one can easily verify that there exist constants $c_1$, $c_2$ depending only on $\varepsilon$, when $n>c_1(\varepsilon)d_2$,
$$
\left|\affYS-\oaffYS\right|\le\left(d_1-\affXS\right)\varepsilon
$$
holds with probability at least $1-{\rm e}^{-c_2(\varepsilon)n}$.
Then we complete the proof.

\subsection{Proof of Lemma \ref{vTV2-aTV2}}
\label{subsectionB}

In order to improve the readability of the proof, we define intensively all the variables required in advance.
Not that some variables defined before are also summarized here to make this part self-contained.

We use ${\set X}_1$ and ${\set X}_2$ to denote the subspaces before projection, with dimensions $d_1 \le d_2$.
The \emph{principal} orthonormal bases for ${\set X}_1$ and ${\set X}_2$ are denoted as ${\bf U}_1$ and ${\bf U}_2$, respectively.
The $k$th column of ${\bf U}_i$ is denoted as ${\bf u}_{i,k}$, which spans a $1$-dimensional subspace denoted as ${\set X}_{i,k}, k=1,\cdots,d_i, i=1,2$.
In addition, we define ${\bf U}_{i,1:k}$ as the matrix composed of the first $k$ columns of ${\bf U}_i$.
That is ${\bf U}_{i,1:k}=\left[{\bf u}_{i,1},\cdots,{\bf u}_{i,k}\right], 1\le k\le d_i, i=1,2$.
The subspace spanned by the columns of ${\bf U}_{i,1:k}$ is denoted as ${\set X}_{i,1:k}=\Span({\bf U}_{i,1:k})$.

We use ${\set Y}_1$ and ${\set Y}_2$ to denote the subspaces after projection, respectively, from ${\set X}_1$ and ${\set X}_2$ by using a standard Gaussian random matrix ${\bm\Phi}$.
The dimensions of ${\set Y}_1$ and ${\set Y}_2$ stay to be $d_1$ and $d_2$ with probability $1$.
${\bf A}_i={\bm\Phi}{\bf U}_i$ is a basis for ${\set Y}_i$ and its $k$th column is denoted as ${\bf a}_{i,k}$, which spans a 1-dimensional subspace denoted as ${\set Y}_{i,k} = \Span({\bf a}_{i,k})$.
We define ${\bf A}_{i,1:k}=\left[{\bf a}_{i,1},\cdots,{\bf a}_{i,k}\right]$ as the composition of the first $k$ columns of ${\bf A}_i$.
The subspace spanned by the columns in ${\bf A}_{i,1:k}$ is denoted as ${\set Y}_{i,1:k}=\Span({\bf A}_{i,1:k})$.

We use $\bar{\bf A}_1$ and $\bar{\bf A}_2$ to denote the column-normalized result of ${\bf A}_1$ and ${\bf A}_2$, respectively.
${\bf V}_1$ and ${\bf V}_2$ are defined as the orthonormalized result of $\bar{\bf A}_1$ and $\bar{\bf A}_2$, respectively, by using Gram-Schmidt orthogonalization.
As a consequence, ${\bf V}_i$ provides an orthonormal basis for ${\set Y}_i$.
Similarly, ${\bf V}_{i,1:k}$ denotes the matrix composed of the first $k$ columns of ${\bf V}_i$.

Let's start the proof of Lemma \ref{vTV2-aTV2} from the LHS of \eqref{eq-vTV2-aTV2}.
According to the definition of ${\bf V}_2, {\bf v}_{1,k}, \bar{\bf a}_{1,k}$, and Remark \ref{principalangle-affinity-projection},
we have
\begin{align}
	\|{\bf V}_2^{\rm T}{\bf v}_{1,k}\|^2 &= \left\|\Proj_{{\set Y}_2}({\bf v}_{1,k})\right\|^2
	= 1 - \left\|\Proj_{{\set Y}_2^{\perp}}({\bf v}_{1,k})\right\|^2,\\
	\|{\bf V}_2^{\rm T}\bar{\bf a}_{1,k}\|^2 &= \left\|\Proj_{{\set Y}_2}(\bar{\bf a}_{1,k})\right\|^2
	= 1 - \left\|\Proj_{{\set Y}_2^{\perp}}(\bar{\bf a}_{1,k})\right\|^2.
\end{align}
As a consequence, the LHS of \eqref{eq-vTV2-aTV2} is derived as the difference of squared norm of the projection of ${\bf v}_{1,k}$ and $\bar{\bf a}_{1,k}$ onto to the orthogonal complement of ${\set Y}_2$, i.e.
\begin{equation}\label{eq-vTV2-aTV2-derive}
\left|\|{\bf V}_2^{\rm T}{\bf v}_{1,k}\|^2 - \|{\bf V}_2^{\rm T}\bar{\bf a}_{1,k}\|^2\right|
= \left|\left\|\Proj_{{\set Y}_2^{\perp}}({\bf v}_{1,k})\right\|^2 - \left\|\Proj_{{\set Y}_2^{\perp}}(\bar{\bf a}_{1,k})\right\|^2\right|.
\end{equation}

In order to analyze ${\bf v}_{1,k}$,
we take a close look at the Gram-Schmidt orthogonalization process.
We introduce
\begin{equation}\label{eq-define-rk}
\alpha_k:=\|\Proj_{{\set Y}_{1,1:k-1}}(\bar{\bf a}_{1,k})\| = \|{\bf V}_{1,1:k-1}^{\rm T}\bar{\bf a}_{1,k}\|
\end{equation}
as the cosine of the only principal angle between ${\set Y}_{1,k}$ and ${\set Y}_{1,1:k-1}$,
and
\begin{equation}\label{eq-define-bk}
{\bf b}_k := \frac1{\alpha_k}\Proj_{{\set Y}_{1,1:k-1}}(\bar{\bf a}_{1,k})
\end{equation}
as a unit vector along the direction of the projection of ${\bf a}_{1,k}$ onto ${\set Y}_{1,1:k-1}$.
As a consequence, the Gram-Schmidt orthogonalization process is represented by
\begin{align}\label{v1k-new}
\bar{\bf a}_{1,k} &= \Proj_{{\set Y}_{1,1:k-1}}(\bar{\bf a}_{1,k}) + \Proj_{{\set Y}_{1,1:k-1}^\perp}(\bar{\bf a}_{1,k})\nonumber\\
&= \alpha_k {\bf b}_k + \sqrt{1 - \alpha_k^2}{\bf v}_{1, k}.
\end{align}

Then we introduce
\begin{align}
\hat\lambda_k &:= \|\Proj_{{\set Y}_2}(\bar{\bf a}_{1, k})\| = \|{\bf V}_2^{\rm T}\bar{\bf a}_{1,k}\|,\label{eq-defi-barlambdak}\\
\beta_k & := \|\Proj_{{\set Y}_2}({\bf b}_k)\| = \|{\bf V}_2^{\rm T}{\bf b}_k\|\label{eq-defi-betak}
\end{align}
to denote, respectively, the cosine of the only principal angle between ${\set Y}_{1,k}$ and ${\set Y}_2$ and that between $\Span\{{\bf b}_k\}$ and ${\set Y}_2$.
Now projecting both side of \eqref{v1k-new} on the orthogonal complement of ${\set Y}_2$, we have
\begin{align}\label{v1kperp}
\sqrt{1-\hat\lambda_k^2}\bar{\bf a}_{1,k}^{\perp} = \alpha_k\sqrt{1-\beta_k^2}{\bf b}_k^{\perp} + \sqrt{1 - \alpha_k^2}
\Proj_{{\set Y}_2^\perp}({\bf v}_{1, k}),
\end{align}
where $\bar{\bf a}_{1,k}^{\perp}$ and ${\bf b}_k^{\perp}$ denotes, respectively, the unit vectors along $\Proj_{{\set Y}_2^\perp}(\bar{\bf a}_{1,k})$ and $\Proj_{{\set Y}_2^\perp}({\bf b}_{k})$.

Moving the first item in the RHS of \eqref{v1kperp} to the LHS and then taking norm on both sides, we get
\begin{equation}\label{v1kperpnorm}
(1-\alpha_k^2)\|\Proj_{{\set Y}_2^\perp}({\bf v}_{1, k})\|^2=1-\hat\lambda_k^2+\alpha_k^2\left(1-\beta_k^2\right)-2\alpha_k\sqrt{1-\hat\lambda_k^2}\sqrt{1-\beta_k^2}\langle \bar{\bf a}_{1,k}^{\perp}, {\bf b}_k^{\perp} \rangle.
\end{equation}
In addition, the norm of the projection of $\bar{\bf a}_{1,k}$ onto to ${\set Y}_2^\perp$ could be directly represented by using $\hat\lambda_k$ as
\begin{equation}\label{bara1kperpnorm}
\|\Proj_{{\set Y}_2^{\perp}}(\bar{\bf a}_{1,k})\|^2 = 1-\hat\lambda_k^2.
\end{equation}
Inserting both \eqref{v1kperpnorm} and \eqref{bara1kperpnorm} into \eqref{eq-vTV2-aTV2-derive}, we write
\begin{align}
&\left|\|{\bf V}_2^{\rm T}{\bf v}_{1,k}\|^2 - \|{\bf V}_2^{\rm T}\bar{\bf a}_{1,k}\|^2\right|\nonumber\\
= & \left|\frac{1-\hat\lambda_k^2+\alpha_k^2\left(1-\beta_k^2\right)-2\alpha_k\sqrt{1-\hat\lambda_k^2}\sqrt{1-\beta_k^2}\langle \bar{\bf a}_{1,k}^{\perp}, {\bf b}_k^{\perp} \rangle}{1-\alpha_k^2} - (1-\hat\lambda_k^2)  \right|\nonumber\\
\le & \frac{\alpha_k^2\left(1-\hat\lambda_k^2+1-\beta_k^2\right)+2\left| \alpha_k\sqrt{1-\hat\lambda_k^2}\sqrt{1-\beta_k^2}\langle \bar{\bf a}_{1,k}^{\perp}, {\bf b}_k^{\perp} \rangle\right|}{1-\alpha_k^2}.\label{eq-vTV2-aTV2-derive1}
\end{align}
Using the fact that geometric mean is no more than arithmetic mean and $\alpha_k^2<1/3$,
which will be verified soon that $\alpha_k^2$ is a small quantity, we further reshape \eqref{eq-vTV2-aTV2-derive1} as
\begin{equation}
\left|\|{\bf V}_2^{\rm T}{\bf v}_{1, k}\|^2-\left\|{\bf V}_2^{\rm T}\bar{\bf a}_{1,k}\right\|^2 \right|
\le \frac{3}2\left(1-\hat\lambda_k^2+1-\beta_k^2\right)\left( \alpha_k^2 +\left|\alpha_k\langle \bar{\bf a}_{1,k}^{\perp}, {\bf b}_k^{\perp} \rangle\right|\right).\label{eq-vTV2-aTV2-derive2}
\end{equation}

In the following, we will estimate the four quantities of $1-\hat\lambda_k^2$, $\alpha_k^2$, $1-\beta_k^2$, and $\langle \bar{\bf a}_{1,k}^{\perp}, {\bf b}_k^{\perp} \rangle$ separately.

Let us first consider $1-\hat\lambda_k^2$.
Recalling its definition in \eqref{eq-defi-barlambdak}, we have already estimate $\hat\lambda_k^2$ in \eqref{aTV2-oaffYk}.
Inserting \eqref{eq-proof-thm-sub-single-estimate} into \eqref{aTV2-oaffYk},
with probability at least $1-{\rm e}^{-c_{2,1}(\varepsilon_1)n}$,
we have for any $n>c_{1,1}(\varepsilon_1)d_2$
\begin{equation}\label{eq-bound-1-barlambdak2-prepare}
\lambda_k^2+\frac{d_2}{n}\left(1-\lambda_k^2\right) - \hat\lambda_k^2\le\left(1-\lambda_k^2\right)\varepsilon_1,
\end{equation}
Using basic algebra, \eqref{eq-bound-1-barlambdak2-prepare} is reshaped to
\begin{align}\label{eq-1-lambda}
1-\hat\lambda_k^2\le&1-\lambda_k^2-\frac{d_2}{n}\left(1-\lambda_k^2\right)+\left(1-\lambda_k^2\right)\varepsilon_1\nonumber\\
=&\left(1-\lambda_k^2\right)\left(1-\frac{d_2}{n}+\varepsilon_1\right)\nonumber\\
< & \left(1-\lambda_k^2\right)\left(1+\varepsilon_1\right), \quad k=1,\cdots,d_1.
\end{align}

The bound of \eqref{eq-1-lambda} looks direct because $\hat\lambda_k$ denotes the affinity compressed from $\lambda_k$.
According to Lemma \ref{lem-line-sub}, the former can be estimated by the latter.

Second let us check $\alpha_k = \|{\bf V}_{1,1:k-1}^{\rm T}\bar{\bf a}_{1,k}\|$.
Intuitively, because ${\set X}_{1,1:k-1}$ is orthogonal to ${\set X}_{1,k}$, the new subspaces ${\set Y}_{1,1:k-1}$ (projected from ${\set X}_{1,1:k-1}$) and ${\set Y}_{1,k}$ (projected from ${\set X}_{1,k}$) are approximately orthogonal to each other.
Actually, ${\bf V}_{1,1:k-1}$ and $\bar{\bf a}_{1,k}$ satisfy all conditions in Corollary \ref{cor-lem5}.
As a consequence, there exist constants $c_{1,2}(\varepsilon_2)$, $c_{2,2}(\varepsilon_2)$, such that for any $\varepsilon_2<\frac13$ and $n>c_{1,2}(\varepsilon_2)d_1>c_{1,2}(\varepsilon_2)(k-1)$, we have $\alpha_k^2<\varepsilon_2$ hold with probability at least $1-{\rm e}^{-c_{2,2}(\varepsilon_2)n}$.

Next we consider $1-{\beta}_k^2$, which is also bounded by $1-{\lambda}_k^2$,
Notice that ${\bf b}_k$ lies in ${\set Y}_{1,1:k-1}\subset{\set Y}_{1,1:k}$ and $\beta_k$ is the norm of the projection of ${\bf b}_k$ onto ${\set Y}_2$.
Because the minimum of the norm of the projection of a unit vector in ${\set Y}_{1,1:k}$ onto ${\set Y}_2$ approximates $\lambda_k$, $1-\beta_k^2$ should be very close to $1-\lambda_k^2$.
The difference between them can be bounded by the following lemma.

\begin{lem}\label{lem-property8}
For any $n>c_{1,3}(\varepsilon_3)d_2$, we have
\begin{align}\label{eq7}
1-\beta_k^2\le \left(1-\lambda_k^2\right)\left(1+\varepsilon_3\right)
\end{align}
holds with probability at least $1-{\rm e}^{-c_{2,3}(\varepsilon_3)n}$.
\end{lem}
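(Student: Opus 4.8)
I want to show $1-\beta_k^2 \le (1-\lambda_k^2)(1+\varepsilon_3)$ with overwhelming probability. Recall $\beta_k = \|\mathrm{P}_{\set Y_2}({\bf b}_k)\|$ where ${\bf b}_k$ is a unit vector lying inside ${\set Y}_{1,1:k-1} \subset {\set Y}_{1,1:k}$, and ${\set Y}_{1,1:k}$ is the projection of the $k$-dimensional subspace ${\set X}_{1,1:k} = \Span({\bf u}_{1,1},\dots,{\bf u}_{1,k})$. The quantity $1-\beta_k^2 = \|\mathrm{P}_{{\set Y}_2^\perp}({\bf b}_k)\|^2$ is the squared norm of the component of ${\bf b}_k$ orthogonal to ${\set Y}_2$. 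Since ${\bf b}_k$ is \emph{some} unit vector in ${\set Y}_{1,1:k}$, we have $1-\beta_k^2 \le \max_{{\bf w}\in{\set Y}_{1,1:k}, \|{\bf w}\|=1}\|\mathrm{P}_{{\set Y}_2^\perp}({\bf w})\|^2 = 1 - \left(s_{\min}\text{ of the block}\right)^2$; equivalently, letting $\mu$ denote the smallest principal-angle cosine between ${\set Y}_{1,1:k}$ and ${\set Y}_2$, we get $1-\beta_k^2 \le 1-\mu^2$. So it suffices to show $1-\mu^2 \le (1-\lambda_k^2)(1+\varepsilon_3)$, i.e. that the \emph{smallest} affinity direction of the compressed $k$-dimensional block against ${\set Y}_2$ stays close to $\lambda_k$ (the $k$-th, and smallest among the first $k$, principal-angle cosine between ${\set X}_1$ and ${\set X}_2$).

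\textbf{Key steps.} First I would decompose an arbitrary unit vector ${\bf w}\in{\set X}_{1,1:k}$, say ${\bf w} = \sum_{j=1}^k t_j {\bf u}_{1,j}$ with $\sum t_j^2 = 1$. Using the principal-basis structure from Remark~\ref{quasi-construct}, ${\bf U}_2^{\rm T}{\bf u}_{1,j} = \lambda_j {\bf u}_{2,j}$, so the projection of ${\bf w}$ onto ${\set X}_2$ has squared norm $\sum_j t_j^2 \lambda_j^2 \ge \lambda_k^2 \sum_j t_j^2 = \lambda_k^2$ (since $\lambda_1\ge\cdots\ge\lambda_k$), hence before projection $\|\mathrm{P}_{{\set X}_2^\perp}({\bf w})\|^2 \le 1-\lambda_k^2$ for every unit ${\bf w}\in{\set X}_{1,1:k}$. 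Second, I would write ${\bf w} = {\bf w}^{\parallel} + {\bf w}^{\perp}$, the parts in ${\set X}_2$ and ${\set X}_2^\perp$, with $\|{\bf w}^{\perp}\|^2 \le 1-\lambda_k^2$, and then push through ${\bm\Phi}$: ${\bm\Phi}{\bf w} = {\bm\Phi}{\bf w}^{\parallel} + {\bm\Phi}{\bf w}^{\perp}$. The component ${\bm\Phi}{\bf w}^{\parallel}$ lands in ${\set Y}_2$; the component ${\bm\Phi}{\bf w}^{\perp}$ has, after normalization, only a small leakage into ${\set Y}_2$ by Lemma~\ref{lem5}/Corollary~\ref{cor-lem5} (since ${\bf w}^\perp \perp {\set X}_2$). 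Third, I would combine: up to the usual norm-distortion factors (Lemma~\ref{lem2}, Corollary~\ref{unit-singular-value}) controlling $\|{\bm\Phi}{\bf w}\|$, $\|{\bm\Phi}{\bf w}^\parallel\|$, $\|{\bm\Phi}{\bf w}^\perp\|$ away from their nominal values, and the small cross/leakage term, $\|\mathrm{P}_{{\set Y}_2^\perp}({\bm\Phi}{\bf w}/\|{\bm\Phi}{\bf w}\|)\|^2 \le (1-\lambda_k^2)(1+\varepsilon_3)$. Finally I would take $\varepsilon$-parameters small, invoke Remark~\ref{exp-summation} to fold the finitely many $\mathrm{e}^{-c n}$ failure probabilities into one, and note $n > c_{1,3}(\varepsilon_3)d_2 \ge c_{1,3}(\varepsilon_3)k$ makes all the component lemmas applicable.

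\textbf{The main obstacle.} The subtlety is that ${\bf b}_k$ (hence the direction realizing $\beta_k$) is itself random and depends on the same ${\bm\Phi}$, so I cannot simply fix a deterministic ${\bf w}$ and apply the pointwise leakage bound. The clean way around this is the \emph{uniform} bound: control $\min_{{\bf w}}$ over the whole subspace ${\set Y}_{1,1:k}$ at once, i.e. bound the smallest singular value of ${\bf V}_2^{\rm T}{\bf V}_{1,1:k}$ from below rather than evaluating at one vector. Concretely, I would pick a deterministic orthonormal basis of ${\set X}_{1,1:k}$ adapted to ${\set X}_2$ (the $\{{\bf u}_{1,j}\}_{j\le k}$ already are), pass each through ${\bm\Phi}$, and show the resulting (normalized, Gram--Schmidt-ed) basis ${\bf V}_{1,1:k}$ has all $k$ principal-angle cosines with ${\set Y}_2$ bounded below by $\lambda_k\sqrt{1-\varepsilon_3}$ — this is where the full strength of Corollary~\ref{unit-singular-value} (to handle that ${\bf A}_{1,1:k}$ is only approximately orthonormal) and Corollary~\ref{cor-lem5} (to handle the orthogonal leakage) must be combined carefully, tracking how the distortion accumulates across $k \le d_2$ columns while keeping the bound dimension-free in the leading order. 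Everything else is routine triangle-inequality bookkeeping of the sort already carried out in the proof of Lemma~\ref{lem-line-sub}.
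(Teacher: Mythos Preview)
Your plan is correct and structurally matches the paper's proof: both bound $1-\beta_k^2$ by $\max_{\|{\bf b}\|=1,\,{\bf b}\in{\set Y}_{1,1:k}}\|\Proj_{{\set Y}_2^\perp}({\bf b})\|^2$ and then control this maximum via singular-value estimates on the projected basis matrices, pulling out the factor $1-\lambda_k^2$ from the principal-angle structure of ${\set X}_{1,1:k}$ versus ${\set X}_2$.

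Two minor divergences worth noting. First, your invocation of Corollary~\ref{cor-lem5} (leakage of ${\bm\Phi}{\bf w}^\perp$ into ${\set Y}_2$) is unnecessary for this one-sided bound: since ${\bm\Phi}{\bf w}^\parallel\in{\set Y}_2$, one has $\|\Proj_{{\set Y}_2^\perp}({\bm\Phi}{\bf w})\|=\|\Proj_{{\set Y}_2^\perp}({\bm\Phi}{\bf w}^\perp)\|\le\|{\bm\Phi}{\bf w}^\perp\|$, and the trivial inequality already suffices. Second, the paper extracts the $(1-\lambda_k^2)$ factor on the ${\set Y}$-side, routing through $1-\hat\lambda_j^2\le(1-\lambda_k^2)(1+\varepsilon)$ via Lemma~\ref{lem-line-sub}, and then controls $s_{\max}(\bar{\bf A}_{1,1:k}^\perp)$ by identifying it (through $\bar{\bf A}_{1,1:k}^\perp=\bar{\bf A}_{0,1:k}^\perp$ and the independence of ${\bf A}_0$ from ${\set Y}_2$, Remark~\ref{Gaussian-projection}) with a column-normalized Gaussian in $\mathbb{R}^{(n-d_2)\times k}$. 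Your ${\set X}$-side parameterization gets $(1-\lambda_k^2)$ directly from $\|{\bf w}^\perp\|^2\le(1-\lambda_k^2)\|{\bf w}\|^2$ and then needs only $s_{\max}({\bf A}_{0,1:k})$ and $s_{\min}({\bf A}_{1,1:k})$ from Lemma~\ref{T}; this is a touch more elementary and avoids re-invoking Lemma~\ref{lem-line-sub} inside this proof. One caution: your phrasing in the obstacle paragraph (``cosines bounded below by $\lambda_k\sqrt{1-\varepsilon_3}$'') would only give $1-\mu^2\le 1-\lambda_k^2+\lambda_k^2\varepsilon_3$, which is \emph{not} the required $(1-\lambda_k^2)(1+\varepsilon_3)$ when $\lambda_k$ is close to $1$; stick to the formulation in your ``key steps'' section, which correctly targets the perpendicular component and yields the right multiplicative form.
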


\begin{proof}
The proof is postponed to Appendix \ref{proof-lem-property8}.
\end{proof}

Finally, as for the last term to be estimated, $\langle \bar{\bf a}_{1,k}^{\perp}, {\bf b}_k^{\perp} \rangle$ is proved to be a small quantity in Lemma \ref{lem-property7}.
Intuitively, $\bar{\bf a}_{1,k}^{\perp}$ and ${\bf b}_k^{\perp}$ is unit projections of $\bar{\bf a}_{1,k}\in{\set Y}_{1,k}$ and ${\bf b}_k\in{\set Y}_{1,1:k-1}$, respectively, onto ${\set Y}_2^{\perp}$.
Consequently, the inner product between $\bar{\bf a}_{1,k}^{\perp}$ and ${\bf b}_k^{\perp}$ should be very small if ${\set Y}_{1,k}$ and ${\set Y}_{1,1:k-1}$, which are independent with each other, are both independent with ${\set Y}_2^{\perp}$.

\begin{lem}\label{lem-property7}
There exist constants $c_{1,4}(\varepsilon_4)$, $c_{2,4}(\varepsilon_4)$, such that for any $n>c_{1,4}(\varepsilon_4)d_2$, we have
$
\left|\langle \bar{\bf a}_{1,k}^{\perp}, {\bf b}_k^{\perp} \rangle\right|^2 < \varepsilon_4
$
holds with probability at least $1-{\rm e}^{-c_{2,4}(\varepsilon_4)n}$.
\end{lem}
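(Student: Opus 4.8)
The plan is to push every relevant vector through the orthogonal projection onto $\set Y_2^\perp$ and to use the \emph{principal} orthonormal bases of Remark \ref{quasi-construct} to realize $\bar{\bf a}_{1,k}^\perp$ as the normalized projection onto $\set Y_2^\perp$ of a single standard Gaussian vector ${\bf c}_k$, while showing that ${\bf b}_k^\perp$ is confined to a subspace $\set Z$ of dimension at most $k-1$ that is independent of ${\bf c}_k$. Once both facts are established, the claim follows because a uniformly random unit direction in the $(n-d_2)$-dimensional space $\set Y_2^\perp$ has negligible projection onto any fixed subspace of dimension at most $d_2$, which is exactly what Corollary \ref{cor-lem2} quantifies.

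To set up the reduction, I would first fix ${\bf w}_k:=\bigl({\bf u}_{1,k}-\lambda_k{\bf u}_{2,k}\bigr)/\sqrt{1-\lambda_k^2}$, so that ${\bf u}_{1,k}=\lambda_k{\bf u}_{2,k}+\sqrt{1-\lambda_k^2}\,{\bf w}_k$. From the relation ${\bf U}_2^{\rm T}{\bf U}_1={\bm\Lambda}$ of Remark \ref{quasi-construct} one reads off that ${\bf w}_k$ is a unit vector lying in $\set X_2^\perp$ and orthogonal to ${\bf u}_{1,j}$ for every $j<k$, hence orthogonal to $\set X_{1,1:k-1}$. Putting ${\bf c}_k:=\bm\Phi{\bf w}_k$ — which is again a standard Gaussian vector — Lemma \ref{independence} yields that ${\bf c}_k$ is independent of the pair $\bigl(\bm\Phi{\bf U}_2,\bm\Phi{\bf U}_{1,1:k-1}\bigr)$, and therefore of $\set Y_2$, of $\set Y_2^\perp$, and of $\set Y_{1,1:k-1}$.

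Next I would record two geometric identities. Since ${\bf a}_{1,k}=\bm\Phi{\bf u}_{1,k}=\lambda_k{\bf a}_{2,k}+\sqrt{1-\lambda_k^2}\,{\bf c}_k$ with ${\bf a}_{2,k}\in\set Y_2$, projecting onto $\set Y_2^\perp$ kills the first term, so $\Proj_{\set Y_2^\perp}(\bar{\bf a}_{1,k})$ is a positive scalar multiple of $\Proj_{\set Y_2^\perp}({\bf c}_k)$, whence $\bar{\bf a}_{1,k}^\perp=\Proj_{\set Y_2^\perp}({\bf c}_k)/\|\Proj_{\set Y_2^\perp}({\bf c}_k)\|$. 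On the other hand ${\bf b}_k\in\set Y_{1,1:k-1}$, so $\Proj_{\set Y_2^\perp}({\bf b}_k)$, and hence ${\bf b}_k^\perp$, lies in $\set Z:=\Proj_{\set Y_2^\perp}(\set Y_{1,1:k-1})$, a subspace of dimension at most $k-1\le d_2-1$ which is a deterministic function of $\bigl(\bm\Phi{\bf U}_2,\bm\Phi{\bf U}_{1,1:k-1}\bigr)$ and therefore independent of ${\bf c}_k$. The delicate point here is that, although the particular direction ${\bf b}_k$ depends on $\bar{\bf a}_{1,k}$ and thus on ${\bf c}_k$, the enclosing subspace $\set Z$ does not.

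To conclude, since ${\bf b}_k^\perp$ is a unit vector of $\set Z$ and $\set Z\subseteq\set Y_2^\perp$,
$$
\bigl|\langle\bar{\bf a}_{1,k}^\perp,{\bf b}_k^\perp\rangle\bigr|^2\le\bigl\|\Proj_{\set Z}(\bar{\bf a}_{1,k}^\perp)\bigr\|^2=\frac{\|\Proj_{\set Z}({\bf c}_k)\|^2}{\|\Proj_{\set Y_2^\perp}({\bf c}_k)\|^2}.
$$
Conditioning on $\bigl(\bm\Phi{\bf U}_2,\bm\Phi{\bf U}_{1,1:k-1}\bigr)$ fixes $\set Z$ and $\set Y_2$ while leaving ${\bf c}_k$ a standard Gaussian vector, so Corollary \ref{cor-lem2} bounds the numerator by $\tfrac{k-1}{n}+\varepsilon'\le\tfrac{d_2}{n}+\varepsilon'$ for $n>c_1(\varepsilon')d_2$, while Lemma \ref{lem2} together with Corollary \ref{cor-lem2} applied to $\set Y_2$ gives $\|\Proj_{\set Y_2^\perp}({\bf c}_k)\|^2=\|{\bf c}_k\|^2-\|\Proj_{\set Y_2}({\bf c}_k)\|^2\ge 1-\tfrac{d_2}{n}-2\varepsilon'>\tfrac12$ once $n$ is large enough. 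Combining these two estimates, $\bigl|\langle\bar{\bf a}_{1,k}^\perp,{\bf b}_k^\perp\rangle\bigr|^2<2\bigl(\tfrac{d_2}{n}+\varepsilon'\bigr)$, which falls below $\varepsilon_4$ as soon as $n>c_{1,4}(\varepsilon_4)d_2$ with $c_{1,4}$ chosen large enough and $\varepsilon'$ small enough, and the finitely many exceptional events are merged into a single ${\rm e}^{-c_{2,4}(\varepsilon_4)n}$ by Remark \ref{exp-summation}. I expect the main obstacle to be precisely the independence bookkeeping of the two middle paragraphs — isolating ${\bf c}_k$ and verifying that it is independent of $\set Y_2$ and of a low-dimensional subspace that still contains ${\bf b}_k^\perp$ in spite of ${\bf b}_k$ itself being correlated with ${\bf c}_k$ — after which the two concentration bounds are routine applications of the lemmas already proved.
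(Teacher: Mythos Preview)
Your argument is correct and follows essentially the same route as the paper: your ${\bf w}_k$ and ${\bf c}_k$ are exactly the paper's ${\bf u}_{0,k}$ and ${\bf a}_{0,k}$ from Remark~\ref{quasi-basis-decompose}, and your containing subspace $\set Z=\Proj_{{\set Y}_2^\perp}({\set Y}_{1,1:k-1})$ is the paper's $\Span(\bar{\bf A}_{0,1:k-1}^\perp)$. The only cosmetic difference is in the last step: the paper passes to coordinates in ${\set Y}_2^\perp$ via ${\bf V}_2^\perp$, writes $\bar{\bf A}_{0,1:k}^\perp={\bf V}_2^\perp\bar{\bm\Omega}_{1:k}$ with ${\bm\Omega}_{1:k}\in\mathbb{R}^{(n-d_2)\times k}$ Gaussian, and then invokes Remark~\ref{rem-p2epsilon1new} once on $\|{\bf W}_{1:k-1}^{\rm T}\bar{\bm\omega}_k\|$, whereas you stay in $\mathbb{R}^n$ and bound the ratio $\|\Proj_{\set Z}({\bf c}_k)\|^2/\|\Proj_{{\set Y}_2^\perp}({\bf c}_k)\|^2$ by two separate applications of Corollary~\ref{cor-lem2} and Lemma~\ref{lem2}; both finishes are routine and equivalent.
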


\begin{proof}
The proof is postponed to Appendix \ref{proof-lem-property7}.
\end{proof}

Now, we are ready to complete the proof by using the concentration properties derived above.
Plugging \eqref{eq-1-lambda}, \eqref{eq7}, $\alpha_k^2<\varepsilon_2$, and $\left|\langle \bar{\bf a}_{1,k}^{\perp}, {\bf b}_k^{\perp} \rangle\right|^2 < \varepsilon_4$ into \eqref{eq-vTV2-aTV2-derive2},
we have for any $n>\max\{c_{1,l}(\varepsilon_l)\}d_2$, $l=1,2,3,4$,
\begin{align*}
\left|\|{\bf V}_2^{\rm T}{\bf v}_{1, k}\|^2-\left\|{\bf V}_2^{\rm T}\bar{\bf a}_{1,k}\right\|^2 \right|
\le&\frac{3}2\left(1-\lambda_k^2\right)\left(2+\varepsilon_1+\varepsilon_3\right)\left(\varepsilon_2+\sqrt{\varepsilon_2\varepsilon_4}\right)
\end{align*}
hold with probability at least
$
1-\sum_{l=1}^4{\rm e}^{-c_{2,l}(\varepsilon_l)n}.
$
Let $\varepsilon_1<1$ and $\varepsilon_3<1$, $\varepsilon_2=\varepsilon_4=:\varepsilon/12$, then we have
\begin{align}\label{vTV-aTV-bound}
\left|\|{\bf V}_2^{\rm T}{\bf v}_{1, k}\|^2-\left\|{\bf V}_2^{\rm T}\bar{\bf a}_{1,k}\right\|^2 \right|\le\left(1-\lambda_k^2\right)\varepsilon.
\end{align}
According to Remark \ref{exp-summation}, we claim that there exist constants $c_1$, $c_2$, such that for any $n>c_1(\varepsilon)d_2$, \eqref{vTV-aTV-bound} holds with probability at least $1-{\rm e}^{-c_2(\varepsilon)n}$.

\section{Related Works}
\label{Relatedworks}

Our earlier results on the RIP of subspaces in \cite{li2017restricted} are cited below.

\begin{thm}\label{thm-rip-previous}
Suppose $\set X_1, \set X_2\subset \mathbb{R}^N$ are two subspaces with dimension $d_1 \le d_2$, respectively.
If $\set X_1$ and $\set X_2$ are projected into $\mathbb R^n$ by a Gaussian random matrix ${\bm\Phi} \in\mathbb{R}^{n\times N}$, $\set{X}_k \stackrel{\bm \Phi}{\longrightarrow} \set{Y}_k, k=1,2$, then we have
$$
    (1-\varepsilon)\dXS \le \dYS  \le (1+\varepsilon)\dXS,
$$
with probability at least
$$
	1 - \frac{4d_1}{(\varepsilon-{d_2}/{n})^2n},
$$
when $n$ is large enough.
\end{thm}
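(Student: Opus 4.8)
The plan is to recast the distance bound as a concentration statement for the projected affinity $\affY$ and then close with a second--moment (Chebyshev) argument; the resulting $1-\Theta(d_1/n)$ estimate is the characteristic output of a second-moment, rather than an exponential-tail, argument, and that is the form to aim at here.

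\emph{Reduction to affinity.} Since the dimensions are preserved with probability one, Lemma \ref{lem-aff-to-dist} gives $\dYS-\dXS=\affXS-\affYS$, so $(1-\varepsilon)\dXS\le\dYS\le(1+\varepsilon)\dXS$ is equivalent to $|\affYS-\affXS|\le\varepsilon\dXS$. Split $\affYS-\affXS=(\affYS-\mathbb E\affYS)+(\mathbb E\affYS-\affXS)$ and anticipate $\mathbb E\affYS=\oaffYS=\affXS+\frac{d_2}{n}(d_1-\affXS)$, so the bias equals $\frac{d_2}{n}(d_1-\affXS)$, with $0\le d_1-\affXS=\dXS-\frac{d_2-d_1}{2}\le\dXS$. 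It therefore suffices to show that the fluctuation satisfies $|\affYS-\mathbb E\affYS|\le(\varepsilon-\frac{d_2}{n})(d_1-\affXS)$ with the claimed probability, for then $|\affYS-\affXS|\le\varepsilon(d_1-\affXS)\le\varepsilon\dXS$. (If $d_1-\affXS=0$, i.e.\ $\set X_1\subseteq\set X_2$, then $\set Y_1\subseteq\set Y_2$ almost surely and the bound is deterministic; throughout, ``$n$ large enough'' means $n>d_2/\varepsilon$ together with a $\Theta(d_2)$ threshold needed for the moment estimates below.)

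\emph{Mean and variance of the projected affinity.} I would express $\affYS=\|{\bf V}_2^{\rm T}{\bf V}_1\|_{\rm F}^2=\sum_{k=1}^{d_1}\|{\bf V}_2^{\rm T}{\bf v}_{1,k}\|^2$ using principal orthonormal bases (Remark \ref{quasi-construct}) together with the per-direction decomposition employed in Section \ref{Appendixproof-thm-sub}: replacing each ${\bf v}_{1,k}$ by its column-normalized counterpart $\bar{\bf a}_{1,k}$ (a near-orthonormal proxy, by Corollary \ref{unit-singular-value}) leaves the line-versus-subspace quantities $\hat\lambda_k^2=\|{\bf V}_2^{\rm T}\bar{\bf a}_{1,k}\|^2$, which, via ${\bf u}_{1,k}=\lambda_k{\bf u}_{2,k}+\sqrt{1-\lambda_k^2}\,{\bf u}_{0,k}$ and the algebra of Step 1 of Section \ref{Appendixproof-thm-sub-lem-line-sub}, are explicit ratios of quadratic forms in independent standard Gaussian vectors (a squared projection onto a fixed $d_2$-dimensional subspace over a squared norm). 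Gaussian moment identities give $\mathbb E[\hat\lambda_k^2]=\lambda_k^2+\frac{d_2}{n}(1-\lambda_k^2)$, hence $\mathbb E[\affYS]=\oaffYS$, and a per-term bound $\mathrm{Var}(\hat\lambda_k^2)=O\!\big((1-\lambda_k^2)^2/n\big)$. Summing and bounding covariances by Cauchy--Schwarz, $\mathrm{Var}(\affYS)\le d_1\sum_{k=1}^{d_1}\mathrm{Var}(\hat\lambda_k^2)=O\!\big(\tfrac{d_1}{n}(\sum_{k=1}^{d_1}(1-\lambda_k^2))^2\big)=O\!\big(d_1(d_1-\affXS)^2/n\big)$, the errors from the $\bar{\bf A}_1\!\to\!{\bf V}_1$ substitution being lower order.

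\emph{Chebyshev, and the main obstacle.} Chebyshev's inequality then gives $\mathbb P\big(|\affYS-\mathbb E\affYS|>(\varepsilon-\tfrac{d_2}{n})(d_1-\affXS)\big)\le\frac{O(d_1(d_1-\affXS)^2/n)}{(\varepsilon-d_2/n)^2(d_1-\affXS)^2}=O\!\big(\tfrac{d_1}{(\varepsilon-d_2/n)^2 n}\big)$, and tracking the constants (the ``$4$'' arising from the explicit Gaussian fourth-moment bound, the Cauchy--Schwarz step, and $\sum a_k^2\le(\sum a_k)^2$) yields the stated $1-\frac{4d_1}{(\varepsilon-d_2/n)^2 n}$. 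The hard part is the variance estimate: $\affYS$ is a genuinely nonlinear functional of ${\bm\Phi}$ — Gram--Schmidt is not polynomial — so its moments cannot be read off directly; the work lies in controlling the proxy $\bar{\bf A}_i\!\leftrightarrow\!{\bf V}_i$ via Corollary \ref{unit-singular-value}, reducing cleanly to the one-dimensional case, evaluating the ratio-of-quadratic-forms moments, and above all bounding the $d_1^2$ covariances among the $\|{\bf V}_2^{\rm T}{\bf v}_{1,k}\|^2$ — coupled both through the shared matrix ${\bm\Phi}$ and through the Gram--Schmidt recursion — without giving up more than the constant and the factor $d_1$ already present. (Alternatively one may simply invoke Corollary \ref{thm-sub-dis}, which delivers a much stronger exponential bound; the value of this weaker statement is precisely that it admits the above elementary argument.)
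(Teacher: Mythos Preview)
This theorem is not proved in the present paper at all: it is quoted in Section~\ref{Relatedworks} as a result from the authors' earlier work \cite{li2017restricted}, and the surrounding discussion explicitly states that the earlier bound was obtained ``where we used Chebyshev inequality.'' So there is no in-paper proof to compare against; the relevant benchmark is the Chebyshev argument of \cite{li2017restricted}, and your proposal follows precisely that route --- convert distance to affinity via Lemma~\ref{lem-aff-to-dist}, isolate the bias $\tfrac{d_2}{n}(d_1-\affXS)$ (hence the $\varepsilon-d_2/n$ in the denominator), bound the variance of $\affYS$ as $O(d_1(d_1-\affXS)^2/n)$, and finish with Chebyshev. Your final parenthetical, that Corollary~\ref{thm-sub-dis} already supersedes this with an exponential tail, is exactly the point of Section~\ref{Relatedworks}.

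One caveat on the sketch: the assertion $\mathbb E\,\affYS=\oaffYS$ is stated as an anticipation, and in fact $\affYS$ is a ratio of dependent quadratic forms (cf.\ \eqref{eq-lem-line-sub-proof-affY-2}), so the expectation need not equal $\oaffYS$ exactly; what is needed, and what the argument in \cite{li2017restricted} supplies, is that the bias is at most $\tfrac{d_2}{n}(d_1-\affXS)$ up to lower-order terms absorbed by ``$n$ large enough.'' Similarly, your variance computation is for the proxies $\hat\lambda_k^2=\|{\bf V}_2^{\rm T}\bar{\bf a}_{1,k}\|^2$, and you must either control $\mathrm{Var}\big(\|{\bf V}_2^{\rm T}{\bf v}_{1,k}\|^2-\hat\lambda_k^2\big)$ directly (the second-moment analogue of Lemma~\ref{vTV2-aTV2}) or compute the moments of $\affYS$ itself. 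These are exactly the steps you flag as ``the hard part,'' so the outline is sound; just be aware that the constant $4$ and the precise threshold on $n$ come from carrying those moment calculations through rather than from the Cauchy--Schwarz covariance bound alone.
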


\begin{thm}\label{thm-rip-set-previous}
For any set composed by $L$ subspaces ${\set X}_1, \cdots, {\set X}_L \in \mathbb{R}^N$ of dimension no more than $d$,
if they are projected into $\mathbb R^n$ by a Gaussian random matrix ${\bm\Phi} \in\mathbb{R}^{n\times N}$,
$
\set{X}_k \stackrel{\bm \Phi}{\longrightarrow} \set{Y}_k, k=1,\cdots,L,
$
and $d\ll n < N$, then we have
$$
    (1-\varepsilon)D^2({\set X}_i,{\set X}_j) \le D^2({\set Y}_i,{\set Y}_j) \le (1+\varepsilon)D^2({\set X}_i,{\set X}_j), \quad \forall i, j
$$
with probability at least
$$
	1 - \frac{2dL(L-1)}{(\varepsilon-d/{n})^2n},
$$
when $n$ is large enough.
\end{thm}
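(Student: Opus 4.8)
The plan is to follow the same two-stage template used for Theorem~\ref{thm-rip} in Section~\ref{Appendixproof-thm-sub-multi}: first prove the two-subspace statement (Theorem~\ref{thm-rip-previous}) for a fixed pair, then take a union bound over the $\binom{L}{2}=L(L-1)/2$ unordered pairs. The only structural difference is that the per-pair bound here is \emph{polynomial} in $1/n$, so the union bound multiplies it by $L(L-1)/2$ and the result stays polynomial (rather than being reabsorbed into a single exponential as in Section~\ref{Appendixproof-thm-sub-multi}). Consequently the real work is the two-subspace estimate, which I would carry out by a second-moment (Chebyshev) argument rather than the exponential-concentration machinery of Sections~\ref{Appendixproof-thm-sub-lem-line-sub}--\ref{Appendixproof-thm-sub}; the shape $1-\mathcal O(1/n)$ of the success probability is exactly what a variance bound of this type produces.

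For the two-subspace statement I would first pass from distances to affinities via Lemma~\ref{lem-aff-to-dist}: since $\dXS=\frac{d_1+d_2}{2}-\affXS$ and $\dYS=\frac{d_1+d_2}{2}-\affYS$, the two-sided bound $(1-\varepsilon)\dXS\le\dYS\le(1+\varepsilon)\dXS$ is equivalent to $|\affYS-\affXS|\le\varepsilon\dXS$. Writing $\affYS-\affXS=(\affYS-\oaffYS)+\frac{d_2}{n}(d_1-\affXS)$ with the deterministic estimate $\oaffYS=\affXS+\frac{d_2}{n}(d_1-\affXS)$ of Theorem~\ref{thm-sub}, and using $d_1-\affXS=\dXS-\frac{d_2-d_1}{2}\le\dXS$, it suffices to control the random part $|\affYS-\oaffYS|$ within a window of order $(\varepsilon-d_2/n)\dXS$. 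This is precisely why one needs $n$ large enough that $\varepsilon>d_2/n$: that is the condition under which the deterministic bias $\frac{d_2}{n}(d_1-\affXS)$ fits inside the $\varepsilon$-window.

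Next I would estimate the first two moments of $\affYS=\|{\bf V}_2^{\rm T}{\bf V}_1\|_{\rm F}^2={\rm tr}({\bf P}_{{\set Y}_1}{\bf P}_{{\set Y}_2})$, where ${\bf P}_{{\set Y}_i}={\bm\Phi}{\bf U}_i({\bf U}_i^{\rm T}{\bm\Phi}^{\rm T}{\bm\Phi}{\bf U}_i)^{-1}{\bf U}_i^{\rm T}{\bm\Phi}^{\rm T}$ is the orthogonal projector onto the projected subspace. The expectation $\mathbb E[\affYS]$ should reproduce $\oaffYS$ up to an $o(1/n)$ error, and the key estimate is a variance bound ${\rm Var}[\affYS]=\mathcal O(d_1/n)$ (up to factors controlled by the subspace geometry). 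I would obtain this either from moment formulas for the inverse-Wishart factors $({\bf U}_i^{\rm T}{\bm\Phi}^{\rm T}{\bm\Phi}{\bf U}_i)^{-1}$, or—closer to the line-versus-subspace analysis—by decomposing the affinity through the principal angles (Remark~\ref{quasi-construct}), handling each $\cos^2\theta_k$ after projection as in Lemma~\ref{lem-line-sub}, and summing the per-angle second moments while showing that cross-terms between distinct principal directions contribute only $\mathcal O(1/n)$. Chebyshev's inequality $\mathbb P(|\affYS-\mathbb E[\affYS]|>t)\le{\rm Var}[\affYS]/t^2$ with $t$ of order $(\varepsilon-d_2/n)\dXS$ then gives a per-pair failure probability of order $d_1/((\varepsilon-d_2/n)^2n)$; tracking the constant yields the stated $4d_1/((\varepsilon-d_2/n)^2n)$, and the union bound over the $\binom{L}{2}$ pairs together with $d_1\le d$, $d_2\le d$ produces $1-\frac{2dL(L-1)}{(\varepsilon-d/n)^2n}$.

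The main obstacle is the variance estimate. The projected affinity is a ratio of Gaussian quadratic forms; the inverse-Wishart factors couple all the coordinates, and the column-normalization and Gram--Schmidt step needed to build honest orthonormal bases of ${\set Y}_1$ and ${\set Y}_2$ introduce further dependence, so a naive fourth-moment expansion is unwieldy. Keeping the variance down to $\mathcal O(d_1/n)$—and, separately, arguing that the per-principal-angle contributions are weakly enough dependent that their fluctuations add rather than compound—is the crux. Pushing this same step further with the exponential concentration inequalities of Section~\ref{preparation} and the geometric decomposition of Section~\ref{Appendixproof-thm-sub} is precisely what upgrades the $1-\mathcal O(1/n)$ bound here to the $1-{\rm e}^{-\mathcal O(n)}$ bound of Theorem~\ref{thm-rip}.
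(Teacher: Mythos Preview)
The paper does not actually prove Theorem~\ref{thm-rip-set-previous}: it is quoted in Section~\ref{Relatedworks} as a result from the authors' earlier work \cite{li2017restricted} and is presented solely for comparison with the improved Theorem~\ref{thm-rip}. So there is no ``paper's own proof'' here to match against.

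That said, the paper does tell you what the proof in \cite{li2017restricted} looked like: in Section~\ref{Relatedworks} it explicitly says the $1-\mathcal O(1/n)$ probability in the previous work came from using \emph{Chebyshev's inequality}, as opposed to the exponential concentration used here. Your plan --- a second-moment bound on $\affYS$ around its mean $\oaffYS$, Chebyshev to get a $1-\mathcal O(d_1/n)$ per-pair bound (Theorem~\ref{thm-rip-previous}), and then a union bound over $\binom{L}{2}$ pairs --- is exactly that template, and your reduction $|\affYS-\affXS|\le\varepsilon\dXS$ together with $d_1-\affXS\le\dXS$ and the requirement $\varepsilon>d_2/n$ is the right way to isolate the random fluctuation from the deterministic bias. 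So at the level of strategy your proposal is aligned with what the paper attributes to \cite{li2017restricted}, and the union-bound step is essentially identical to the one carried out in Section~\ref{Appendixproof-thm-sub-multi}.
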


Compared with the our previous results, this paper has the following two main improvements.
Firstly, because we use more advanced random matrix theories and deal with the error more skillfully, 
the probability bound $1 - {\rm e}^{-\mathcal{O}(n)}$ derived in this paper is much tighter than 
the $1 - \mathcal{O}(1/n)$ in the previous work, where we used Chebyshev inequality.
Such improvement provides a more accurate law of magnitude of the dimensions
in this random projection problem,
and the improved probability bound is optimum, if one compares it with the analogical conclusions 
in the theory of Compressed Sensing.
Secondly, Theorem \ref{thm-rip-set-previous} requires $n \gg d$,
but it does not specify how large $n$ should be or the connection between $\varepsilon$, $d$, $L$, and the lower bound of $n$. 
In comparison, Theorem \ref{thm-rip} in this paper rigorously clarifies that the conclusion will hold
as long as $n$ is larger than $c_1(\varepsilon)\max\{d, \ln L\}$.

\section{Conclusion}
\label{secConclusion}

In this paper, we utilize the random matrix theory to rigorously prove the RIP of Gaussian random compressions 
for low-dimensional subspaces.
Mathematically, we demonstrate that as long as the dimension after compression $n$ is larger than 
$c_1(\varepsilon)\max\{d, \ln L\}$, with probability no less than $1 - {\rm e}^{-c_2(\varepsilon)n}$,
the distance between any two subspaces after compression remains almost unchanged.
The probability bound $1 - {\rm e}^{-\mathcal{O}(n)}$ is optimum in the asymptotic sense, 
in comparison with the analogical optimum theoretical result of RIP in Compressed Sensing.
Our work can provide a solid theoretical foundation for Compressed Subspace Clustering and other low-dimensional subspace related problems.

\section{Appendix}
\label{secAppendix}

\subsection{Proof of Lemma \ref{exp-formal}}
\label{proof-exp-formal}

We first prove the special case that $K = 1$, i.e., $f = a{\rm e}^{-g}$ for short.
According to \eqref{exp-formal-condition1}, \eqref{exp-formal-condition2}, and the definition of limitation, there exist constants $n_0$ and $c_1 > 0$ depending only on $\varepsilon$. When $n>n_0$, $\tau<c_1$, we have $\frac{g}n>h - \frac{h-b}{3}$, and $\frac{\ln a}{n} < b + \frac{h-b}{3}$.
Let $c_2:=\frac{h-b}{3} > 0$ depending only on $\varepsilon$, we can have
\begin{align*}
f &= a{\rm e}^{-g} = {\rm e}^{-(g-\ln a)} = \exp\left(-n\left(\frac{g}{n} - \frac{\ln a}{n}\right)\right)\\
&\le \exp\left(-n\left(h - \frac{h-b}{3} - b - \frac{h-b}{3}\right)\right) \\
&= \exp\left(-\frac{h-b}{3}n\right) = {\rm e}^{-c_2n}.
\end{align*}

Now we consider the general case of arbitrary $K$.
According to the above analysis, we have that, for each term of $f$, there exist constants $n_{0,k}, c_{1,k} > 0$, and $c_{2,k} > 0$ depending only on $\varepsilon$. When $n>n_{0,k}$, $\tau<c_{1,k}$, it satisfies that
$
a_k{\rm e}^{-g_k} < {\rm e}^{-c_{2,k} n}.
$
Let
$
n_0 :=\max_k n_{0,k},\quad c_1 :=\min_k c_{1,k} > 0,\quad  c_2 :=\min_k c_{2,k} > 0.
$
Then when $n>n_0$, $\tau<c_1$, we have that
$$
f = \frac{1}{K}\sum_{k = 1}^K a_k{\rm e}^{-g_k}<\frac{1}{K}\sum_{k = 1}^K{\rm e}^{-nc_{2,k}}\le {\rm e}^{-c_2n},
$$
and complete the proof.

\subsection{Proof of Lemma \ref{lem2}}
\label{proof-lem2}

Regarding $\sqrt{n}\va$ as a matrix belonging to $\mathbb{R}^{n\times 1}$,
and using Lemma \ref{T},
we have that with probability at least $1 - 2{\rm e}^{-\frac{t^2}2}$,
\begin{align*}
\sqrt{n}-1-t \leq s_{\min}(\sqrt{n}\va) =\sqrt{n}\|\va\| = s_{\max}(\sqrt{n}\va) \leq \sqrt{n}+1+t.
\end{align*}
Taking square and subtracting $n$ from both sides, we have
$$
-\left(2\sqrt{n}(1+t)+\left(1+t\right)^2\right)\le-\left(2\sqrt{n}(1+t)-\left(1+t\right)^2\right) \leq n\|\va\|^2-n \leq 2\sqrt{n}(1+t)+\left(1+t\right)^2,
$$
with probability at least $1 - 2{\rm e}^{-\frac{t^2}2}$.

By choosing $\varepsilon$ satisfying $n\varepsilon = 2\sqrt{n}(1+t) + \left(1+t\right)^2$, we can get
\begin{align}\label{eq32}
t=\sqrt{n}\left(\sqrt{1+\varepsilon}-1-{1}/{\sqrt{n}}\right).
\end{align}
When $n>\left(\frac{1}{\sqrt{\varepsilon+1}-1}\right)^2=:n_{0,1}$, we have $t>0$.
Substituting this equation into the expression of probability, we have
\begin{align}\label{eq33}
\mathbb{P}\left(\left| \left\|{\bf a}\right\|^2-1 \right|>\varepsilon \right)<2\exp\left(-n\left(\sqrt{1+\varepsilon}-1-{1}/{\sqrt{n}}\right)^2/2\right).
\end{align}
According to Lemma \ref{exp-formal}, there exist constants $n_{0,2}$ and $c$ dependent on $\varepsilon$, such that the ${\rm RHS}$ of \eqref{eq33} is smaller than ${\rm e}^{-cn}$.
Taking $n_0=\max\{n_{0,1},n_{0,2}\}$, we complete the proof.

\subsection{Proof of Corollary \ref{cor-lem2}}
\label{proof-cor-lem2}

\begin{proof}
Notice that $\sqrt{\frac{n}d}{\bf V}^{\rm T}{\bf a}\in\mathbb{R}^{d}$ is a standard Gaussian random vector.
As a consequence, according to \eqref{eq33} in the proof of Lemma \ref{lem2}, by replacing $\varepsilon$ with $\frac{n\varepsilon}{d}$, we have
\begin{align}\label{eq34}
\mathbb{P}\left( \left|\left\|\sqrt{\frac{n}d}{\bf V}^{\rm T}{\bf a}\right\|^2 - 1\right|> \frac{n\varepsilon}{d}\right) < 2\exp\left(-d\left(\sqrt{1+\frac{n}{d}\varepsilon}-1-{1}/{\sqrt{d}}\right)^2/2\right),
\end{align}
where $d>\left(\frac{1}{\sqrt{\frac{n}{d}\varepsilon+1}-1}\right)^2$ is required.
In order to satisfy this requirement, i.e., $\left(\frac{1}{\sqrt{\frac{n}{d}\varepsilon+1}-1}\right)^2<1\le d$,
we need $n>\frac{3d}{\varepsilon}=:c_{1,1}d$.
According to Lemma \ref{exp-formal}, there exist constants $c_{1,2}$, $c_2$ dependent on $\varepsilon$, such that the $\rm RHS$ of \eqref{eq34} is smaller than ${\rm e}^{-c_2n}$.
Taking $c_1:=\max\{c_{1,1}, c_{1,2}\}$ and dividing both sides of the expression in $\mathbb{P}(\cdot)$ in \eqref{eq34} by $n/d$, we complete the proof.
\end{proof}

\subsection{Proof of Corollary \ref{unit-singular-value}}
\label{proof-unit-singular-value}

In order to bound $s_{\min}^2\left(\bar{\bf A}\right)$, noticing that
\begin{align}\label{eq23}
s_{\min}^2\left(\bar{\bf A}\right)\ge\frac{s_{\min}^2\left({\bf A}\right)}{\max\limits_i\left\|{\bf a}_i\right\|^2},
\end{align}
we may turn to estimate $s_{\min}^2\left({\bf A}\right)$ and $\max_i\left\|{\bf a}_i\right\|^2$ separately.

We begin from estimating $s_{\min}^2\left({\bf A}\right)$.
According to Lemma \ref{T}, with probability at least $1-{\rm e}^{-t^2/2}$, we have
\begin{align}\label{smin-A}
s_{\min}^2\left({\bf A}\right)\ge \frac{1}{n}\left(\sqrt{n}-\sqrt{k}-t\right)^2=\left(1-\sqrt{{k}/{n}}-{t}/{\sqrt{n}}\right)^2.
\end{align}
Let $1-\varepsilon_1$ be the ${\rm RHS}$ of \eqref{smin-A} then we have
\begin{align}\label{eq24}
t=\sqrt{n}\left(1-\sqrt{1-\varepsilon_1}-\sqrt{{k}/{n}}\right).
\end{align}
When $n>\frac{k}{\left(1-\sqrt{1-\varepsilon_1}\right)^2}=:\hat{c}_{0,1}k$, we have $t>0$.
Plugging \eqref{eq24} into ${\rm e}^{-t^2/2}$, we can get the probability that \eqref{eq24} violates as
$
\exp\left(-n\left(1-\sqrt{{k}/{n}}-\sqrt{1-\varepsilon_1}\right)^2/2\right).
$
According to Lemma \ref{exp-formal}, the above probability can be bounded by ${\rm e}^{-\hat{c}_{2,1}(\varepsilon_1)n}$ for $n>\hat{c}_{1,1}k$.
Then we have
\begin{align}\label{smin-bar-A-pre}
s_{\min}^2\left({\bf A}\right)\ge 1-\varepsilon_1
\end{align}
hold for $n>\max\{\hat{c}_{1,1},\hat{c}_{0,1}\}k=:\hat{c}_3k$ with probability at least $1-{\rm e}^{-\hat{c}_{2,1}(\varepsilon_1)n}$.

Next we estimate $\max_i\left\|{\bf a}_i\right\|^2$.
According to Lemma \ref{lem2}, with probability at least $1-k{\rm e}^{-\hat{c}_{2,2}(\varepsilon_2)n}$, for $n>n_{0,1}$, we have
\begin{align}\label{norm-max-a1k}
\max_i\left\|{\bf a}_{i}\right\|^2\le 1+\varepsilon_2.
\end{align}
Plugging \eqref{smin-bar-A-pre} and \eqref{norm-max-a1k} into \eqref{eq23}, we have
$$
s_{\min}^2\left(\bar{\bf A}\right)\ge\frac{1-\varepsilon_1}{1+\varepsilon_2}=1-\frac{\varepsilon_1+\varepsilon_2}{1+\varepsilon_2}.
$$
Let $\varepsilon_1=\varepsilon_2=:\varepsilon/2$, with probability at least $1-{\rm e}^{-\hat{c}_{2,1}(\varepsilon/2)n}-k{\rm e}^{-\hat{c}_{2,2}(\varepsilon/2)n}$, we have
\begin{align}\label{smax-A-bar2}
s_{\min}^2\left(\bar{\bf A}\right)\ge 1-\left({\varepsilon}/{2}+{\varepsilon}/{2}\right)=1-\varepsilon.
\end{align}
Take $c_{1,1}:=\max\{\hat{c}_{3}, n_{0,1}\}$ and we prove the first part of this corollary.

In order to bound $s_{\max}^2\left(\bar{\bf A}\right)$, following the same approach, we could derive step by step the counterparts of \eqref{eq23}, \eqref{smin-bar-A-pre}, and \eqref{norm-max-a1k}, respectively, as
\begin{align}\label{eq25}
s_{\max}^2\left(\bar{\bf A}\right)\le\frac{s_{\max}^2\left({\bf A}\right)}{\min\limits_i\left\|{\bf a}_i\right\|^2},
\end{align}
\begin{align}\label{smax-bar-A-pre}
s_{\max}^2\left({\bf A}\right)\le 1+\varepsilon_1
\end{align}
for $n>\hat{c}_4k$ with probability at least
\begin{align*}
1-\exp\left(-n\left(1+\sqrt{{k}/{n}}-\sqrt{1+\varepsilon_1}\right)^2/2\right)>1-{\rm e}^{-\hat{c}_{2,3}n},
\end{align*}
and
\begin{align}\label{norm-min-a1k}
\min_i\left\|{\bf a}_{i}\right\|^2\ge 1-\varepsilon_2
\end{align}
for $n>n_{0,2}$ with probability at least $1-k{\rm e}^{-\hat{c}_{2,2}(\varepsilon_2)n}-{\rm e}^{-\hat{c}_{2,3}(\varepsilon_1)n}$.
Then we have
\begin{align}\label{smax-A-bar1}
s_{\max}^2\left(\bar{\bf A}\right)\le\frac{1+\varepsilon_1}{1-\varepsilon_2}=1+\frac{\varepsilon_1+\varepsilon_2}{1-\varepsilon_2}.
\end{align}
Similarly reshaping \eqref{smax-A-bar1} and letting $\varepsilon_2\le 1/2$, $\varepsilon_1=\varepsilon_2=:\varepsilon/4$, taking $c_{1,2}:=\max\{\hat{c}_4,n_{0,2}\}$, we prove the second part of the corollary.

\subsection{Proof of Lemma \ref{lem5}}
\label{proof-lem5}

Using the orthogonality between ${\bf u}_1$ and ${\bf U}_2$, $\Span({\bf A}_2)$ is independent with ${\bf a}_1$.
As an orthonormal basis of such subspace, ${\bf V}_2$ is also independent with ${\bf a}_1$.
Then, according to Definition \ref{defi-guassian}, ${\bf a}_1$ conditioned on ${\bf V}_2$ is still a standard Gaussian random vector.
As a consequence, $\sqrt{n}{\bf V}_2^{\rm T}{\bf a}_1\in\mathbb{R}^{d\times 1}$, the entries of which are independent standard Gaussian random variables, satisfies the condition in Lemma \ref{T}.
With probability no more than ${\rm e}^{-\frac{t^2}2}$, we have
\begin{align}\label{eq12}
\left\|\sqrt{n}{\bf V}_2^{\rm T}{\bf a}_1\right\|^2=s_{\max}^2\left(\sqrt{n}{\bf V}_2^{\rm T}{\bf a}_1\right)\ge \left(\sqrt{d} + 1 + t\right)^2.
\end{align}
Let $\varepsilon := {\left(\sqrt{d} + 1 + t\right)^2}/{n}$, we can get
\begin{align}\label{eq13}
t=\sqrt{n\varepsilon}-\sqrt{d}-1.
\end{align}
When $n>\frac{4d}{\varepsilon}=:c_{1,1}d$, we have $t>\sqrt{d}-1\ge 0$.
Plugging \eqref{eq13} into ${\rm e}^{-\frac{t^2}2}$, the probability of \eqref{eq12} holding is at least
$
2\exp\left(-\left(\sqrt{n\varepsilon}-\sqrt{d}-1\right)^2/2\right).
$
According to Lemma \ref{exp-formal}, there exist constants $c_{1,2}$, $c_2$, such that when $n>c_{1,2}d$, this probability is smaller than ${\rm e}^{-c_2n}$.
Taking $c_1:=\max\{c_{1,1},c_{1,2}\}$ and dividing both sides of \eqref{eq12} by $n$, we conclude the lemma.

\subsection{Proof of Corollary \ref{cor-lem5}}
\label{proof-cor-lem5}

According to the definition of $\bar{\bf a}_1$ and basic probability, we have
  \begin{align}\label{eqp1}
  \mathbb{P}\left( \left\|{\bf V}_{2}^{\rm T}\bar{\bf a}_{1}\right\|^2 >\varepsilon\right)
  = & \mathbb{P}\left( \frac{\left\| {\bf V}_{2}^{\rm T}{\bf a}_{1}\right\|^2 }{\left\|\va_{1}\right\|^2} >\varepsilon\right)\nonumber\\
  =& 1 - \mathbb{P}\left( \frac{\left\| {\bf V}_{2}^{\rm T}{\bf a}_{1}\right\|^2 }{\left\|\va_{1}\right\|^2} <\varepsilon\right)\nonumber\\
  \le & 1 - \mathbb{P}\left( \Vert \va_{1}\Vert^2 > 1- \varepsilon \ {\rm and} \  \left\|{\bf V}_{2}^{\rm T}\va_{1}\right\|^2<\varepsilon\left(1-\varepsilon\right) \right)\nonumber\\
  = & \mathbb{P}\left( \Vert \va_{1}\Vert^2 < 1- \varepsilon \ {\rm or} \  \left\|{\bf V}_{2}^{\rm T}\va_{1}\right\|^2>\varepsilon\left(1-\varepsilon\right) \right)\nonumber\\
  \le &\mathbb{P}\left(\Vert \va_{1}\Vert^2 < 1- \varepsilon\right) + \mathbb{P}\left(\left\|{\bf V}_{2}^{\rm T}\va_{1}\right\|^2>\varepsilon\left(1-\varepsilon\right) \right).
\end{align}
Now we may estimate the two items in the RHS of \eqref{eqp1}, separately.
By using Lemma \ref{lem2}, for $n>n_0$, we have
\begin{align}\label{eqp2}
\mathbb{P}(\Vert \va_{1}\Vert^2 < 1- \varepsilon)<{\rm e}^{-c_{2,1}(\varepsilon)n}.
\end{align}
By using Lemma \ref{lem5}, for $\varepsilon< \frac{1}{3}$ and $n>c_{1,2}d$, we have
\begin{align}\label{eqp3}
\mathbb{P}\left(\left\| {\bf V}_{2}^{\rm T}\va_{1}\right\|^2>\varepsilon(1-\varepsilon)\right)< {\rm e}^{-c_{2,2}(2\varepsilon/3)n}.
\end{align}
Plugging \eqref{eqp2} and \eqref{eqp3} into \eqref{eqp1}, we readily get
$$
\mathbb{P}\left( \left\|{\bf V}_{2}^{\rm T}\bar{\bf a}_{1}\right\|^2 >\varepsilon\right) < {\rm e}^{-c_{2,1}(\varepsilon)n}+{\rm e}^{-c_{2,2}(2\varepsilon/3)n}.
$$
According to Remark \ref{exp-summation}, we claim that there exist constants $c_1$, $c_2$, such that for any $n>c_1(\varepsilon)d$, we have $\left\|{\bf V}_{2}^{\rm T}\bar{\bf a}_{1}\right\|^2 >\varepsilon$ with probability no more than ${\rm e}^{-c_2(\varepsilon)n}$.

\subsection{Proof of Remark \ref{rem-p2epsilon1new}}
\label{proof-rem-p2epsilon1new}

Replacing $n$ with $n-d_0$ in \eqref{eq-p2epsilon1new}, we can get
\begin{align}\label{temp}
\mathbb{P}\left(\left\|{\bf V}_2^{\rm T}\bar{\bf a}_1\right\|^2>\varepsilon\right) \le {\rm e}^{-\hat{c}_2(\varepsilon)\left(n-d_0\right)}.
\end{align}
We only need to prove that there exist constants $c_1$, $c_2$, when $n>c_1\max\{d,d_0\}$, we have $\mathbb{P}\left(\left\|{\bf V}_2^{\rm T}\bar{\bf a}_1\right\|^2>\varepsilon\right)\le{\rm e}^{-\hat{c}_2(\varepsilon)\left(n-d_0\right)}\le {\rm e}^{-c_2n}$.
Let $\tau:=\frac{d_0}{n}$, according to Lemma \ref{exp-formal}, we have
\begin{align*}
h:&=\lim_{\tau\to 0}\lim_{n\to\infty}\frac{\hat{c}_2(\varepsilon)\left(n-d_0\right)}{n}=\lim_{\tau\to 0}\lim_{n\to\infty}\hat{c}_2\left(1-\tau\right)=\hat{c}_2>0,\\
b:&=\lim_{n\to\infty}\frac{\ln 1}{n}=0<h.
\end{align*}
Then when $n>n_0$, $\tau=\frac{d_0}{n}\le \tau_0$, there exists constant $c_2$, such that ${\rm e}^{-\hat{c}_2(\varepsilon)\left(n-d_0\right)}\le {\rm e}^{-c_2n}$.
By choosing $c_1:=\max\{n_0,\frac{1}{\tau_0},\hat{c}_1\}$, when $n>c_1d_0$, we have $n>n_0$, $\tau=\frac{d_0}{n}\le \tau_0$.
The condition of \eqref{temp} holding, that is $n>\hat{c}_1d$, is also satisfied.

\subsection{Proof of Lemma \ref{lem-property8}}
\label{proof-lem-property8}

Using the definition of $\beta_k$ in \eqref{eq-defi-betak} and the fact that ${\set Y}_{1,1:k-1} \subset {\set Y}_{1,1:k}$, we have
\begin{equation}\label{V2bnorm2}
\beta_k^2 = \left\|{\bf V}_2^{\rm T}{\bf b}_k\right\|^2
= \min_{{{\left\|\bf b\right\|=1}\atop {{\bf b}\in {\set Y}_{1,1:k-1}}}}\|{\bf V}_2^{\rm T}{\bf b}\|^2
\ge \min_{{{\left\|\bf b\right\|=1}\atop {{\bf b}\in {\set Y}_{1,1:k}}}}\|{\bf V}_2^{\rm T}{\bf b}\|^2.
\end{equation}
Removing both side of \eqref{V2bnorm2} from one, we write
\begin{equation}\label{V2bnorm2max}
1 - \beta_k^2 \le 1 - \min_{{{\left\|\bf b\right\|=1}\atop {{\bf b}\in {\set Y}_{1,1:k}}}}\|{\bf V}_2^{\rm T}{\bf b}\|^2
= 1 - \min_{{{\left\|\bf b\right\|=1}\atop {{\bf b}\in {\set Y}_{1,1:k}}}}\|\Proj_{{\set Y}_2}({\bf b})\|^2
= \max_{{{\left\|\bf b\right\|=1}\atop {{\bf b}\in {\set Y}_{1,1:k}}}}\|\Proj_{{\set Y}_2^\perp}({\bf b})\|^2.
\end{equation}
Then we will loose the condition and rewrite the expression of this maximization problem step by step, and finally convert it to a problem about the extreme singular value of random matrix.

For any vector ${\bf b}$ in ${\set Y}_{1,1:k}$, it can be spanned by the columns of $\bar{\bf A}_{1,1:k}$ as
\begin{equation}\label{eq-b2x}
{\bf b} = \bar{\bf A}_{1,1:k}{\bf x} = \sum_{j=1}^k x_j\bar{\bf a}_{1,j},
\end{equation}
where ${\bf x} = \left[x_1, \cdots, x_k\right]^{\rm T}$ denotes the weight vector.
Consequently, the condition of $\|{\bf b}\|=1$ can be loosen to the condition on ${\bf x}$, i.e.,
\begin{align}\label{upper-x}
\left\|{\bf x}\right\|^2 \le
\frac{\left\|\bar{{\bf A}}_{1,1:k}{\bf x}\right\|^2}{s_{\min}^2\left(\bar{{\bf A}}_{1,1:k}\right)}=
\frac{\left\|{\bf b}\right\|^2}{s_{\min}^2\left(\bar{{\bf A}}_{1,1:k}\right)}= \frac{1}{s_{\min}^2\left(\bar{{\bf A}}_{1,1:k}\right)}=:x_{\rm u}.
\end{align}
Inserting \eqref{eq-b2x} and \eqref{upper-x} in \eqref{V2bnorm2max}, we have
\begin{align}\label{eq-1-betak2-bound}
1 - \beta_k^2 &\le \max_{\left\|\bf x\right\|^2\le x_{\rm u}}\left\|\Proj_{{\set Y}_2^\perp}\left(\sum_{j=1}^k x_j\bar{\bf a}_{1,j}\right)\right\|^2\nonumber\\
&= \max_{\left\|\bf x\right\|^2\le x_{\rm u}}\left\| \sum_{j=1}^k x_j\Proj_{{\set Y}_2^\perp}\left(\bar{\bf a}_{1,j}\right)\right\|^2\nonumber\\
&= \max_{\left\|\bf x\right\|^2\le x_{\rm u}}\left\| \sum_{j=1}^k x_j\sqrt{1-\hat\lambda_j^2}\bar{\bf a}_{1,j}^{\perp}\right\|^2,
\end{align}
where $\hat\lambda_{j}$ is defined in \eqref{eq-defi-barlambdak}.

According to \eqref{eq-1-lambda} and the decreasing order of $\lambda_1\ge\cdots\ge\lambda_{d_1}$, we have
\begin{equation}\label{eq-boundbarlambdaj}
1-\hat\lambda_j^2 \le (1-{\lambda}_j^2)(1+\varepsilon)
\le (1-{\lambda}_k^2)(1+\varepsilon), \quad \forall j=1,\cdots,k<d_1,
\end{equation}
hold with probability at least $1-{\rm e}^{-c_{2,1}(\varepsilon_1)n}$ for any $n>c_{1,1}(\varepsilon_1)d_2$.
Inserting \eqref{eq-boundbarlambdaj} in \eqref{eq-1-betak2-bound}, we have
\begin{align}\label{eq-1-betak2-bound1}
1 - \beta_k^2 &\le  (1-{\lambda}_k^2)(1+\varepsilon_1) \max_{\left\|\bf x\right\|^2\le x_{\rm u}}\left\| \sum_{j=1}^k x_j\bar{\bf a}_{1,j}^{\perp}\right\|^2\nonumber\\
&\le (1-{\lambda}_k^2)(1+\varepsilon_1) s_{\max}^2\left(\bar{\bf A}_{1,1:k}^{\perp}\right)x_{\rm u}\nonumber\\
& =(1-{\lambda}_k^2)(1+\varepsilon_1)\frac{s_{\max}^2\left(\bar{\bf A}_{1,1:k}^{\perp}\right)}{s_{\min}^2\left(\bar{\bf A}_{1,1:k}\right)},
\end{align}
where
$
\bar{\bf A}_{1,1:k}^{\perp} = \left[\bar{\bf a}_{1,1}^{\perp},\cdots,\bar{\bf a}_{1,k}^{\perp} \right].
$

Now we need to bound the denominator and numerator in the RHS of \eqref{eq-1-betak2-bound1}, separately.
As to the denominator, according to Corollary \ref{unit-singular-value}, we have for $n>c_{1,2}(\varepsilon_2)d_1$,
\begin{align}
\mathbb{P}\left(s_{\min}^2\left(\bar{\bf A}_{1,1:k}\right)>1-\varepsilon_2\right) >1-{\rm e}^{-c_{2,2}(\varepsilon_2)n}.\label{eq-bound-denominator}
\end{align}

As for estimating the numerator, since that ${\bf A}_{1,1:k}$ is correlated with ${\set Y}_2$,
we can not directly apply the available lemmas about the concentration inequalities of independent Gaussian random matrix.
However, by using the following techniques, we could manage to convert the problem of estimating $s_{\max}^2\left(\bar{\bf A}_{1,1:k}^{\perp}\right)$ to a problem about the singular value of a normalized random matrix satisfying the independence condition.

\begin{rem}\label{quasi-basis-decompose}
Recalling Remark \ref{quasi-construct} about the characteristics of \emph{principal} orthonormal bases ${\bf U}_1$, ${\bf U}_2$ for subspaces ${\set X}_1, {\set X}_2$, and following the decomposition way in the proof of Lemma \ref{lem-line-sub},
we can decompose each column of ${\bf U}_1$ as the projections onto ${\set X}_2$ and its orthogonal complement and get
\begin{align}\label{U1U2U0}
{\bf U}_1 = {\bf U}_{2} {\bm\Lambda} + {\bf U}_0{\bm\Lambda}^\perp,
\end{align}
where
$$
 {\bm\Lambda}^\perp = \left[ \begin{array}{ccc}
     \sqrt{1- {\lambda}_1^2} & & \\ & \ddots& \\ & & \sqrt{1- {\lambda}_{d_1}^2} \end{array} \right],
$$
and $\Span({\bf U}_0)\in\mathbb{R}^{N\times d_1}$ is a subspace of $\mathcal{X}_2^{\perp}$,
that is ${\bf U}_2^{\rm T}{\bf U}_0={\bf 0}$.
\end{rem}

After random projection, the decomposition in Remark \ref{quasi-basis-decompose} changes to
\begin{align}\label{A1A2A0}
{\bf A}_1 = {\bf A}_{2} {\bm\Lambda} + {\bf A}_0{\bm\Lambda}^\perp,
\end{align}
where ${\bf A}_0={\bm\Phi}{\bf U_0}$.
Projecting both side of \eqref{A1A2A0} onto the orthogonal complement of ${\set Y}_2$, we have
$$
\Proj_{{\set Y}_2^\perp}({\bf A}_1) = \Proj_{{\set Y}_2^\perp}({\bf A}_0){\bm\Lambda}^\perp,
$$
which means that the normalized column of $\Proj_{{\set Y}_2^\perp}({\bf A}_1)$, i.e., $\bar{\bf a}_{1,k}^\perp$ defined in \eqref{v1kperp} are exactly identical to the normalized column of $\Proj_{{\set Y}_2^\perp}({\bf A}_0)$, which is denoted as $\bar{\bf a}_{0,k}^\perp, k=1,\cdots,d_1$.
That is
\begin{align}\label{A1-A0}
	\bar{\bf A}_{1:1:k}^\perp = \left[\bar{\bf a}_{0,1}^\perp,\cdots,\bar{\bf a}_{0,k}^\perp\right] =: \bar{\bf A}_{0,1:k}^{\perp}.
\end{align}

Considering its property of isotropy,
a Gaussian random vector remains Gaussian distribution when it is projected to an independent subspace.
This is demonstrated in Remark \ref{Gaussian-projection}.

\begin{rem}\label{Gaussian-projection}
Let ${\bf A}_1\in\mathbb{R}^{n\times d_1}$ and ${\bf A}_2\in\mathbb{R}^{n\times d_2}, d_1\le d_2$ be two Gaussian random matrices.
We denote ${\bf V}_2$ as an orthonormal basis of $\Span({\bf A}_2)$.
The projection of ${\bf A}_1=[{\bf a}_{1,1},\cdots,{\bf a}_{1,d_1}]$ onto $\Span({\bf A}_2)$ is denoted by ${\bf B}_1 = [{\bf b}_{1,1},\cdots,{\bf b}_{1,d_1}]$, i.e.,
$
  {\bf b}_{1,k} = \Proj_{\Span({\bf A}_2)}({\bf a}_{1,k}).
$
If ${\bf A}_1$ and ${\bf A}_2$ are independent, we have
$
  {\bf B}_1 = {\bf V}_2{\bm\Omega},
$
where ${\bm \Omega}\in\mathbb{R}^{d_2\times d_1}$ is a Gaussian random matrix.

This can be readily verified by using the fact that
$
  {\bf B}_1 = {\bf V}_2{\bf V}_2^{\rm T}{\bf A}_1 =: {\bf V}_2{\bm\Omega},
$
where ${\bm\Omega} = (\omega_{i,j}) := {\bf V}_2^{\rm T}{\bf A}_1$.
Because ${\bf A}_1$ is independent with $\Span({\bf A}_2)$, as well as its orthonormal basis ${\bf V}_2$,
the distribution of ${\bf A}_1$ is not influenced, if we first condition ${\bf V}_2$ and regard it as a given matrix.
Consequently, we can readily check that $\omega_{i,j}$ are \emph{i.i.d.} zero mean Gaussian random variables.
\end{rem}

Recalling that ${\bf U}_0^{\rm T}{\bf U}_2={\bf 0}$, which means ${\bf u}_{0,i}^{\rm T}{\bf u}_{2,j}=0$, $1\le i \le d_1$, $1\le j \le d_2$.
According to Lemma  \ref{independence}, we have that ${\bf a}_{0,i}$ and ${\bf a}_{2,j}$ are independent.
Moreover, ${\bf a}_{0,i}$ is independent with ${\set Y}_2$ and thus independent with its orthogonal complement, ${\set Y}_2^{\perp}$.
Then according to Remark \ref{Gaussian-projection}, the projection of ${\bf A}_{0,1:k}\in\mathbb{R}^{n\times k}$ onto ${\set Y}_2^{\perp}$ can be written as
\begin{align}\label{A0-OMG}
{\bf A}_{0,1:k}^{\perp}:=\left[{\bf a}_{0,1}^{\perp}, \cdots, {\bf a}_{0,k}^{\perp}\right]={\bf V}_2^{\perp}{\bm\Omega}_{1:k},
\end{align}
where ${\bf a}_{0,j}^{\perp}:=\Proj_{{\set Y}_2^{\perp}}({\bf a}_{0,j})$,
${\bf V}_2^{\perp}\in\mathbb{R}^{n\times (n-d_2)}$ is an arbitrary orthonormal basis of ${\set Y}_2^\perp$,
and ${\bm\Omega}_{1:k}\in\mathbb{R}^{(n-d_2)\times k}$ is a Gaussian random matrix.
According to the orthonormality of ${\bf V}_2^{\perp}$, we normalize both sides of \eqref{A0-OMG} as
\begin{align}\label{barA0-OMG-pre}
\bar{\bf A}_{0,1:k}^{\perp}={\bf V}_2^{\perp}\bar{\bm\Omega}_{1:k},
\end{align}
where $\bar{{\bm\Omega}}_{1:k}$ denotes the column-normalized ${\bm\Omega}_{1:k}$.
Because left multiplying an orthonormal matrix does not change its singular value, we have
\begin{equation}\label{barA0-OMG}
s_{\max}\left(\bar{\bf A}_{0,1:k}^{\perp}\right)=s_{\max}\left(\bar{\bm\Omega}_{1:k}\right).
\end{equation}
Combining \eqref{A1-A0} and \eqref{barA0-OMG}, and using Remark \ref{P4-nd}, we have
\begin{align}\label{eq-bound-numerator}
\mathbb{P}\left(s_{\max}^2\left(\bar{\bf A}_{1,1:k}^\perp\right)<1+\varepsilon_3\right)
&= \mathbb{P}\left(s_{\max}^2\left(\bar{\bm\Omega}_{1:k}\right)<1+\varepsilon_3\right)\nonumber\\
&>1-{\rm e}^{-c_{2,3}(\varepsilon_3)n}
\end{align}
hold when $n>c_{1,3}(\varepsilon_3)d_2$.

Plugging both bounds of denominator and numerator, i.e., \eqref{eq-bound-denominator} and \eqref{eq-bound-numerator}, into \eqref{eq-1-betak2-bound1}, we can get
  \begin{align}
  1-\beta_k^2\le & \left(1-\lambda_k^2\right)\left(1+\varepsilon_1\right)\frac{1+\varepsilon_3}{1-\varepsilon_2}\nonumber\\
  = & \left(1-\lambda_k^2\right)
  \left(1+\frac{\varepsilon_2+\varepsilon_3}{1-\varepsilon_2}+\varepsilon_1\frac{1+\varepsilon_3}{1-\varepsilon_2}\right),
  \end{align}
with probability at least $1-\sum_{l=1}^3{\rm e}^{-c_{2,l}(\varepsilon_l)n}$ for any $n>\max\{c_{1,l}(\varepsilon_l)\}d_2$.
Let $\varepsilon_2\le1/2$, $\varepsilon_3\le1/2$,
$\varepsilon_1=\varepsilon_2=\varepsilon_3=:\varepsilon/7$, we have
$
\frac{\varepsilon_2+\varepsilon_3}{1-\varepsilon_2}+\varepsilon_1\frac{1+\varepsilon_3}{1-\varepsilon_2} \le 2\left(\varepsilon_2+\varepsilon_3\right)+3\varepsilon_1=\varepsilon.
$
By using Remark \ref{exp-summation} to reshape the probability,
we readily complete the proof.

\subsection{Proof of Lemma \ref{lem-property7}}
\label{proof-lem-property7}

We will calculate the inner product between $\bar{\bf a}_{1,k}^{\perp}$ and ${\bf b}_k^\perp$.
Recalling that ${\bf b}_k$ is the projection of ${\bf a}_{1,k}$ onto $\Span({\bf A}_{1,1:k-1})$, it is not obvious whether $\bar{\bf a}_{1,k}^{\perp}$ and ${\bf b}_k^\perp$ are independent, and this aggravate the problem to estimate their inner product directly.
In order to solve this, therefore, we have to find the relationship between product and projection and then convert the problem to the situation described in Remark \ref{rem-p2epsilon1new}.

Recalling the previous result that $\bar{\bf a}_{1,k}^{\perp}=\bar{\bf a}_{0,k}^{\perp}$ in \eqref{A1-A0} and using the fact of ${\bf b}_k^{\perp}\in \Span(\bar{\bf A}_{1,1:k-1}^{\perp})$, we write
\begin{align}
\left|\langle \bar{\bf a}_{1,k}^{\perp}, {\bf b}_k^{\perp} \rangle\right| &= \left|\langle \bar{\bf a}_{0,k}^{\perp}, {\bf b}_k^{\perp} \rangle\right|\nonumber\\
&\le \left\|\Proj_{\Span(\bar{\bf A}_{1,1:k-1}^{\perp})}(\bar{\bf a}_{0,k}^{\perp})\right\|\nonumber\\
&= \left\|\Proj_{\Span(\bar{\bf A}_{0,1:k-1}^{\perp})}(\bar{\bf a}_{0,k}^{\perp})\right\|.\label{eq5above}
\end{align}

Now we need to construct an orthonormal basis for $\Span(\bar{\bf A}_{0,1:k-1}^{\perp})$ and build its connection with $\bar{\bf a}_{0,k}^{\perp}$.
Recalling Remark \ref{Gaussian-projection} and the deduction in the proof of Lemma \ref{lem-property8}, we reshape \eqref{barA0-OMG-pre} as
$$
\left[\bar{\bf A}_{0,1:k-1}^{\perp}, \bar{\bf a}_{0,k}\right] = {\bf V}_2^{\perp} \left[\bar{\bm\Omega}_{1:k-1}, \bar{\bm{\omega}}_k\right],
$$
where $\bar{\bm{\omega}}_k$ denotes the last column of $\bar{\bm\Omega}_{1:k}\in\mathbb{R}^{(n-d_2)\times k}$.
We next apply Gram-Schimidt orthogonalization to ${\bm\Omega}_{1:k-1}$ and get ${\bf W}_{1:k-1}$, which is an orthonormal basis for $\Span(\bar{\bm\Omega}_{1:k-1})$.
Because of the orthonormality of ${\bf V}_2^{\perp}$, ${\bf V}_2^{\perp}{\bf W}_{1:k-1}$ is an orthonormal basis for $\Span(\bar{\bf A}_{0,1:k-1}^{\perp})$.
As a consequence, we are able to calculate the RHS of \eqref{eq5above} as
\begin{align}
\left\|\Proj_{\Span(\bar{\bf A}_{0,1:k-1}^{\perp})}(\bar{\bf a}_{0,k}^{\perp})\right\|&=\left\|\left({\bf V}_2^{\perp}{\bf W}_{1:k-1}\right)^{\rm T}\bar{\bf a}_{0,k}^{\perp}\right\|\nonumber\\
&= \left\|\left({\bf V}_2^{\perp}{\bf W}_{1:k-1}\right)^{\rm T} {\bf V}_2^{\perp}\bar{\bm{\omega}}_k  \right\|\nonumber\\
&=\left\|{\bf W}_{1:k-1}^{\rm T}\bar{\bm{\omega}}_k\right\| \label{gaussian}.
\end{align}
Recalling that $\bar{\bm\Omega}_{1:k}$ is a column-normalized Gaussian random matrix, $\bar{\bm{\omega}}_k$ should be independent with each column of $\bar{\bm\Omega}_{1:k-1}$, and thus independent with $\Span(\bar{\bm\Omega}_{1:k-1})=\Span({\bf W}_{1:k-1})$.
Combining \eqref{eq5above} and  \eqref{gaussian}, and using Remark \ref{rem-p2epsilon1new}, we have
\begin{align}
\mathbb{P}\left(\left|\langle \bar{\bf a}_{1,k}^{\perp}, {\bf b}_k^{\perp} \rangle\right|^2 > \varepsilon_4\right)
\le \mathbb{P}\left(\left\|{\bf W}_{1:k-1}^{\rm T}\bar{\bm{\omega}}_k\right\|^2>\varepsilon_4\right) < {\rm e}^{-c_{2,4}(\varepsilon_4)n}
\end{align}
for all $n\ge c_{1,4}d_2$.
The proof is completed.

\bibliographystyle{IEEEtran}
\bibliography{mybibfile}

\begin{thebibliography}{10}
\providecommand{\url}[1]{#1}
\csname url@samestyle\endcsname
\providecommand{\newblock}{\relax}
\providecommand{\bibinfo}[2]{#2}
\providecommand{\BIBentrySTDinterwordspacing}{\spaceskip=0pt\relax}
\providecommand{\BIBentryALTinterwordstretchfactor}{4}
\providecommand{\BIBentryALTinterwordspacing}{\spaceskip=\fontdimen2\font plus
\BIBentryALTinterwordstretchfactor\fontdimen3\font minus
  \fontdimen4\font\relax}
\providecommand{\BIBforeignlanguage}[2]{{%
\expandafter\ifx\csname l@#1\endcsname\relax
\typeout{** WARNING: IEEEtran.bst: No hyphenation pattern has been}%
\typeout{** loaded for the language `#1'. Using the pattern for}%
\typeout{** the default language instead.}%
\else
\language=\csname l@#1\endcsname
\fi
#2}}
\providecommand{\BIBdecl}{\relax}
\BIBdecl

\bibitem{Elhamifar2009Sparse}
E.~Elhamifar and R.~Vidal, ``Sparse subspace clustering,'' in \emph{IEEE
  Conference on Computer Vision and Pattern Recognition (CVPR)}, 2009, pp.
  2790--2797.

\bibitem{soltanolkotabi2012geometric}
M.~Soltanolkotabi and E.~J. Candes, ``A geometric analysis of subspace
  clustering with outliers,'' \emph{The Annals of Statistics}, vol.~40, no.~4,
  pp. 2195--2238, 2012.

\bibitem{elhamifar2013sparse}
E.~Elhamifar and R.~Vidal, ``Sparse subspace clustering: Algorithm, theory, and
  applications,'' \emph{IEEE Transactions on Pattern Analysis and Machine
  Intelligence}, vol.~35, no.~11, pp. 2765--2781, 2013.

\bibitem{Heckel2015Robust}
R.~Heckel and H.~B{\"o}lcskei, ``Robust subspace clustering via thresholding,''
  \emph{IEEE Transactions on Information Theory}, vol.~61, no.~11, pp.
  6320--6342, 2015.

\bibitem{Mao2014Compressed}
X.~Mao and Y.~Gu, ``Compressed subspace clustering: A case study,'' in
  \emph{IEEE Global Conference on Signal and Information Processing}, 2014, pp.
  453 -- 457.

\bibitem{Heckel2014Subspace}
R.~Heckel, M.~Tschannen, and H.~Bolcskei, ``Subspace clustering of
  dimensionality-reduced data,'' in \emph{IEEE International Symposium on
  Information Theory}, 2014, pp. 2997--3001.

\bibitem{heckel2015dimensionality}
R.~Heckel, M.~Tschannen, and H.~B{\"o}lcskei, ``Dimensionality-reduced subspace
  clustering,'' \emph{arXiv preprint arXiv:1507.07105}, 2015.

\bibitem{Wang2016}
Y.~Wang, Y.-X. Wang, and A.~Singh, ``A theoretical analysis of noisy sparse
  subspace clustering on dimensionality-reduced data,'' \emph{arXiv preprint
  arXiv:1610.07650}, 2016.

\bibitem{Johnson1984Extensions}
W.~B. Johnson and J.~Lindenstrauss, ``Extensions of lipschitz maps into a
  hilbert space,'' \emph{Contemporary mathematics}, vol.~26, pp. 189--206,
  1984.

\bibitem{dasgupta1999elementary}
S.~Dasgupta and A.~Gupta, ``An elementary proof of the johnson-lindenstrauss
  lemma,'' \emph{International Computer Science Institute, Technical Report},
  pp. 99--006, 1999.

\bibitem{Candes2005Decoding}
E.~J. Candes and T.~Tao, ``Decoding by linear programming,'' \emph{IEEE
  Transactions on Information Theory}, vol.~51, no.~12, pp. 4203--4215, 2005.

\bibitem{Cand2008The}
E.~J. Cand{\`e}s, ``The restricted isometry property and its implications for
  compressed sensing,'' \emph{Comptes Rendus Mathematique}, vol. 346, no.
  9--10, pp. 589--592, 2008.

\bibitem{Baraniuk2015A}
R.~Baraniuk, M.~Davenport, R.~Devore, and M.~Wakin, ``A simple proof of the
  restricted isometry property for random matrices,'' \emph{Constructive
  Approximation}, vol.~28, no.~28, pp. 253--263, 2015.

\bibitem{Donoho2006Compressed}
D.~L. Donoho, ``Compressed sensing,'' \emph{IEEE Transactions on Information
  Theory}, vol.~52, no.~4, pp. 1289--1306, 2006.

\bibitem{Candes2006Robust}
E.~J. Candes, J.~Romberg, and T.~Tao, ``Robust uncertainty principles: exact
  signal reconstruction from highly incomplete frequency information,''
  \emph{IEEE Transactions on Information Theory}, vol.~52, no.~2, pp. 489--509,
  2006.

\bibitem{Aeron2010Information}
S.~Aeron, V.~Saligrama, and M.~Zhao, ``Information theoretic bounds for
  compressed sensing,'' \emph{IEEE Transactions on Information Theory},
  vol.~56, no.~10, pp. 5111--5130, 2010.

\bibitem{candes2007sparsity}
E.~Candes and J.~Romberg, ``Sparsity and incoherence in compressive sampling,''
  \emph{Inverse problems}, vol.~23, no.~3, p. 969, 2007.

\bibitem{eldar2012compressed}
Y.~C. Eldar and G.~Kutyniok, \emph{Compressed sensing: theory and
  applications}.\hskip 1em plus 0.5em minus 0.4em\relax Cambridge University
  Press, 2012.

\bibitem{Eftekhari2015New}
A.~Eftekhari and M.~B. Wakin, ``New analysis of manifold embeddings and signal
  recovery from compressive measurements,'' \emph{Applied and Computational
  Harmonic Analysis}, vol.~39, no.~1, pp. 67--109, 2015.

\bibitem{Kutyniok2009Robust}
G.~Kutyniok, A.~Pezeshki, R.~Calderbank, and T.~Liu, ``Robust dimension
  reduction, fusion frames, and grassmannian packings,'' \emph{Applied and
  Computational Harmonic Analysis}, vol.~26, no.~1, pp. 64--76, 2009.

\bibitem{li2017restricted}
G.~Li and Y.~Gu, ``Restricted isometry property of gaussian random projection
  for finite set of subspaces,'' \emph{IEEE Transactions on Signal Processing
  (to appear), arXiv preprint arXiv:1704.02109}, 2017.

\bibitem{SeveralDistances1998}
A.~Edelman, T.~A. Arias, and S.~T. Smith, ``The geometry of algorithms with
  orthogonality constraints,'' \emph{SIAM Journal on Matrix Analysis and
  Applications}, vol.~20, no.~2, pp. 303--353, 1998.

\bibitem{Davenport2010Signal}
M.~A. Davenport, P.~T. Boufounos, M.~B. Wakin, and R.~G. Baraniuk, ``Signal
  processing with compressive measurements,'' \emph{IEEE Journal of Selected
  Topics in Signal Processing}, vol.~4, no.~2, pp. 445--460, 2010.

\bibitem{Blumensath2009Sampling}
T.~Blumensath and M.~E. Davies, ``Sampling theorems for signals from the union
  of finite-dimensional linear subspaces,'' \emph{IEEE Transactions on
  Information Theory}, vol.~55, no.~4, pp. 1872--1882, 2009.

\bibitem{agarwal2007embeddings}
P.~K. Agarwal, S.~Har-Peled, and H.~Yu, ``Embeddings of surfaces, curves, and
  moving points in euclidean space,'' in \emph{Proceedings of the twenty-third
  annual symposium on Computational geometry}.\hskip 1em plus 0.5em minus
  0.4em\relax ACM, 2007, pp. 381--389.

\bibitem{magen2002dimensionality}
A.~Magen, ``Dimensionality reductions that preserve volumes and distance to
  affine spaces, and their algorithmic applications,'' \emph{Randomization and
  approximation techniques in computer science}, pp. 953--953, 2002.

\bibitem{jordan1875essai}
C.~Jordan, ``Essai sur la g{\'e}om{\'e}trie {\`a} $ n $ dimensions,''
  \emph{Bulletin de la Soci{\'e}t{\'e} math{\'e}matique de France}, vol.~3, pp.
  103--174, 1875.

\bibitem{PrincipalAngles2006}
A.~Gal\'{a}ntai and H.~C. J., ``Jordan's principal angles in complex vector
  spaces,'' \emph{Nunerical Linear Algebra with Applications}, vol.~13, pp.
  589--598, 2006.

\bibitem{PrincipalAngles1973}
A.~Bj\"{o}rck and G.~H. Golub, ``Numerical methods for computing the angles
  between linear subspaces,'' \emph{Mathematics of Computation}, vol.~27, pp.
  579--594, 1973.

\bibitem{Davidson2001Local}
K.~R. Davidson and S.~J. Szarek, ``Local operator theory, random matrices and
  banach spaces,'' \emph{Handbook in Banach Spaces}, pp. 317--366, 2001.

\end{thebibliography}

\end{document}